%%
%% This is file `sample-sigconf.tex',
%% generated with the docstrip utility.
%%
%% The original source files were:
%%
%% samples.dtx  (with options: `all,proceedings,bibtex,sigconf')
%% 
%% IMPORTANT NOTICE:
%% 
%% For the copyright see the source file.
%% 
%% Any modified versions of this file must be renamed
%% with new filenames distinct from sample-sigconf.tex.
%% 
%% For distribution of the original source see the terms
%% for copying and modification in the file samples.dtx.
%% 
%% This generated file may be distributed as long as the
%% original source files, as listed above, are part of the
%% same distribution. (The sources need not necessarily be
%% in the same archive or directory.)
%%
%%
%% Commands for TeXCount
%TC:macro \cite [option:text,text]
%TC:macro \citep [option:text,text]
%TC:macro \citet [option:text,text]
%TC:envir table 0 1
%TC:envir table* 0 1
%TC:envir tabular [ignore] word
%TC:envir displaymath 0 word
%TC:envir math 0 word
%TC:envir comment 0 0
%%
%% The first command in your LaTeX source must be the \documentclass
%% command.
%%
%% For submission and review of your manuscript please change the
%% command to \documentclass[manuscript, screen, review]{acmart}.
%%
%% When submitting camera ready or to TAPS, please change the command
%% to \documentclass[sigconf]{acmart} or whichever template is required
%% for your publication.
%%
%%
\documentclass[sigconf]{acmart}
\usepackage{adjustbox}
\usepackage{multirow}
\usepackage{courier}
\usepackage{dtrt}
\usepackage{tikz}
\usepackage{stmaryrd} 
\usepackage{subcaption}
\usepackage{lineno}
\usepackage{mathtools}
\usepackage{amsmath}
\usepackage{circledsteps} 
\usepackage{tabularx} % 加载 tabularx 宏包
\usepackage{graphicx}
\usepackage{amsfonts}
\usepackage{booktabs}
\usepackage[flushleft]{threeparttable}
\usepackage{pifont}
\usepackage{pgfplots}
\usepackage[stable]{footmisc}
\usepackage{makecell}
\usepackage[utf8]{inputenc}
\usepackage{listings}
\usepackage{semantic}
\usepackage{listings-coq}
\usepackage{listings-rust}
\usepackage{listings-sql}
\usepackage{xspace}
\lstset{style=boxed}
\usepackage{amsthm}
\usepackage[russian,english]{babel}
\usepackage{algorithm}
\usepackage{algorithmicx}
\usepackage{algpseudocode}
\newcommand{\eat}[1]{}
\usepackage{enumitem}
\usepackage{hyperref}
\usepackage{tcolorbox}

\newtheorem{definition}{Definition}[section]

\newtheorem{claim}{Claim}[section]

% In the final paper, no page numbering is allowed.
% \pagenumbering{gobble}
\pagenumbering{arabic}
\definecolor{dark-red}{rgb}{0.4, 0.15, 0.15}
\definecolor{dark-blue}{rgb}{0.15, 0.15, 0.4}
\definecolor{medium-red}{rgb}{0.5, 0, 0}
\definecolor{medium-blue}{rgb}{0, 0, 0.5}
\definecolor{light-red}{rgb}{0.7, 0, 0}
\definecolor{light-blue}{rgb}{0, 0, 0.7}
\definecolor{dark-green}{rgb}{0.0, 0.5, 0.0}

\definecolor{light-gray}{gray}{0.95}

\setlength{\textfloatsep}{1pt plus 1.0pt minus 0.0pt}
\setlength{\floatsep}{1pt plus 1.0pt minus 0.0pt}
\setlength{\intextsep}{1pt plus 1.0pt minus 0.0pt}

\hypersetup{
  colorlinks,
  linkcolor={dark-red},
  citecolor={dark-blue},
  urlcolor={medium-blue}
}

% --- editorial
%

\newcommand{\re}[1]{\textcolor{black}{#1}}

%--------------------------------------------------------
% General marcos
%--------------------------------------------------------

\newcommand{\sys}{\textsf{BOLT}\xspace}
\newcommand{\sysm}{\textsf{ObliKV-Accel.mem}\xspace}

%%
%% \BibTeX command to typeset BibTeX logo in the docs
\AtBeginDocument{%
  }

%% Rights management information.  This information is sent to you
%% when you complete the rights form.  These commands have SAMPLE
%% values in them; it is your responsibility as an author to replace
%% the commands and values with those provided to you when you
%% complete the rights form.

\copyrightyear{2025}
\acmYear{2025}
\setcopyright{acmlicensed}
\acmConference[CCS '25] {Proceedings of the 2025 ACM SIGSAC Conference on Computer and Communications Security}{ October 13--17, 2025}{Taipei, Taiwan.}
\acmBooktitle{Proceedings of the 2025 ACM SIGSAC Conference on Computer and Communications Security (CCS '25), October 13--17, 2025, Taipei, Taiwan}
\acmISBN{979-8-4007-1525-9/2025/10}
\acmDOI{10.1145/3719027.3765069}
\settopmatter{printacmref=true, printccs=true, printfolios=false} % We want page numbers on submissions
\fancyhf{} % Remove fancy page headers 

%%
%%  Uncomment \acmBooktitle if the title of the proceedings is different
%%  from ``Proceedings of ...''!
%%
%%\acmBooktitle{Woodstock '18: ACM Symposium on Neural Gaze Detection,
%%  June 03--05, 2018, Woodstock, NY}
%\acmISBN{978-1-4503-XXXX-X/2018/06}

%%
%% Submission ID.
%% Use this when submitting an article to a sponsored event. You'll
%% receive a unique submission ID from the organizers
%% of the event, and this ID should be used as the parameter to this command.
%%\acmSubmissionID{123-A56-BU3}

%%
%% For managing citations, it is recommended to use bibliography
%% files in BibTeX format.
%%
%% You can then either use BibTeX with the ACM-Reference-Format style,
%% or BibLaTeX with the acmnumeric or acmauthoryear sytles, that include
%% support for advanced citation of software artefact from the
%% biblatex-software package, also separately available on CTAN.
%%
%% Look at the sample-*-biblatex.tex files for templates showcasing
%% the biblatex styles.
%%

%%
%% The majority of ACM publications use numbered citations and
%% references.  The command \citestyle{authoryear} switches to the
%% "author year" style.
%%
%% If you are preparing content for an event
%% sponsored by ACM SIGGRAPH, you must use the "author year" style of
%% citations and references.
%% Uncommenting
%% the next command will enable that style.
%%\citestyle{acmauthoryear}

%%
%% end of the preamble, start of the body of the document source.
\begin{document}
%\include{letter}

%%
%% The "title" command has an optional parameter,
%% allowing the author to define a "short title" to be used in page headers.
\title{\sys: Bandwidth-Optimized Lightning-Fast Oblivious Map powered by Secure HBM Accelerators}

%%
%% The "author" command and its associated commands are used to define
%% the authors and their affiliations.
%% Of note is the shared affiliation of the first two authors, and the
%% "authornote" and "authornotemark" commands
%% used to denote shared contribution to the research.
\author{Yitong Guo}
\orcid{0009-0009-8140-9177}
\affiliation{%
  \institution{Indiana University}
  % \department{Department of Computer Science}
  % \streetaddress{700 N Woodlawn Ave}
  \city{Bloomington}
  \state{Indiana}
  % \postcode{47408}
  \country{USA}
}
\email{yitoguo@iu.edu}  % Add this line

\author{Hongbo Chen}
\orcid{0000-0001-9922-4351}
\affiliation{%
  \institution{Indiana University}
  % \department{Department of Computer Science}
  % \streetaddress{700 N Woodlawn Ave}
  \city{Bloomington}
  \state{Indiana}
  % \postcode{47408}
  \country{USA}
}
\email{hc50@iu.edu}  % Add this line

\author{Haobin Hiroki Chen}
\orcid{0009-0002-6888-0721}
\affiliation{%
  \institution{Indiana University}
  % \department{Department of Computer Science}
  % \streetaddress{700 N Woodlawn Ave}
  \city{Bloomington}
  \state{Indiana}
  % \postcode{47408}
  \country{USA}
}
\email{haobchen@iu.edu}

\author{Yukui Luo}
\orcid{0000-0002-5852-4195}
\affiliation{%
  \institution{SUNY Binghamton}
  % \department{Electrical and Computer Engineering}
  % \streetaddress{4400 Vestal Parkway East}
  \city{Binghamton}
  \state{New York}
  % \postcode{13902}
  \country{USA}
}
\email{yluo11@binghamton.edu}

\author{XiaoFeng Wang}
\orcid{0000-0002-0607-4946}
\affiliation{%
  \institution{Nanyang Technological University}
  % \department{College of Computing \& Data Science}
  % \streetaddress{50 Nanyang Ave}
  %\city{Singapore}
  % \postcode{639798}
  \country{Singapore}
  }
\email{xiaofeng.wang@ntu.edu.sg}

\author{Chenghong Wang}
\orcid{0000-0001-7837-5791}
\affiliation{%
  \institution{Indiana University}
  % \department{Department of Computer Science}
  % \streetaddress{700 N Woodlawn Ave}
  \city{Bloomington}
  \state{Indiana}
  % \postcode{47408}
  \country{USA}
}
\email{cw166@iu.edu}
\renewcommand{\shortauthors}{Guo et al.}
%%
%% By default, the full list of authors will be used in the page
%% headers. Often, this list is too long, and will overlap
%% other information printed in the page headers. This command allows
%% the author to define a more concise list
%% of authors' names for this purpose.

%%
%% The abstract is a short summary of the work to be presented in the
%% article.
\begin{abstract}
%While Trusted Execution Environments (TEEs) offer a strong foundation for secure cloud computing, they remain vulnerable to memory access pattern leakages. Oblivious Maps (OMAPs) mitigate this by fully hiding access patterns but suffer from high overhead due to randomized remapping and worst-case padding. We argue these costs are not fundamental. Modern accelerators featuring High-Bandwidth Memory (HBM) offer a new opportunity: Vaswani et al.~[{\em OSDI '18}] point out that eavesdropping on HBM is difficult—even physically—as the memory channels are sealed with processor silicon inside the same package. Hunt et al.~[{\em NSDI '20}] further demonstrate that, with proper isolation, HBM can serve as an unobservable region, hiding both data and memory traces. These insights motivate us to rethink OMAP design using HBM-backed solutions and break through long-standing performance barriers.

While Trusted Execution Environments provide a strong foundation for secure cloud computing, they remain vulnerable to access pattern leakages. Oblivious Maps (OMAPs) mitigate this by fully hiding access patterns but suffer from high overhead due to randomized remapping and worst-case padding. We argue these costs are not fundamental. Modern accelerators featuring High-Bandwidth Memory (HBM) offer a new opportunity: Vaswani et al. [{\em OSDI '18}] point out that eavesdropping on HBM is difficult—even for physical attackers—as its memory channels are sealed together with processor cores inside the same physical package. Later, Hunt et al. [{\em NSDI '20}] show that, with proper isolation, HBM can be turned into an unobservable region where both data and memory traces are hidden. This motivates a rethink of OMAP design with HBM-backed solutions to finally overcome their traditional performance limits.

%Building on these insights, we present \sys, a bandwidth optimized, lightning-fast OMAP powered by secure HBM accelerators. \sys introduces three key innovations: (i) It is backed by a new OMAP algorithm that uses isolated HBM as an unobservable cache to speed up oblivious access to large host memory, which not only overcomes traditional OMAP bottlenecks but also scales beyond the capacity of HBM; (ii) It features a self-hosted architecture that takes over execution control and memory management from the host. This reduces host involvement and mitigates indirect leaks from CPU-side vulnerabilities; (iii) It employs a tailored algorithm-architecture co-design to maximize resource efficiency, achieving the first known OMAP with $O(1)+O(\log_2\log_2 N)$ bandwidth blowup. We implement a prototype on a Xilinx U55C FPGA. Evaluation shows that \sys achieves up to \(279\times\) and \(480\times\) speedups in init and query time, respectively, over state-of-the-art OMAPs, including an industry implementation from Facebook.

Building on these insights, we present \sys, a Bandwidth Optimized, Lightning-fasT OMAP accelerator that, for the first time, achieves $O(1)+O(\log_2\log_2 N)$ bandwidth overhead. \sys introduces three key innovations: (i) a new OMAP algorithm that leverages isolated HBM as an unobservable cache to accelerate oblivious access to large host memory; (ii) a self-hosted architecture that offloads execution and memory control from the host to mitigate CPU-side leakage; and (iii) tailored algorithm-architecture co-designs that maximize resource efficiency.  We implement a prototype \sys on a Xilinx U55C FPGA. Evaluations show that \sys achieves up to \(279\times\) and \(480\times\) speedups in initialization and query time, respectively, over state-of-the-art OMAPs, which includes an industry implementation from Facebook.

\end{abstract}

%%
%% The code below is generated by the tool at http://dl.acm.org/ccs.cfm.
%% Please copy and paste the code instead of the example below.
%%
\begin{CCSXML}
<ccs2012>
   <concept>
       <concept_id>10002978.10003001.10003599</concept_id>
       <concept_desc>Security and privacy~Hardware security implementation</concept_desc>
       <concept_significance>500</concept_significance>
       </concept>
 </ccs2012>
\end{CCSXML}

\ccsdesc[500]{Security and privacy~Hardware security implementation}

%%
%% Keywords. The author(s) should pick words that accurately describe
%% the work being presented. Separate the keywords with commas.
\keywords{Oblivious Map, Trusted Execution Environment, Accelerator}
%% A "teaser" image appears between the author and affiliation
%% information and the body of the document, and typically spans the
%% page.

%\received{20 February 2007}
%\received[revised]{12 March 2009}
%\received[accepted]{5 June 2009}

%%
%% This command processes the author and affiliation and title
%% information and builds the first part of the formatted document.
\maketitle

% 1. Threat model somehow unjustified: the assumption that the HBM is completely unobservable is questionable (Reviewer B)
% 2. More detailed evaluations (Reviewer B). 
% 3.  This paper should compare with other hardware-algorithm co-design papers on ORAM, some of them have a secure on-chip scratchpad.
% 4. Explain why our scheme is oblivious. This argument seems unobivious. (Reviewer E)

%\cw{start with Omap, applications, importance. Then talk about two modern approaches. The one uses pure software, and the one uses pure hardware. Next, mention their current limitations: (L1) They can be super inefficient; usually require a trusted program, and if fully outsource, needs additional cost to achieve doubly oblivious. -> Now we can briefly talk about some recent attempts explore specifial memory hardware to build OMAP variants; however, they ignore the risks of control program in CPU, but simply provide an new memory subsystem such that the access is not visible, which not actually oblivious; (ii) a new limitation is these memory usually have fixed capacity and can only support small region. }

\section{Introduction}
\label{sec:intro}
With the rise of cloud computing, ensuring the privacy and security of outsourced data has become increasingly critical. Trusted execution environments (TEEs)~\cite{sgx, amd_sev} have emerged as a powerful solution, offering attestable data-in-use security with far less overhead than cryptographic approaches~\cite{micali1987play, yao1986generate}. However, mainstream CPU TEEs remain vulnerable to side-channel leakages—particularly through memory access patterns~\cite{recorve-via-cache-attack,Kocher2018spectre,Lipp2018meltdown,pessl2016drama,hu2020deepsniffer-2,gross2019breaking, zuo2020sealing, lee2020off}—which can severely undermine their confidentiality guarantees and cause significant privacy breach~\cite{kellaris2016generic, cash2015leakage, blackstone2019revisiting, markatou2019full, kornaropoulos2022leakage, oya2021hiding}.

Oblivious RAM (ORAM)~\cite{goldreich1996software,stefanov2018path} is recognized as the {\em de-facto} solution for mitigating access pattern leakages. In a nutshell, it lets a trusted client dynamically shuffle memory accesses so that each request is served correctly, but the overall access pattern looks completely random. Recent work builds on ORAMs to create Oblivious Maps (OMAPs)~\cite{wang2014oblivious}, which support more advanced in-memory key-value stores (KVSs). Newer designs~\cite{mishra2018oblix,chamani2023graphos,zheng2024h} go a step further by eliminating the need for a trusted client to coordinate execution, which helps cut down communication overhead significantly and makes OMAPs better suited for outsourced computing.

%Recent work extends ORAMs to Oblivious Maps (OMAPs)~\cite{wang2014oblivious} to support more advanced in-memory key-value stores (KVSs) and newer designs~\cite{mishra2018oblix,chamani2023graphos,zheng2024h} further remove the need for a trusted client to dispatch execution that significantly lowers the communication costs.
%These advances make OMAPs well-suited for TEE-based secure cloud computing, supporting strong privacy in high-stake domains like healthcare, finance, and biomedicine.

Despite their flexibility and strong security guarantees, OMAPs come with significant performance overheads. OMAPs logically arrange data using search-efficient structures such as AVL trees~\cite{wang2014oblivious, tinoco2023enigmap} or hash tables~\cite{asharov2020optorama,zheng2024h}, and traverse them to find the target KV pair. To maintain obliviousness, however, each accessed item must be randomly remapped to a new position~\cite{oblix,tinoco2023enigmap}—or, after enough accesses, the entire dataset is reshuffled~\cite{zheng2024h}. On top of that, each access is padded with a large number of dummy operations to reach a worst-case access length. While this makes all operations look the same, it often incurs $O(\log_2 N)$ rounds and $O(\log_2^2 N)$ bandwidth blow up per access~\cite{tinoco2023enigmap, zheng2024h}, where $N$ is the number of records. In practice, this can result in more than a 2000$\times$ query slowdown compared to non-private KVSs (\S~\ref{sec:eva}). In fact, any OMAP must pay at least $\Omega(\log N)$ bandwidth overhead~\cite{goldreich1996software}, which fundamentally limits how much performance can be improved.
%Unfortunately, the theoretical lower bound of oblivious algorithms~\cite{goldreich1996software} limits the extent to which performance can be improved within the current OMAP paradigm. 

Given this lower bound, recent research has started exploring a different direction: hardware-unobservable memory (HUM)~\cite{xu2019hermetic, aga2017invisimem, oh2020trustore, vasiliadis2014pixelvault, awad2017obfusmem}, a stronger form of isolated memory designed to hide both data contents and access patterns at the hardware level. HUMs are typically built using on-chip memory, such as caches or UltraRAM~\cite{amd_ultraram}. Unlike DRAM, which connects to the processor via exposed copper traces, on-chip memory sits directly on the processor's silicon die. This tight integration makes it much harder for attackers to snoop on memory traces, even with physical access (assuming proper isolations). The catch, however, is that its capacity is still limited and cannot support scalable data.

Fortunately, recent hardware trends point to a promising alternative: high-bandwidth memory (HBM), now widely adopted in modern accelerators such as GPUs~\cite{nvidia_confidential_computing_h100}, FPGAs~\cite{AMD_Alveo_U55C}, and ASICs~\cite{amd-mi325,dhar2024ascend}.  Like on-chip memory, HBM is packaged together with processor dies and communicates via silicon interposers, but it provides much larger capacity—ranging from 16GB~\cite{AMD_Alveo_U55C} to 256GB~\cite{amd-mi325}. Prior work on accelerator TEEs~\cite{hunt2020telekine,volos2018graviton} suggests that, with proper isolation, HBM can also be treated as a form of HUM. This leads to the central question of this paper:
\begin{center} {\em Can HBM-backed HUM lead to new secure KVS solutions that match OMAP’s security but incur a much lower overhead?} \end{center} 
\vspace{-1em}

\subsection{Challenges and Key Ideas}\label{sec:challenges}
\noindent{\bf C-1. Bounded HBM capacity.} Although HBM offers much more capacity than on-chip memory, it remains insufficient for modern cloud and data center workloads that may require large in-memory data. Hence, simply using HBM as a secure memory extension~\cite{oh2020trustore, choi2024shieldcxl} would not suffice. To scale, we need new approaches that go beyond HBM's capacity while preserving obliviousness.

\vspace{2pt}\noindent {\em Key ideas.} We propose a novel heterogeneous layout: the main data resides in host memory and is accessed through oblivious primitives, while HBM is used to store metadata and run rich, data-dependent algorithms that accelerate oblivious access to host memory. This idea is motivated by the observation that much of the overhead in existing OMAPs stems from the assumption that only a constant-sized private memory is available—whether in the form of HUM or trusted client storage—forcing designs to rely heavily on exhaustive padding and obfuscation. HBM, however, can scale proportional to the host memory capacity\footnote{Currently, the largest production HBM is 256G by AMD MI325~\cite{amd-mi325} and the cap DIMM capacity for a single-socket CPU is 6TB~\cite{AMD_EPYC_9004_8004}.}, which allows us to offload enough data-dependent operations and to simplify oblivious primitives.

\vspace{4pt}\noindent{\bf C-2. Indirect leakage from the host.} Hunt et al.~\cite{hunt2020telekine} show that even when HBM is isolated within a GPU TEE, attackers can still exploit indirect leakages via the host CPU to recover sensitive results inside the TEE. This is because modern accelerators continue to rely on host-side drivers for tasks such as I/O control, memory management, and execution dispatch. Consequently, critical accelerator states—like HBM accesses and control flows—remain vulnerable to CPU interference and its microarchitectural weakness. Hunt et al. argue that CPU side-channels, even with TEEs, make rigorous secure design incredibly hard. So they propose to move these host features to a trusted client. However, this approach requires consistent client involvement and can impose a significant communication overhead, especially for OMAP routines that would need frequent I/Os and dynamic HBM management. 
%It may eventually resemble traditional client-server ORAMs~\cite{path-oram}, which have already been criticized for their large network overheads~\cite{oblix}.

%These information is already allow attackers to recover inference results from a confidential ML model running inside a GPU TEE. 
%For KVSs, which are more data-dependent and involve complex memory and control logic, applying the same approach may be overly expensive—and can leave us constrained by theoretical limits~\cite{goldreich1996software} of OMAPs.

%These leaked signals are enough to let attackers recover inference results from a confidential ML model running inside the GPU TEE.

%Hunt et al.~\cite{hunt2020telekine} show that even when HBM is rigorously isolated in a GPU TEE, an attacker can still gather 
%indirect leakages by exploiting host CPUs, as modern accelerator relies on drivers on CPU to manage various device tasks, including I/O controls, dynamic memory management, and execution dispatching. 

%by making host execution oblivious as well, but this requires significant driver re-engineering, even for ML workloads with relatively structured accesses. 

\vspace{2pt}\noindent {\em Key ideas.} Instead of involving a trusted client, we take a different path—we move key host-side features into the accelerator architecture and behind isolation boundaries. The accelerator self-manages KV logic, dynamic memory, and oblivious access primitives on its own while exposing only high-level interfaces to the outside. A minimal runtime is then left on the host side, used only for initialization and message relaying. This approach is similar to host bypassing in the HPC community~\cite{nvidia_gpudirect, guo2016rdma} for performance, we repurpose it to minimize host involvement and reduce leakage.

\vspace{4pt}\noindent{\bf C-3. Hardware inflexibility.} Clearly, achieving these ideas requires new architectural designs. However, since hardware is less flexible in design than software, poor design choices can easily lead to large resource fragmentation or inefficient data flow. Moreover, adding a generic software layer on top often results in bloated logic and additional overhead and may require extra effort on compilers.

\vspace{2pt}\noindent {\em Key ideas.} We choose to stay on a hardware-based solution, but carefully navigate co-design optimizations across both the OMAP algorithms and the hardware logic. This will lead to a customized architecture that is deeply optimized for OMAP tasks.

\subsection{Our Outcomes}
Building on our key ideas, we present \sys, a Bandwidth-Optimized Lightning-fast OMAP accelerator that breaks through the long-standing performance limits of traditional OMAPs. \sys is the first known design to achieve $O(1) + O(\log_2 \log_2 N)$ bandwidth overhead, enabling ultra-efficient secure KVS with full obliviousness. \autoref{fig:general} provides an overview of \sys design.
\begin{figure}[ht]
\centering
\includegraphics[width=0.8\linewidth,interpolate=false]{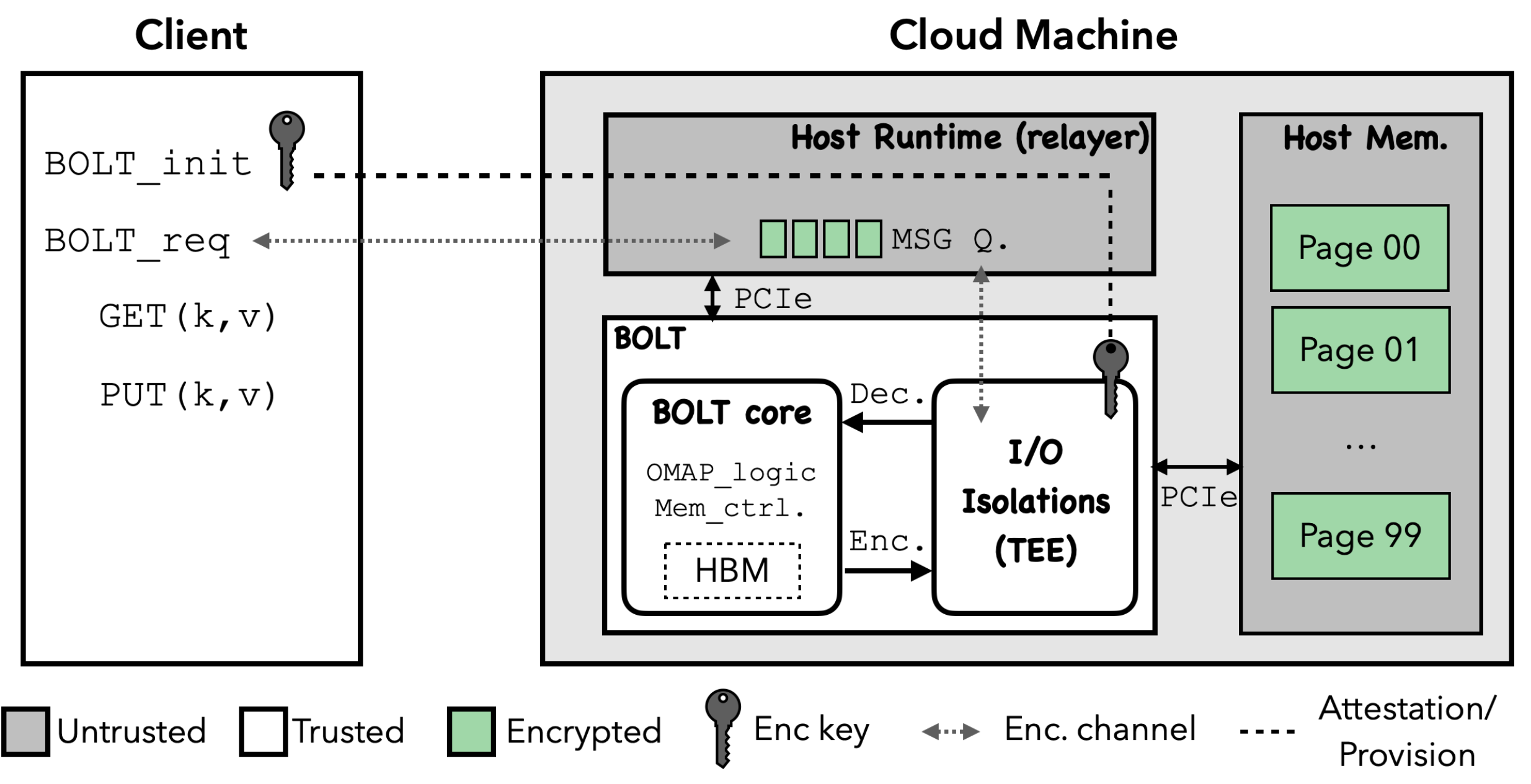}
\caption{\sys overview and deployments}
\label{fig:general}
\end{figure}
At a high level, \sys is built on top of accelerator TEE architectures. It uses existing TEE features (§\ref{sec:fpga}) to enforce physical isolation and ensure integrity. The \sys core is a full-fledged OMAP engine that sits behind the TEE gateway to provide efficient oblivious KV accesses. A typical \sys lifecycle goes like this: A client first goes through standard TEE setups, such as remotely attesting the device~\cite{zhao2022shef} and securely provisioning it with a secret key~\cite{oh2020trustore}. The key is used to encrypt all communication with \sys and allows it to safely seal data to host DRAMs. Once set up, the client sends encrypted KV commands (e.g., $\texttt{GET}$ and $\texttt{PUT}$) to read or write data. These requests are handled by the \sys core, which then returns fixed-length encrypted responses. All I/O to and from \sys core is funneled through the upfront TEE interfaces~\cite{zhao2022shef}, so any sensitive content, such as commands, data, or responses, remains encrypted whenever outside the isolation boundary.
%At a high level, \sys is a secure OMAP accelerator built on top of accelerator TEE architectures. It relies on existing TEE features (\S~\ref{sec:fpga}) to enforce physical isolation and provide integrity guarantees. The core of \sys is a full-fledged OMAP engine that sits behind the TEE gateway to provide high-performance oblivious KV access.

To realize our key ideas and address the aforementioned challenges, \sys introduces the following non-trivial contributions.
\begin{itemize}[leftmargin=.1in]
\item To tackle {\bf C-1}, we introduce a new OMAP scheme (\S~\ref{sec:omap-alg}) that strategically uses unobservable HBM and host DRAM to reach a major performance boost. The core idea is to logically shuffle data into fixed-size bins across both memory types. Each access reads an entire bin, after which the data is randomly remapped to new bins to maintain obliviousness. A hash table stored in HBM keeps track of data-to-bin mappings, enabling fast, leakage-free lookups. Moreover, we novelly apply the power-of-two-choices (P2C) technique~\cite{mitzenmacher2001power, sitaraman2001power} to balance bin loads, which is central to achieving the \smash{$O(1) + O(\log_2 \log_2 N)$} bandwidth overhead.\vspace{1mm}
%Additionally, novelly apply the power-of-two-choices (P2C) method~\cite{mitzenmacher2001power,sitaraman2001power} to balance logical bin load, as we found it can eventually lead a constant round and $O(\log_2 \log_2 N)$ bandwidth blowup, which is asympotic better than existing OMAP designs. 

%which we show can achieve constant rounds and an $O(\log_2 \log_2 N)$ bandwidth blowup—an asymptotic improvement over existing OMAP designs.

%To tackle {\bf C-1}, we introduce a new OMAP scheme (\S~\ref{sec:omap-alg}) that strategically uses unobservable HBM and host DRAM to achieve high performance and low overhead. The core idea is to logically shuffle data into fixed-size bins across both host DRAM and HBM. Each access reads the entire bin, and the data is then randomly remapped to new bins to ensure obliviousness. A hash table stored in HBM tracks data-to-bin mappings, enabling fast lookups without leakage. We novelly apply the power-of-two-choices (P2C) method~\cite{mitzenmacher2001power,sitaraman2001power} to balance logical bin load, as we found it can eventually lead a constant round and $O(\log_2 \log_2 N)$ bandwidth blowup, which is asympotic better than existing OMAP designs. \vspace{1mm}
\item We perform rigorous dimensional analysis (\S~\ref{sec:sizes}) to derive analytical upper bounds for key data structures used in our logical algorithm. These upper bounds guide parameter choices and memory allocation to prevent overflows. We later validate their tightness through empirical experiments. \vspace{1mm}

\item To address {\bf C-2}, we design a custom hardware architecture that realizes our logical algorithm and operates self-contained (\S~\ref{sec:arch-details}). It streams all OMAP routines and manages both HBM and host data by itself. This eliminates the need for a trusted client and mitigates indirect leakage from host CPUs. To address {\bf C-3}, we explore several co-design optimizations (\S~\ref{sec:optimization}), including decomposed storage, reverse indexing, and dynamic HBM management. \vspace{1mm}

\item We prototype \sys on a Xilinx U55C FPGA and benchmark it against several state-of-the-art (SOTA) OMAPs, including an industry-grade solution from Facebook~\cite{facebook_oram}. Results show that \sys achieves up to \(279\times\) and \(480\times\) lower init and query times, respectively, compared to SOTA designs, which corresponds to \(760\times\) and \(6338\times\) improvements in normalized slowdown.

\end{itemize}

%The rest of the paper is organized as follows: We present background and the threat model in \S~\ref{sec:background}, \S~\ref{sec:threat}, the logical algorithm and \sys architecture in \S~\ref{sec:omap-alg}, \S~\ref{sec:impl}, followed by evaluations (\S~\ref{sec:eva}), related work (\S~\ref{sec:related}), and the conclusion (\S~\ref{sec:conclusion}).

\section{Background}\label{sec:background}
% the latency of direct memory access is 1.4326e-08s
\subsection{General Notations of KVS}
We model $D$ as a key-value store (KVS) database, where $D = {\langle k_i, v_i \rangle}_{1 \leq i \leq N}$ and each $\langle k_i, v_i \rangle$ is a key-value pair, with $k_i$ as the unique key for its value $v_i$. We consider two operations on $D$: $\texttt{Get}(k_i, D)$ retrieves the value for key $k_i$, returning $v_i$ or $\mathsf{null}$ if the key does not exist; and $\texttt{Put}(k_i, v_i, D)$ inserts or updates a key-value pair. Deletions are represented as special $\texttt{Put}$ operations where $v_i$ is a tombstone marker, after which the pair is removed from $D$. For simplicity, we omit $D$ when referring to these operations.

\subsection{Access Patterns, ORAMs, and OMAPs}\label{subsec:omaps}
\noindent{\bf Access patterns.} Access pattern leakage is a major side-channel threat in secure processors. It refers to the sequence in which a user program accesses memory. Since these patterns often depend on sensitive data or secret-dependent branches, they can leak critical information. Attacks leveraging such leakages typically fall into two categories. The first exploits {\em shared microarchitectural resources}—such as branch predictors~\cite{Kocher2018spectre, branch_predictor, 2-level-predictor-side-channel-attack, branch-shadowing}, caches~\cite{liu2015last, cache-side-channel-leak-key, yarom2014flush+}, and TLBs~\cite{gras2018tlbleed,tatar2022tlb}—to infer victim's secrets based on its access patterns. The second targets {\em unprotected memory channels} like DRAM buses~\cite{pessl2016drama}, PCIe~\cite{hu2020deepsniffer-2}, DMA buffers~\cite{gross2019breaking}, and unsealed copper traces that connect memory~\cite{zuo2020sealing}.

%where attackers induce contention or snoop unsealed channels to observe access patterns.

 %However, access patterns are often highly data-dependent, meaning that if a program processes sensitive data or relies on secret-dependent branches, the order and location of memory accesses can lead to critical privacy breach. The attack surface for access pattern leakage primarily falls into two broad categories. The first exploits shared microarchitectural states, where attackers leverage shared resources such as branch predictors~\cite{Kocher2018spectre, branch_predictor, 2-level-predictor-side-channel-attack, branch-shadowing}, caches~\cite{liu2015last, cache-side-channel-leak-key, yarom2014flush+}, and Translation Lookaside Buffers (TLBs)~\cite{gras2018tlbleed,tatar2022tlb} to infer access patterns of victim programs running on the same host processor. The second category of attacks exploits shared and exposed memory channels, such as DRAM buses~\cite{pessl2016drama}, PCIe interconnects~\cite{hu2020deepsniffer-2}, and Direct Memory Access (DMA)~\cite{gross2019breaking}, where attackers can induce contention and analyze performance variations to infer memory access patterns. With physical access, an attacker can directly snoop unsealed channels to extract precise access patterns.

%\cw{I am editing this.}

\vspace{4pt}\noindent{\bf OMAPs.} An OMAP~\cite{omap-data-structure, tinoco2023enigmap, oblix, chamani2023graphos, zheng2024h, wang2014oblivious} is a cryptographic primitive that allows a client to interact with an outsourced KVS on an untrusted server without revealing the access pattern or data content. OMAPs provide foundational storage for more complex oblivious computations, including analytical queries~\cite{oblidb, chang2021efficient} and transactional workloads~\cite{crooks2018obladi, dauterman2021snoopy} over encrypted databases. Most OMAPs are built on top of ORAMs~\cite{goldreich1987towards, goldreich1996software, stefanov2018path}, which were originally designed to hide access patterns in the standard RAM model. In this model, memory is represented as a sequence of address-value pairs 
\(\{(\mathsf{addr}_i, v_i)\}_{i=0}^{n-1}\), and the addresses are a unique, consecutive integers. ORAM supports reading and writing to these pairs while making access patterns appear random. However, an ORAM does not directly yield an OMAP, as OMAPs must support arbitrary unique\footnote{Multi-maps with non-unique keys are beyond our scope.} keys, such as strings or sparse indexes.

SOTA OMAP designs today fall into two main categories: (i) projects~\cite{oblix, chamani2023graphos, cao2024towards, tinoco2023enigmap, facebook_oram, wang2014oblivious} built on oblivious data structures layered over tree ORAMs~\cite{path-oram, ren2015constants, resizable-tree-based-oram, bindschaedler2015practicing, sasy2017zerotrace, fletcher2015low}; and (ii) designs~\cite{zheng2024h} adapted from hierarchical hash-based ORAMs~\cite{asharov2023futorama, asharov2020optorama, patel2018panorama, dittmer2020oblivious} for supporting arbitrary keys.

Tree ORAM organizes memory as a binary tree, where each node (or bucket) holds a fixed number of data blocks, and each block is mapped to a random leaf. To hide access patterns, it uses path-based access and remapping~\cite{path-oram}: every access reads all buckets in the path from the root to the block’s assigned leaf, then remaps the block to a new random leaf. To prevent leakage, the block is not written back directly but is kept in a trusted stash and later evicted when a future access hits an overlapping path. 
This design incurs an $O(\log_2 N)$ bandwidth overhead when the client manages the full position map~\cite{path-oram}. To turn this into an OMAP, Wang et al.~\cite{wang2014oblivious} propose storing KV data as a logical AVL tree within the tree ORAM. The bandwidth overhead per KV access becomes $O(\log_2^2 N)$.
%KV requests are served by simulating an AVL tree search via ORAM accesses. Since searching an AVL tree touches $O(\log_2 N)$ nodes, and each node requires an ORAM path access, the total bandwidth overhead becomes $O(\log_2^2 N)$.

Hierarchical hash ORAMs, derived from square-root ORAMs~\cite{goldreich1987towards}, organize memory into $O(\log_2 N)$ levels of increasingly large hash tables. To access a block, the client scans each non-empty level, retrieves the block, and writes it back to the first level. After every $2^i$ accesses, data from the first $i$ levels is merged and rebuilt into level $i+1$ (the most expensive step). These ORAMs naturally support KV search due to their hash table structure and achieve the optimal $O(\log_2 N)$ bandwidth, though large constant factors limit their practical performance~\cite{cao2024towards, oblix}.

%Tree-based ORAMs~\cite{path-oram,enhanced-tree-based-oram,resizable-tree-based-oram} scale well with large amounts of data while striking a balance between the storage and access overhead. PathORAM~\cite{path-oram} for instance, achieves an $O(\log^2 N)$ overhead per data access.

%\vspace{4pt}\noindent{\bf OMAPs.} OMAPs~\cite{omap-data-structure, oblivious-data-structure, mishra2018oblix, chamani2023graphos, tinoco2023enigmap, zheng2024h} are generalized ORAMs to support standard KVS operations~\cite{kv-store,kv-application,kv-application-1}. Unlike ORAMs, which assume memory addresses are continuous and sequential, OMAPs handle arbitrary, unique keys without requiring consecutive indexing. This added flexibility comes at a cost: retrieving a value requires searching the key first, which needs additional data structures. A common approach organizes the KV store as a logical AVL tree and stores it within an ORAM; since AVL tree search requires $O(\log_2 N)$ node accesses and each node retrieval is an ORAM access with $O(\log_2^2 N)$ overhead, the total bandwidth blowup reaches $O(\log_2^3 N)$~\cite{oblivious-data-structure}. Recent optimizations have reduced this to $O(\log_2^2 N)$ in SOTA OMAP constructions~\cite{tinoco2023enigmap, zheng2024h}. In this paper, {\em we focus on OMAPs rather than ORAMs,} as OMAPs generalize ORAM functionality—an OMAP can directly implement an ORAM, but the reverse is not true.

\vspace{4pt}\noindent{\bf Client-server vs. outsourcing paradigm.} Traditional ORAMs follow a client-server model~\cite{path-oram, ren2015constants}, where a trusted client handles the control logic, including data shuffling, path remapping, position map lookups, stashing, and evictions. The server, by contrast, simply stores the outsourced data and serves requests. This design requires the client program to stay online and actively involved in every access, leading to high client-side overhead and significant communication costs. Recent OMAPs often use TEEs as a trusted controller, simplifying client-side tasks and reducing communication overhead. However, CPU-based TEEs do not naturally hide memory access patterns and may still leak sensitive information, for example through data-dependent position lookups or stash management. To address this, Mishra et al.~\cite{mishra2018oblix} introduced doubly-obliviousness (DO), which requires hiding the memory access patterns of both the data and the control program. Achieving DO typically comes at a cost: position maps may need to be stored in recursive ORAMs~\cite{stefanov2018path}, and eviction must be handled with branchless operations or specialized algorithms~\cite{oblix, chamani2023graphos, oblidb}. Hence, current SOTA DOMAPs typically incur $O(\log_2 N)$ rounds and $O(\log_2^2 N)$ bandwidth overhead. Still, by eliminating client-server interaction, this overhead is limited to memory access rather than network communication, which generally results in better practical performance~\cite{oblix, chamani2023graphos, zheng2024h}. We adopt this outsourcing paradigm with DO as our default model and refer to it simply as obliviousness.

%To address this, Mishra et al.~\cite{mishra2018oblix} introduced the notion of data-obliviousness (DO), proposing new mechanisms to ensure that the memory patterns of both data accesses and control program execution remain fully oblivious. Nevertheless, Stronger DO guarantees often come with added costs — for example, position maps and stashes may need to be stored using recursive ORAMs~\cite{stefanov2018path}, or conditional logic must be implemented with branchless operations or specialized algorithms~\cite{oblix, chamani2023graphos, oblidb}. Still, since client-server communication is eliminated, the bandwidth overhead is limited to memory accesses rather than network traffic, which tends to be more efficient in practice~\cite{oblix, chamani2023graphos, zheng2024h}. In this paper, we adopt the outsourcing paradigm and DO as the default setting and thus will not distinguish it and generally refer it as obliviousness.

\subsection{Accelerators}\label{sec:fpga}
Modern HPC has shifted from a CPU-centric model to a heterogeneous architecture, where specialized accelerators—such as GPUs, FPGAs, and ASICs—offload intensive workloads from CPUs. Unlike CPUs, which prioritize flexibility and time-sharing, accelerators dedicate resources to specific tasks with a minimal software stack for near-bare-metal performance. %Beyond compute-intensive workloads like model training and simulations, modern data centers increasingly use accelerators for offloading memory-intensive operations from CPUs to manage large-scale data movement. These include memory pooling, remote memory access, content-addressable memory, and large in-memory KVSs. Recent efforts have also developed accelerator TEEs~\cite{nvidia_confidential_computing_h100, oh2020trustore, armanuzzaman2022byotee, pereira2021towards, zeitouni2021trusted, dhar2024ascend, volos2018graviton, hunt2020telekine, zhao2022shef}, enabling accelerators to accelerate intensive confidential computing tasks. 
In this work, we innovatively explore offloading OMAP tasks to secure HBM accelerators.

\vspace{3pt}\noindent{\bf Accelerator TEEs.} Accelerator vendors have traditionally offered security features like hardware Root of Trust (RoT)~\cite{amd_hwrort_2024, nvidia_confidential_computing_h100, dhar2024ascend}, bitstream encryption (for FPGAs), and secure boot to verify device authenticity and enforce integrity at boot time. Recent works~\cite{zhao2022shef,armanuzzaman2022byotee,volos2018graviton,hunt2020telekine, mai2023honeycomb, wang2024towards,jang2019heterogeneous, ivanov2023sage,nvidia_confidential_computing_h100, vaswani2022confidential, dhar2024ascend,oh2020trustore} extend these features to support richer TEE capabilities, including: (i) {\em Remote attestations.} A remote party can verify the fabric configuration (e.g., FPGAs, ASICs) and trusted firmware (e.g., GPUs, DPUs), and ensures the accelerator maintains a secure runtime state~\cite{zhao2022shef, oh2020trustore, nvidia_confidential_computing_h100, dhar2024ascend,ivanov2023sage}. (ii) {\em Isolated executions.} Hardware firewalls and access controls are used to create physically isolated enclaves on accelerators or turn the entire accelerator into a standalone, secure device~\cite{armanuzzaman2022byotee, nvidia_confidential_computing_h100, zhao2022shef, dhar2024ascend, volos2018graviton, hunt2020telekine}. During execution, the enclave is protected from external access, tampering, or interruption. Device I/Os like memory-mapped I/Os (MMIO) and direct memory accesses (DMAs) are typically funneled through an isolation gateway~\cite{nvidia_confidential_computing_h100, zhao2022shef, oh2020trustore}, which wraps these I/Os and encrypts outbound data while decrypting inbound data using keys inside the enclave. Performance counters are often disabled to prevent unauthorized information leakage~\cite{zhao2022shef,nvidia_confidential_computing_h100}.  (ii) {\em Memory encryption and integrity.} Data can be sealed outside the TEE, e.g. in non-secure host memory, but is encrypted~\cite{gueron2017aes} and integrity validated~\cite{liu2021merkle}. Encryption keys are either generated within enclaves~\cite{nvidia_confidential_computing_h100, dhar2024ascend} or securely provisioned by users~\cite{zhao2022shef}. While these TEE features are essential building blocks for our end-to-end OMAP accelerator, our design assumes their availability and does not contribute new mechanisms in this space. Since these techniques are well-established and orthogonal to our core contributions, we omit their details for brevity and focus instead on the novel aspects of our OMAP logic. Readers interested in these TEE components can refer to Appendix~\S\ref{appx:acc-tee} for additional background.

\vspace{4pt}\noindent{\bf HBMs.} HBM is an advanced memory technology that stacks multiple DRAM layers using 3D-integrated circuit manufacturing, achieving higher bandwidth through dedicated data channels~\cite{Lee2022}. Unlike traditional off-chip DRAM, which connects to processors via exposed PCB traces, HBM is tightly integrated with compute cores inside the same chip package using silicon interposers. These sealed interposers shield memory channels, making direct snooping or tampering extremely difficult when proper isolation mechanisms are in place~\cite{volos2018graviton, hunt2020telekine}. However, this physical protection alone does not make HBM oblivious. For example, in HBM-based CPUs~\cite{intel_xeon_max}, memory remains shared across cores and programs, making it vulnerable to microarchitectural attacks such as cache side channels. Hunt et al.\cite{hunt2020telekine} also show that isolated HBM inside GPU TEEs can suffer indirect data leaks via host CPUs (or their TEEs), since today's accelerators heavily depend on the host runtime for memory and execution management. We tackle this limitation in \S\ref{sec:impl}.
%These sealed interposers protect memory channels from physical attacks, making direct snooping or tampering highly challenging~\cite{volos2018graviton, hunt2020telekine}. However, HBM alone does not guarantee obliviousness even with strong isolation. For example, in HBM-based CPUs~\cite{intel_xeon_max}, memory remains shared across cores and programs, making it vulnerable to microarchitectural attacks such as cache side channels. Hunt et.al.~\cite{hunt2020telekine} also show that isolated HBM inside GPU TEEs also subject to indirect leakages from host CPUs (or CPU TEEs) this is because existing accelerator designs rely on extensive host runtime for memory and execution management. We address this limitation in \S~\ref{sec:impl}.

\vspace{4pt}\noindent\re{{\bf FPGA prototyping.} In this paper, we focus on FPGA prototyping to implement and evaluate our design for two main reasons. First, many modern cloud and datacenter infrastructures already integrate FPGA accelerators~\cite{AWSF1_2017, AWSF2_2024}, enabling our design to be directly deployed as an independent secure KVS accelerator that also aligns with the resource disaggregation paradigm. Second, FPGAs offer greater architectural flexibility and are a standard pre-production step for ASICs, while also supporting extensions of fixed-function accelerators like GPUs and DPUs. For example, GPU TEEs could use \sys as a secure data fetcher to self-manage data I/O.}

\section{Threat Model and Design Goals}\label{sec:threat}
%We now discuss the threat model and key design goals of our anticipated oblivious KVS accelerator, \sys.

\noindent{\bf Threat model.} Our work follows the standard secure outsourced computing model with two primary entities: the cloud service provider (CSP) and the data owner. The CSP supplies and manages the infrastructure, including the proposed \sys accelerators, to host confidential KVS services. The client wishes to securely outsource a private KVS database to the CSP and access it as needed. Our threat model assumes trust in the CSP’s organizational integrity and governance, but distrusts lower-level components such as software, operational personnel, and co-located users. Specifically, we consider a strong adversary capable of compromising any software stack and gaining physical access to hardware, enabling stealthy (passive) physical attacks such as bus snooping. In general, we assume that all exposed links and buses—including DMA~\cite{gross2019breaking}, host DRAM~\cite{pessl2016drama}, device memory buses (e.g., DDR on FPGA boards~\cite{zuo2020sealing}), and PCIe interconnects~\cite{hu2020deepsniffer-2}—are susceptible to snooping. However, we assume attackers cannot perform hypothetical chip depackaging to compromise silicon interposers~\cite{nvidia_confidential_computing_h100} within the chip package. Additionally, each \sys accelerator instance is dedicated to a single tenant, so attacks requiring sophisticated multi-tenancy are out of scope~\cite{zhao2018fpga, fang2023gotcha}.

\vspace{4pt}\noindent{\bf Privacy goals (obliviousness).} Our primary privacy goal is to protect the owner's private data against the aforementioned adversary and deliver strong obliviousness for outsourced KVSs. Specifically, for any data $D$ and a sequence of KV commands, $\mathbf{r}\gets \{c_1, c_2, ...\}$, the information an adversary can learn by observing the outsourcing of $D$ and processing $\mathbf{c}$ over outsourced $D$ should not be better than some public (non-private) information. Formally,
%To better capture this, we formally define our guarantees following the simulation paradigm in~\cite{asharov2023oblivious}.

\begin{definition}\label{def:privacy}
For any $D$, $\mathbf{r}$, and any probabilistic-polynomial time (p.p.t.) adversary $\mathcal{A}$, we define $\mathsf{View}_{\mathcal{A}}^{\mathsf{Real}}$ as the view of $\mathcal{A}$ when interacting with the actual secure outsourced KVS system. We say that the system is an oblivious KVS (securely simulates an oblivious KVS design) if there exists a p.p.t. simulator $\mathcal{S}$ that can simulate indistinguishable transcripts as $\mathsf{View}_{\mathcal{A}}^{\mathsf{Real}}$ without access to $D$ and $\mathbf{r}$, or equivalently, if the following holds:
\begin{equation}
\mathsf{View}_{\mathcal{A}}^{\mathsf{Real}} (D, \mathbf{c}) \approx_{\mathsf{ind}} \mathsf{View}_{\mathcal{A}}^{\mathcal{S} (\mathsf{pp})}
\end{equation}
where $\approx_{\mathsf{ind}}$ denotes computational indistinguishability and $\mathsf{pp}$ is a set of public (non-private) parameters, such as $|\mathbf{c}|$ and $|D|$.
\end{definition}

%\vspace{2mm}\noindent{\bf Design paradigm goals.} We consider the same design requirements as previous OMAP works targeting the secure outsourced computing paradigm, where the DO involvement in OMAP logic is minimal. The DO does not retain client programs or cache. Instead, the DO performs a one-time data outsourcing and subsequently sends encrypted KV requests to retrieve or update data without the need to participate in any OMAP subroutines.

\vspace{4pt}\noindent{\bf Non-goals.} We emphasize that in this work, we do not consider physical channel analysis attacks such as those based on energy~\cite{xiang2020open, li2022power, moini2021remote, yoshida2021model} or electromagnetic emanations~\cite{gongye2024side, batina2019csi}, as these attacks are typically designed for edge devices and are less feasible in well-governed cloud environments. We also exclude availability attacks (e.g., denial-of-service~\cite{luo2020stealthy}) and covert-channel attacks~\cite{tian2019temporal_2, giechaskiel2020c, giechaskiel2022cross}, as they fall outside the general security goals of secure computations and can be independently addressed via orthogonal security measures. We stress that this general exclusion aligns with previous works on secure and oblivious computations~\cite{nvidia2023hcc, xu2019hermetic, volos2018graviton, hunt2020telekine, oh2020trustore, duy2022se}. 
%We stress that this general exclusion aligns same as previous works on secure and oblivious computations~\cite{nvidia2023hcc, xu2019hermetic, volos2018graviton, hunt2020telekine, oh2020trustore, duy2022se}. 
Moreover, several important building blocks, such as remote attestation of FPGA kernels~\cite{oh2020trustore, zhao2022shef}, device I/O isolation~\cite{oh2020trustore, nvidia2023hcc}, memory encryption~\cite{AMD_Vitis_Security}, and integrity validations are related to our design. As there are existing lines of work addressing these features, we leverage them rather than replicating the designs ourselves.

%Costly and invasive physical attacks like energy channel probing (e.g., power analysis~\cite{xiang2020open, li2022power, moini2021remote, yoshida2021model}, electromagnetic emanations~\cite{gongye2024side, batina2019csi}), availability (e.g., denial-of-service \cite{luo2020stealthy}), and convert-channel attacks~\cite{tian2019temporal_2, giechaskiel2020c, giechaskiel2022cross} are outside our scope, as they are either unlikely in well-governed HPC centers or not related to with confidential computing's goal. We also assume that attackers cannot perform hypothetical chip depackaging to compromise silicon interposers within the chip package. 

\section{Logical Algorithm}\label{sec:omap-alg}
We address challenge {\bf C-1} by presenting the logical algorithm for a novel OMAP scheme that utilizes limited HBM space to accelerate oblivious KV accesses for large in-memory data. 
%We will show later in \S~\ref{sec:impl} that this logical algorithm can be instantiated into a concrete OMAP accelerator that with constant rounds and  \( O(\log\log N) \) bandwidth blown up.

\subsection{Algorithm details}\label{sec:alg-details}
The logical algorithm for our proposed OMAP scheme is intentionally simple, comprising only 15 lines of pseudocode, as shown in Algorithm~\ref{alg:omap}. As the algorithm handles data stored in different locations, we use \textcolor{blue}{blue} text in Algorithm~\ref{alg:omap} to highlight objects stored in host memory. These objects are encrypted, but their access patterns remain visible to attackers. The remaining objects, including the algorithm logic, reside within the accelerator’s on-package resources (e.g., HBM), ensuring that reads, writes, and intermediate runtime states remain unobservable.
%This results in a compact worst-case bin size, which then leads to an asymptotically better bandwidth blowup in our OMAP design—\( O(\log\log N) \) vs. \( O(\log^2 N) \) in prior designs.

%Moreover, we innovatively adopt the P2C load-balancing technique in our design, which plays a key role in bounding the OMAP bandwidth blown up to $O(\log\log N)$.

%The algorithm directly supports \textsf{GET}, \textsf{PUT}, and \textsf{DEL} operations. We leave \textsf{INS} as a special case to be discussed later as we may have different modes for \textsf{INS}.
\vspace{3mm}
\begin{algorithm}[]
\caption{Logical \sys algorithm}
\begin{algorithmic}[1]
\Statex {\bf Inputs:} HBM store ${V}_{\mathsf{hbm}}[K]$; host store $\textcolor{blue}{{V}_{\mathsf{host}}[M]}$; stash ${V}_{\mathsf{st}}[M]$, position map $\mathsf{MAP}_{\mathsf{p}}$; command $(\textcolor{blue}{\textit{opcode}}, \textcolor{blue}{\textit{key}}, \textcolor{blue}{\textit{payload}})$.
%\Statex {\textit{\textcolor{gray}{// Position look up.}}}
\State $(op, k, ld) \gets \text{load}(\textcolor{blue}{\textit{opcode}}, \textcolor{blue}{\textit{key}}, \textcolor{blue}{\textit{payload}})$
\If{$ \left(p_1, p_2 \in [1, M+K] \gets \text{lookup}({\mathsf{MAP}_{\mathsf{p}}}, k)\right) = \emptyset$}
\Statex \textcolor{gray}{~~~~~~\it //position map miss, insert new data.}
\State $v \gets ld$, \text{dummy\_accesses\_and\_jump\_to(\ref{line:jmp})} 
\EndIf
%\Statex {\textit{\textcolor{gray}{// Value access.}}}
\For{$p_i \in (p_1, p_2)$}
\If{$p_i \in (0, K]$}  $v \gets \text{find\_remove}({V}_{\mathsf{hbm}}[p_i], k)$
\Else 
        \State $\textit{page} \gets \text{read\_page}(\textcolor{blue}{V_{\mathsf{host}}}[p_i - K])$
        \State $v \gets \text{find\_remove}(page \cup {V}_{\mathsf{st}}[p_i -K], k)$
        %\State $v \gets \text{find\_remove}({V}_{\mathsf{st}}[p_i -K], key)$
        \State $\text{write\_back:} ~\textit{page} \cup {V}_{\mathsf{st}} [p_i - K]\rightarrow  \textcolor{blue}{{V}_{\mathsf{host}}}[p_i - K]$
\EndIf
\EndFor
%\Statex {\textit{\textcolor{gray}{// Command execution.}}}
\State $\text{exec\_cmd}(op, k, v, ld)$
%\Statex {\textit{\textcolor{gray}{// Random remapping.}}}
\State \label{line:jmp} $p'_1, p'_2 \in [1, M+K]\gets \text{random\_remap()}$
\State {\text{P2C\_load\_balance}($p'_1$, $p'_2$, $\mathsf{MAP}_{\mathsf{p}}$, ${V}_{\mathsf{hbm}}$, ${V}_{\mathsf{st}}$, $k$, $v$)}
\end{algorithmic}
\label{alg:omap}
\end{algorithm}
\vspace{2mm}

Alg.\ref{alg:omap} adopts similar concepts to tree ORAMs (e.g., Path ORAM) that use access-then-remapping mechanisms\cite{pathoram}. However, it simplifies these ideas by removing the tree structure and instead using a bin-level design \re{(or equivalently, a flat single-layer tree)}. At a high level, we consider the entire data store to be divided into \(K+M\) logical bins, where the first \(K\) bins are in HBM (\(V_{\mathsf{hbm}}\)), and the remaining \(M\) are instantiated as fixed-sized, encrypted pages in host memory (\(V_{\mathsf{host}}\)). The core idea of our algorithm is to map real data accesses into two random accesses across logical bins. Each data item is initially assigned to two uniformly chosen random bins, $p_1,p_2$, and placed in one of them. To serve a request (Alg~\ref{alg:omap}:2), the algorithm queries the global position map  \(\mathsf{MAP}_{\mathsf{p}}\) to retrieve $p_1, p_2$ for a given key.  It then accesses both bins concurrently—one fetches the actual data, while the other is dummy to ensure obliviousness. After each access, the data is remapped to two new random bins (Alg~\ref{alg:omap}:14). A key novelty of our scheme is the integration of P2C load balancing (Alg~\ref{alg:omap}:15), where, both in initialization or after remapping, the real data is always placed in the less occupied of the two bins. This feature results a compact worst-case bin size which serves as a key property that leads to the \( O(1) + O(\log_2\log_2 N) \) bandwidth blowup in our accelerator design (\S~\ref{sec:impl}).

When the final destination (after P2C) of a remapped data is an HBM bin, it can be directly inserted into the target bin. However, when the destination is a host page, directly writing the data to the mapped page would leak access patterns~\cite{path-oram}. Thus, we adopt a strategy similar to Path ORAMs, using an eviction stash \(V_{\mathsf{st}}\) (initially empty) to temporarily buffer data evicted from \(V_{\mathsf{hbm}}\) while it is pending write-back to \(V_{\mathsf{host}}\). The actual eviction occurs when a host page read is triggered by a future access. At that point, all data in \(V_{\mathsf{st}}\) are mapped to the same page as the one just read is written back (and re-encrypted) together (Alg~\ref{alg:omap}:10). 
%We will show later that such a stash does not grow unlimited (\S~\ref{sec:sizes}), and, in fact, remains in a reasonable size. 
Note that a data item may not be found in the read pages, as it could reside in \(V_{\mathsf{st}}\). Therefore, both \(V_{\mathsf{host}}\) and \(V_{\mathsf{st}}\) must be searched (Alg~\ref{alg:omap}:9).

Once the requested data $(k,v)$ is accessed, the algorithm executes the command based on the request opcode. We consider a standard KVS interface with two opcodes: \texttt{GET} and \texttt{PUT}. For \texttt{GET}, the algorithm returns the retrieved data (in ciphertext). For \texttt{PUT}, it updates $v$ with a given payload or removes $(k,v)$ if the payload includes a tombstone marker. A special case is when $k$ is not found in $\mathsf{MAP}_{\mathsf{p}}$, which suggests an insertion. The algorithm will perform a dummy value access using two random $p_1, p_2$, and proceeds directly to the random remapping phase. The response of \texttt{PUT} is the same size as \texttt{GET} but contains only a confirmation code.

\subsection{Security Analysis}\label{sec:ag-analysis} %In this section, we analyze the obliviousness property of our proposed logical algorithm.

\begin{claim}[Obliviousness]\label{claim:obliv} The logical \sys algorithm defined in Alg~\ref{alg:omap} is data-oblivious (or satisfies Definition~\ref{def:privacy}).
\end{claim}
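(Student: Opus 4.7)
The plan is to exhibit a probabilistic polynomial-time simulator $\mathcal{S}$ that, given only the public parameters $\mathsf{pp} = (|D|, |\mathbf{c}|, M, K)$, reproduces a transcript computationally indistinguishable from $\mathsf{View}_{\mathcal{A}}^{\mathsf{Real}}(D, \mathbf{c})$. First I would carefully catalog what the adversary can actually see under the threat model of \S\ref{sec:threat}: (i) ciphertexts of the commands and responses crossing the TEE gateway; (ii) ciphertexts of the encrypted host-side pages in $V_{\mathsf{host}}$; and (iii) the sequence of host-page indices read and written on the host DRAM bus. Everything residing in HBM---the position map $\mathsf{MAP}_{\mathsf{p}}$, $V_{\mathsf{hbm}}$, the stash $V_{\mathsf{st}}$, every P2C comparison, branch, and intermediate register---is unobservable by the HBM assumption (\S\ref{sec:fpga}). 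The simulator therefore, for each of the $|\mathbf{c}|$ requests, (a) outputs a fresh random ciphertext of the fixed command length, (b) samples two indices $\tilde p_1, \tilde p_2$ independently and uniformly from $[1, M+K]$, (c) for each $\tilde p_i \in (K, M+K]$ emits a read of a fresh random ciphertext of page size followed by a write-back of another fresh random ciphertext of page size, and (d) outputs a random ciphertext of the fixed response length.

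I would prove indistinguishability via a standard hybrid with two orthogonal pieces. The first hop replaces every real ciphertext (commands, responses, and every host page re-keyed on write-back at Alg~\ref{alg:omap}:10) by uniformly random strings of equal length; this reduces to the IND-CPA security of the authenticated encryption scheme used on the TEE gateway and host-sealing path, which is sound because nonces are managed entirely inside the accelerator and thus never reused. The second hop, which is the substantive one, must show that the projection of the pairs $(p_1, p_2)$ used at Alg~\ref{alg:omap}:4 onto the host range $(K, M+K]$ is distributed identically to $|\mathbf{c}|$ independent uniform draws. The key observation is that each such pair was produced either by a fresh $\text{random\_remap}$ call at Alg~\ref{alg:omap}:14 of the preceding access to the same key, or, on a position-map miss (Alg~\ref{alg:omap}:2), by the dummy draw that covers the insertion; in both branches the pair is uniform and independent of $D$ and $\mathbf{r}$, and both bins of the pair are always accessed regardless of which one actually stores the datum, so nothing about the key or value can be read off.

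The main obstacle I anticipate is making rigorous the claim that the P2C load-balancing step and the deferred stash evictions do not leak through the observable trace. For P2C, I would argue that the choice of which of $p'_1, p'_2$ receives the remapped item depends only on HBM-resident bin counts and is reflected purely by internal writes inside $V_{\mathsf{hbm}}$ or $V_{\mathsf{st}}$---both unobservable---while on any subsequent access both bins of the pair are read out together, masking the placement. For the stash, I would show that the flush at Alg~\ref{alg:omap}:10 is triggered deterministically by any host-page read, always pairs the read with a re-encrypted write-back of fixed page size, and batches in precisely those $V_{\mathsf{st}}$ entries whose destination is that page, so both the timing and the size of the host-side footprint are a deterministic function of the (already uniform) bin index $p_i - K$ and never of the data. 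Finally, conditioning on the full observable history does not bias future bin draws because each remap consumes fresh accelerator-private randomness, so per-request uniformity lifts to joint uniformity. Combining these steps closes the hybrid and yields Claim~\ref{claim:obliv}.
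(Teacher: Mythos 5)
Your proposal is correct and follows essentially the same route as the paper's proof: both construct a simulator that, from public parameters only, emits fresh random ciphertexts for the fixed-length commands, responses, and host pages, and replaces the real bin accesses by uniformly sampled pairs, arguing that every access (first or remapped) is distributed as a fresh uniform pair of bins while P2C placement and stash eviction are absorbed either into unobservable HBM state or into the deterministic, fixed-size read/write-back of the already-uniform host page. The only minor divergence is that the paper samples the simulated pair uniformly over ordered pairs of \emph{distinct} bins (matching its model of $\text{random\_remap}$ as a map into $\{(b_1,b_2): b_1\neq b_2\}$ with probability $\frac{1}{(K+M)(K+M-1)}$ each), whereas you draw the two indices independently and so permit $\tilde p_1=\tilde p_2$; you should align the simulator's sampling with whatever distinctness the remap actually enforces, since a per-access collision probability of $\frac{1}{M+K}$ is not negligible.
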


\begin{proof} We first characterize the transcripts observable by the adversary. Recall that internal states and accesses to on-package resources are unobservable, so only interactions beyond this boundary are visible.
Hence, given \( D \) and \(\mathbf{c} = \{c_1,\ldots,c_n\}\), the adversary observes only: (i) the encrypted commands \( \mathsf{in}_i(c_i, D) \), (ii) the encrypted responses \( \mathsf{out}_i(D, c_i) \), and (iii) the off-package memory accesses \( \mathsf{mem}_i(D, c_i) \). Formally, the adversary's view is
\[
\mathsf{View}_{\mathcal{A}}^{\mathsf{Real}}(D, \mathbf{c}) = \Bigl\{ \bigl( \mathsf{in}_i(c_i, D),\, \mathsf{out}_i(D, c_i),\, \mathsf{mem}_i(D, c_i) \bigr) \Bigr\}_{i=1}^n.
\]
Obliviousness holds if there exists a simulator \( S \) that, using only non-private information (e.g., \( |D| \) and \( |\mathbf{c}| \)), produces a transcript indistinguishable from \( \mathsf{View}_{\mathcal{A}}^{\mathsf{Real}}(D, \mathbf{c}) \). Since all inputs and outputs are encrypted and each operation executes in constant time, the I/O traffic is trivial to simulate; we thus focus on the off-package memory accesses. Let $\mathcal{B} = \{1, 2, \ldots, K+M\}$ denote the set of logical bins. Initially, every key is assigned uniformly at random to two distinct bins. Hence, for any ordered pair \((b_1,b_2) \in \mathcal{B}\times\mathcal{B}\) with \( b_1 \neq b_2 \), when a key \( k \) is accessed for the first time, the probability \smash{$\Pr\Bigl[(b_1,b_2) \text{ is accessed}\Bigr]$} is \smash{$\frac{1}{(K+M)(K+M-1)}$}.
Moreover, for each subsequent access (indexed by a counter \( j \)), the algorithm reassigns \( k \) to two distinct bins using a mapping $\pi: \mathcal{K} \times \mathbb{N} \to \{(b_1,b_2) \in \mathcal{B}\times\mathcal{B} : b_1 \neq b_2\}$, 
so that for any fixed \((b_1,b_2)\) with \( b_1 \neq b_2 \), we have \smash{$\Pr\Bigl[\pi(k,j) = (b_1,b_2)\Bigr] = \frac{1}{(K+M)(K+M-1)}$}.
Thus, every key access—whether the first or a subsequent one—is statistically equivalent to a random access to two logical bins. With this analysis, we now construct a simulator as follows.

\begin{center}
\fbox{%
\begin{minipage}{0.95\linewidth}
\textbf{Simulator \( S(K, M, sz, |\mathbf{c}|) \):}
\begin{enumerate}
    \item \textbf{Init:} Internally simulate \(K+M\) dummy bins and encrypts the $M$ host bins into pages of size $\textit{sz}$.
    \item For each index \( i \in \{1,\ldots,n\} \):
    \begin{enumerate}
        \item Generate a random command \( c'_i \) with a dummy key \( k_i \).
        \item Random a pair $\{ (b_1, b_2) \in \mathcal{B}^2 : b_1 \neq b_2 \}$
        \item For each bin \( b \) in the pair \(\{b_1, b_2\}\):
        \begin{enumerate}
            \item If \( b \le K \) (i.e., \( b \) is in HBM), idle.
            \item Otherwise, simulate  $e_{\mathsf{mem}, i}=(P^{\text{read}}_b, P^{\text{write}}_b)$:
            \begin{enumerate}
                \item Read a random encrypted page \( P^{\text{read}}_b \).
                \item Generate a random ciphertext \( P^{\text{write}}_b \) of the same size to simulate a page writeback.
            \end{enumerate}
        \end{enumerate}
        \item Generate random ciphertexts \( e_{\text{in},i} \) and \( e_{\text{out},i} \).
        \item Output:  $(e_{\text{in},i}, e_{\text{out},i},  e_{\mathsf{mem}, i}=(P^{\text{read}}_b, P^{\text{write}}_b) )$.
    \end{enumerate}
\end{enumerate}
\end{minipage}%
}
\end{center}

Because the real memory accesses are distributed uniformly over the pairs of distinct bins, and the encryption renders inputs, outputs, and memory pages indistinguishable from random data, we have
\[
\mathsf{View}_{\mathcal{A}}^{\mathsf{Real}}(D, \mathbf{c}) \approx_{\mathsf{ind}} \mathsf{View}_{\mathcal{A}}^{S(K, M, sz, |\mathbf{c}|)}.
\]
\end{proof}
\re{In summary, Algorithm~\ref{alg:omap} ensures that each data item is randomly mapped to two bins during either initialization and after every access. Hence, each access in any sequence appears identical—reading two random bins and writing them back. Moreover, evictions are hidden within random write-backs and remain undetectable. Together, these design choices ensure strong obliviousness.}

\subsection{Dimensional analysis}\label{sec:sizes} In this section, we analyze the sizes of several key objects in our logical algorithm, focusing on deriving high-probability upper bounds. These bounds guide memory allocation to prevent overflows, characterize capacity limits (e.g., estimating minimal HBM requirements), and serve as key tools for our subsequent overhead analysis (\S~\ref{sec:impl}). For simplicity, all analyses assume an input data of size $N$, and tolerate a small failure probability of at most \smash{$\frac{1}{O(N)}$}. Note that, when $N$ is large, such as proportional in $2^{k}$, this probability becomes exponentially small. \re{While there is a small chance of overflow causing data loss, this only affects durability guarantees. Even commercial products like AWS S3~\cite{awsS3} do not ensure deterministic durability, so we consider an exponentially small risk of data loss is acceptable.
} 
%We later validate these bounds empirically through large sequences of simulated OMAP accesses.

%: bin load and stash size. The bin load analysis helps determine appropriate page size settings to prevent overflow and serves as a crucial tool for our subsequent complexity analysis (Claim~\ref{}). Meanwhile, the stash size analysis is essential for estimating the HBM requirements needed to run the desired algorithm.

%Next, we analyze upper bounds on stash and bin sizes, which not only support our complexity analysis but also inform parameter settings, such as page sizes, etc..

\begin{claim}[bin load]\label{claim:bin-sz}  Give $N=c(K+M)$, where $c$ is some constant. Then with probability at least \smash{$1-\frac{1}{O(N)}$}, the max load of all bins is bounded by $\ell_{\max}=c + O({\log_2\log_2 N})$
\end{claim}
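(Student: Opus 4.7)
The plan is to reduce the claim directly to the classical ``power of two choices'' balanced-allocation result of Azar--Broder--Karlin--Upfal, together with its heavily-loaded extension due to Berenbrink--Czumaj--Steger--V\"ocking. The first step is to argue that the allocation performed by Algorithm~\ref{alg:omap} is statistically equivalent to a standard Greedy[2] balls-into-bins process: on line~14 the two candidate bins $p'_1, p'_2$ are drawn uniformly at random from $[1, M+K]$, and on line~15 the item is placed in whichever of the two currently holds fewer items. The uniform-random draws justified in the proof of Claim~\ref{claim:obliv} already show that each such pair is distributionally independent of all prior item histories (up to rejecting coincidences), so the joint assignment of $N$ items to $n := K+M$ bins matches exactly the Greedy[2] process with $N$ balls and $n$ bins.

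Next I would invoke the heavily-loaded P2C bound: for $m$ balls inserted into $n$ bins under Greedy[2], the maximum load is at most $m/n + \log_2 \log_2 n + O(1)$ with probability at least $1 - n^{-\beta}$ for any fixed constant $\beta \ge 1$. Substituting $m = N$ and $n = K+M = N/c$, the max load becomes $c + \log_2 \log_2(N/c) + O(1) = c + O(\log_2 \log_2 N)$, matching the form asserted in the claim. Setting $\beta = 1$ and using $n = \Theta(N)$ yields failure probability $O(1/n) = \frac{1}{O(N)}$, as required.

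The remaining work is to lift the single-snapshot bound to the dynamic execution, since every access removes one item and re-inserts it with a freshly sampled pair of bins. A useful first observation is that, conditioned on the identity of items present, the current configuration is always distributionally equivalent to a fresh Greedy[2] allocation of those items; this invariant can be maintained step-by-step via the standard layered-induction or witness-tree argument of Azar et al., which controls the number of bins that can simultaneously reach a given load level.

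The main obstacle I expect is making this dynamic preservation argument fully rigorous: Greedy[2] choices across consecutive accesses are correlated through the current load vector, so a naive union bound over polynomially many rounds inflates the failure probability beyond $\frac{1}{O(N)}$. Absorbing this back will require either (i) a witness-tree/potential-function analysis tailored to the ``remove-and-reinsert'' update rule, or (ii) an invocation of the stationary-distribution results for P2C with departures, which show that the max load remains $m/n + \log_2\log_2 n + O(1)$ throughout an arbitrary-length execution with the same polynomial failure probability.
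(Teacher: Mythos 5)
Your proposal takes essentially the same route as the paper: Claim~\ref{claim:bin-sz} is proved by reduction to the power-of-two-choices theorem, and the paper's Appendix~\S\ref{appx:p2c} simply reproduces the Azar et al.\ layered induction for the static allocation of $N$ balls into $B=K+M$ bins (base case $X_c \le B$, recurrence $\beta_{k+1} = 2N(\beta_k/B)^2$ via Chernoff, doubly-exponential decay, union bound over $O(\log_2\log_2 N)$ levels). Where you differ is in citing the heavily-loaded bound of Berenbrink et al.\ as a black box rather than re-deriving the induction; for constant $c$ the two are interchangeable, since starting the layered induction at level $c$ (as the paper does) already absorbs the $m/n$ term. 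The more substantive point is your final paragraph: you are right that the claim, as used by the algorithm, must hold throughout a dynamic remove-and-reinsert execution, and that a naive union bound over polynomially many accesses degrades the $1/O(N)$ failure probability. The paper's own proof does \emph{not} address this --- it analyzes only a single static allocation and relies on the empirical validation in Figure~\ref{fig:validate} (one billion accesses) to support the dynamic case. So your flagged obstacle is a genuine gap in the paper's argument as written, not a defect of your approach; closing it would require exactly the kind of witness-tree or stationary-distribution-with-departures argument you outline, and your proposal is more explicit about this than the paper is.
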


\begin{proof} The proof of this claim is a direct application of the P2C theorem~\cite{mitzenmacher2001power, sitaraman2001power}. For brevity, we do not repeat the proof details here but provide the full proof in \S~\ref{appx:p2c} for completeness.
\end{proof}

Since bin sizes are strictly bounded by $\ell_{\max}$, fixing the page size to $\ell_{\max}$ suffices to prevent page overflows. This holds because, with the presence of the eviction stash, the page size is at most equal to the corresponding logical bin size. Trivially, one can also derive an upper bound on the size of $V_{\mathsf{hbm}}$ as $K\ell_{\max}$. Nevertheless, the above bound may be overly pessimistic. Since $V_{\mathsf{hbm}}$ reside within the HBM and is not observable by an attacker, we can employ dynamically sized bins instead of fixed-size pages. Hence, we need to derive a tighter upper bound.

%In fact, when considering $N$ to be large such as in terms of $2^{k}$ scale, then such probability is exponentially small. 

\begin{claim}[Sum of HBM bin loads]\label{claim:sum-bin-loads}
The total bin load of all HBM bins is bounded by \smash{$Kc + O\left(\ell_{\max}\sqrt{K\ln N}\right)$}.
\end{claim}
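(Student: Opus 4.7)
My plan is to decompose the argument into a mean computation and a tail bound, using Claim~\ref{claim:bin-sz} to control the per-bin range.

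First, I would compute $E[S_K]$ where $S_K=\sum_{i=1}^K X_i$. Since the two-choice random draw is perfectly symmetric over all $K+M$ bins, the marginal distribution of the final placement of any item is uniform across bins. Hence $E[X_i]=N/(K+M)=c$ for every bin $i$, so $E[S_K]=cK$. This matches the leading term of the claimed bound, leaving only the deviation to control.

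Next, I would condition on the ``good'' event $\mathcal{E}=\{X_i\le \ell_{\max}\text{ for all }i\in[1,K+M]\}$, which by Claim~\ref{claim:bin-sz} has probability at least $1-1/O(N)$. On $\mathcal{E}$, every summand of $S_K$ lies in $[0,\ell_{\max}]$, so I can apply a Hoeffding/Azuma-type tail bound to the $K$-term sum. Using the negative association that P2C inherits from balanced balls-into-bins (intuitively, since bins compete for each item, their loads are anti-correlated), Hoeffding's inequality yields
\begin{equation*}
\Pr\Bigl[\,S_K - cK > t\,\Bigr]\;\le\;\exp\!\left(-\Omega\!\left(\tfrac{t^{2}}{K\,\ell_{\max}^{2}}\right)\right).
\end{equation*}
Setting $t=\alpha\,\ell_{\max}\sqrt{K\ln N}$ for a sufficiently large constant $\alpha$ drives the right-hand side to $1/\mathrm{poly}(N)$; a final union bound with the failure probability of $\mathcal{E}$ keeps the total slack within $1/O(N)$, yielding $S_K\le cK+O(\ell_{\max}\sqrt{K\ln N})$.

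The main obstacle is handling the dependence between the $X_i$'s under P2C. A naive item-by-item Doob martingale would suggest an even tighter $O(\sqrt{K\ln N})$ bound, but because each P2C tie-break can cascade into subsequent placements, carefully tracking the martingale differences introduces an extra factor on the order of the worst-case bin load, which is exactly the $\ell_{\max}$ appearing in the claim. The cleanest escape is to forgo cascade-propagation analysis and instead invoke the well-known negative association of two-choice balanced allocations, which legitimizes the Hoeffding step on the $K$ bounded variables $X_1,\ldots,X_K$ directly. If negative association were unavailable, I would fall back on a bin-exposure Doob martingale with differences bounded by $\ell_{\max}$ on $\mathcal{E}$, arriving at the same deviation up to constants.
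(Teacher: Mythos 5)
Your proof follows essentially the same route as the paper's: identify $\mathbb{E}[S_K]=Kc$, invoke the per-bin bound $\ell_{\max}$ from Claim~\ref{claim:bin-sz} to control the range of each summand, apply a Hoeffding-type inequality, and set $t=\Theta\bigl(\ell_{\max}\sqrt{K\ln N}\bigr)$ to drive the failure probability to $1/O(N)$. If anything, you are more careful than the paper, which applies Hoeffding directly without addressing the dependence among the P2C bin loads that you explicitly handle via negative association (with a Doob-martingale fallback).
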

\begin{proof}
%To prove this claim, we leverage the established bounds on the maximum bin load from the power of two choices model. From Claim~\ref{claim:bin-sz}, we know that with probability at least $1-O\left(\frac{1}{N}\right)$, the maximum bin load $\ell_{\max}$ is bounded by $c + O(\log_2\log_2 N)$ after all $N$ balls are placed.
Let $S_K$ denotes the sum of all HBM bin loads, and since each bin has an expected load of $c$, then $\mathbb{E}[S_K] = Kc$. We now apply Hoeffding's inequality~\cite{hoeffding1994probability} to derive a tight tail bound. As all bin loads are within $\ell_{\max}$, for any $t > 0$, we have:
\begin{equation*}
\Pr[|S_K - Kc| \geq t] \leq 2\exp\left(-\frac{2t^2}{K(\ell_{\max})^2}\right)
\end{equation*}

Setting $t = \ell_{\max}\sqrt{K\ln(2N)/2}$ leads to the aforementioned probability to be smaller than $\frac{1}{N}$.
%\eat{\begin{equation}
%\begin{split}
%\Pr[|S_K - Kc| \geq \ell_{\max}\sqrt{K\ln(2N)/2}] &\leq 2\exp\left(-\frac{2(\ell_{\max})^2K\ln(2N)/2}{K(\ell_{\max})^2}\right)\\
%&= 2\exp(-\ln(2N))\\
%&= \frac{1}{N}
%\end{split}
%\end{equation}
%}
Hence, we conclud that with probability at least \smash{$1-\frac{1}{O(N)}$}, we have \smash{$S_K \leq Kc + O\left(\ell_{\max}\sqrt{K\ln N}\right)$}.
%\eat{
%\begin{equation}
%\begin{split}
%S_K &\leq Kc + \ell_{\max}\sqrt{K\ln(2N)/2}\\
%&= Kc + (c + O(\log_2\log_2 N))\sqrt{K\ln(2N)/2}\\
%&= Kc + O\left(\sqrt{K\log N}\right)
%\end{split}
%\end{equation}
%}
\end{proof}

This bound is significantly tighter than the naive bound of $K\cdot\ell_{\max}$, precisely because Hoeffding's inequality captures the concentration effect when summing multiple bin loads. Next, we study a size upper bound w.r.t. the eviction stash.
%The key insight is that while individual bins might have loads up to $c + O(\log_2\log_2 N)$, the random selection of $K$ bins ensures that with high probability, the deviation from the expected sum is much smaller than if all selected bins had the maximum possible load.

%Next, based on Claim~\ref{claim:bin-sz}, we now provide the complexity analysis of our logical algorithm. We focus on two standard metrics for OMAP constructions, the bandwidth and round blowups. Assuming each data item fits within a single memory line, bandwidth blowup quantifies the ratio of total memory accesses to the number of requested items, while round blowup measures the additional access rounds required to complete a single request. These metrics capture the overhead introduced by the OMAP scheme compared to an idealized non-private setting, where each request would directly access a single memory line and execute in a single round.

\begin{claim}[stash size]\label{claim:stash-sz} Given the ratio of HBM bins as $\alpha=K/(M+K)<1$. Then with probability of at least \smash{$1-\frac{1}{O(N)}$}, the stash size does not exceed:
\smash{$\frac{(1+\alpha)M}{2} + O\left(\ell_{\max}\sqrt{M\ln N}\right)$}.
\end{claim}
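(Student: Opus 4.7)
The plan is to mirror the proof of Claim~\ref{claim:sum-bin-loads}: decompose the stash into per-bin buffers, bound each buffer by $\ell_{\max}$, upper-bound the expected sum, and then apply Hoeffding's inequality for concentration.

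First, I would write $S = \sum_{j=1}^{M} |V_{\mathsf{st}}[j]|$. Since $V_{\mathsf{st}}[j]$ is always a subset of the current occupants mapped to host bin $j$, Claim~\ref{claim:bin-sz} immediately gives $|V_{\mathsf{st}}[j]| \leq \ell_{\max}$ for every $j$.

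Second, I would upper-bound $\mathbb{E}[S]$ via a per-bin Little's Law argument. Fix a host bin $j$. A newly remapped item lands in $V_{\mathsf{st}}[j]$ only when (i) one of its two uniformly random remap candidates equals $j$, and (ii) P2C then chooses $j$ over its competitor. Event (i) has probability $2/(K+M)$. Conditioned on (i), the other candidate is HBM with probability $K/(K+M-1)$ and host with probability $(M-1)/(K+M-1)$; in the worst case P2C picks $j$ with probability at most $1$ against the HBM rival and exactly $1/2$ against a host rival (by symmetry). Combining these and passing to the large-$(K+M)$ limit yields a per-access arrival rate to $V_{\mathsf{st}}[j]$ of at most $(1+\alpha)/(K+M)$. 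Since $V_{\mathsf{st}}[j]$ is fully flushed whenever bin $j$ is one of the two accessed bins---mean inter-flush time $(K+M)/2$---Little's Law gives $\mathbb{E}[|V_{\mathsf{st}}[j]|] \leq (1+\alpha)/2$, and summing over $j$ produces $\mathbb{E}[S] \leq (1+\alpha)M/2$.

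Third, apply Hoeffding's inequality to the $M$ bounded variables $|V_{\mathsf{st}}[j]| \in [0, \ell_{\max}]$, using the same heuristic treatment of dependencies as in Claim~\ref{claim:sum-bin-loads}:
\begin{equation*}
\Pr\!\left[S - \mathbb{E}[S] \geq t\right] \leq 2\exp\!\left(-\frac{2t^2}{M\,\ell_{\max}^2}\right).
\end{equation*}
Choosing $t = \ell_{\max}\sqrt{M\ln(2N)/2}$ pushes the failure probability below $1/N$, which combined with the mean bound yields $S \leq (1+\alpha)M/2 + O(\ell_{\max}\sqrt{M\ln N})$ with probability at least $1 - 1/O(N)$.

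The main obstacle is the expectation step. P2C's actual behavior on mixed HBM/host candidate pairs depends on what ``load'' it compares---the full occupancy of the host bin versus merely its stash entry---and the resulting self-consistent analysis is delicate. Using the worst-case assumption that P2C always prefers the host bin against an HBM rival gives a valid upper bound of $(1+\alpha)/2$ regardless of the implementation, but one must justify that this case-analysis actually dominates across the full execution, rather than only in isolation. The subsequent Hoeffding step is straightforward once we accept the same heuristic side-stepping of dependencies among $|V_{\mathsf{st}}[j]|$'s as in Claim~\ref{claim:sum-bin-loads}.
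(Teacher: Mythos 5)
Your proposal is correct in substance but takes a genuinely different route from the paper. The paper models the whole stash as a single queue $X_t$: items enqueue with probability at most $1-\alpha^2$ (conservatively counting every item whose remap touches at least one host bin), dequeue via label-matched flushes, and a one-step drift computation yields the equilibrium $x^{*}=\frac{(1+\alpha)M}{2}$; the $O(\ell_{\max}\sqrt{M\ln N})$ tail is then obtained from an exponential supermartingale on the excess $X_t-x^{*}$ plus Markov's inequality (Appendix~\S\ref{appx:tail}). You instead decompose the stash per destination bin, recover the same mean $(1+\alpha)M/2$ by a Little's-Law/renewal argument (arrival rate $(1+\alpha)/(K+M)$, inter-flush time $(K+M)/2$ --- note your per-bin rates sum to exactly the paper's $p_{\text{add}}=1-\alpha^2$ and flush rate $2(1-\alpha)$), and then apply Hoeffding to $M$ variables bounded by $\ell_{\max}$, mirroring Claim~\ref{claim:sum-bin-loads}. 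The ``main obstacle'' you flag --- whether assuming the item always lands in the host bin when its rival is an HBM bin is globally valid --- is resolved the same way the paper resolves it: under the natural coupling the conservative policy's stash is a superset of the real one at all times (enqueues only increase, flush events coincide), and the paper explicitly owns this pessimism in its ``Dimensions in practice'' discussion. As for what each route buys: yours is more elementary and reuses the Claim~\ref{claim:sum-bin-loads} machinery, at the cost of inheriting its unaddressed dependence among the $|V_{\mathsf{st}}[j]|$ (Hoeffding wants independence or negative association) plus an implicit stationarity/monotone-approach-to-equilibrium assumption behind Little's Law; the paper's drift-plus-supermartingale route tracks the temporal dynamics directly and avoids the per-bin dependence issue, though its appendix argument is itself not airtight (the parameter $\lambda$ is chosen as a function of the deviation $\Delta$, which is nonstandard for a supermartingale bound). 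Both arrive at the same equilibrium constant and the same $O(\ell_{\max}\sqrt{M\ln N})$ deviation term.
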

\begin{proof}
We prove this by formulating the stash as a queue and analyze the queue dynamics. Let $X_t$ denote the number of elements in the queue (stash) at time $t$. There are $B=K+M$ logical bins such that $\alpha={K}/{B}<1$ as the ratio of HBM bins (those do not need evictions). At each discrete time step, we define the enqueue and dequeue strategy as follows:

\begin{enumerate}
    \item \textbf{Enqueue:} Note that an element is added to the stash only if the remapping assigns it to at least one host bin. In other words, we can formulate the enqueue strategy as an element is added to the queue with probability at most $p_{\text{add}} = 1-\alpha^2$. Moreover each added element is assigned a label uniformly at random from $\{1,2,\dots,M\}$ to record their destination.
    
    \item \textbf{Dequeue:} With probability $p_{\text{e1}} = 2\alpha(1-\alpha)$,
    a value $v\in\{1,\dots,M\}$ is chosen uniformly and every ball in the queue with label $v$ is evicted (one page read). Moreover, with probability $p_{\text{e2}} = (1-\alpha)^2$,
    two independent values $v_1,v_2\in\{1,\dots,M\}$ are chosen uniformly and every ball whose label is either $v_1$ or $v_2$ is evicted.
\end{enumerate}

%We assume that conditioned on being in the queue, elements labels are i.i.d. $\mathrm{Uniform}\{1,\dots,M\}$. 
 Next, we conduct drift analysis. Given that there are $x$ balls in the queue, the expected one-step change is:
\[
\begin{split}
\mathbb{E}[\Delta X_{\text{evict}} \mid X_t=x] 
&= p_{\text{e1}}\cdot\frac{x}{M} + p_{\text{e2}}\cdot\frac{2x}{M} \\
&= \frac{x}{M}\Bigl(2\alpha(1-\alpha) + 2(1-\alpha)^2\Bigr) \\
&= \frac{2(1-\alpha)x}{M}\Bigl[\alpha + (1-\alpha)\Bigr] = \frac{2(1-\alpha)x}{M}.
\end{split}
\]
Thus, the one-step drift is $\mathbb{E}[\Delta X_t \mid X_t=x] 
= (1-\alpha^2) - {2(1-\alpha)x}/{M}$. Setting the drift to zero at equilibrium, we have \smash{$x^{*} = \frac{(1+\alpha)M}{2}$}, which suggests a stable size of the eviction stash. Moreover, for any excess $\Delta > 0$, the drift becomes negative: \smash{$\mathbb{E}[\Delta X_t \mid X_t = x^{*} + \Delta] = -\frac{2(1 - \alpha)}{M}$}. This negative drift implies that once the queue exceeds \(x^{*}\), the process tends to pull it back, a self-correcting mechanism that makes larger stash sizes unlikely. In fact, as $\Delta X_t$ is bounded by the bin size (Claim~\ref{claim:bin-sz}), hence, one can apply concentration theorems~\cite{wilhelmsen1974markov, lengler2020drift, boucheron2003concentration} to derive a tail bound on $X_t - x^{*}$, which is
\smash{$O\left(\ell_{\max}\sqrt{M\ln N}\right)$}
with high probability at least \smash{\(1 - \frac{1}{O(N)}\)}. In other words, the probability that the stash size exceeds \smash{$x^{*} + O\left(\ell_{\max} \sqrt{M \ln N}\right)$} is only proportional to \smash{$\frac{1}{O(N)}$}. For completeness, we include a detailed derivation of the tail bound in Appendix~\S\ref{appx:tail}.
\end{proof}

\vspace{3pt}\noindent{\bf Dimensions in practice.} Now that we have established several analytical upper bounds on bin load, total HBM load and stash size, we aim to evaluate how tight these requirements are and determine the actual dimensions in practice. To investigate this, we conduct a validation experiment same as Ring ORAM~\cite{ren2015constants}, simulating our logical algorithm with $N=2^{20}$ data entries and subjecting it to one billion random accesses. Throughout the simulation, we track the peak load of a single bin, the aggregate load across all HBM bins, and the maximum stash occupancy. These runtime measurements are then compared against their corresponding analytical bounds. For each asymptotic term in our analytical bounds, we replace the Big-O notation with its corresponding expression multiplied by a constant factor of 1. \re{This allows us to compute concrete values that respect the stated asymptotic constraints.}
Figure~\ref{fig:validate} shows validation results under different settings (e.g., $c=8,16$ and $\alpha=0.01, 0.2, 0.5$).

\vspace{2mm}
\begin{figure}[ht]
\captionsetup[sub]{font=small,labelfont={bf,sf}}
    \begin{subfigure}[b]{0.5\linewidth}\centering\includegraphics[width=1\linewidth,interpolate=false]{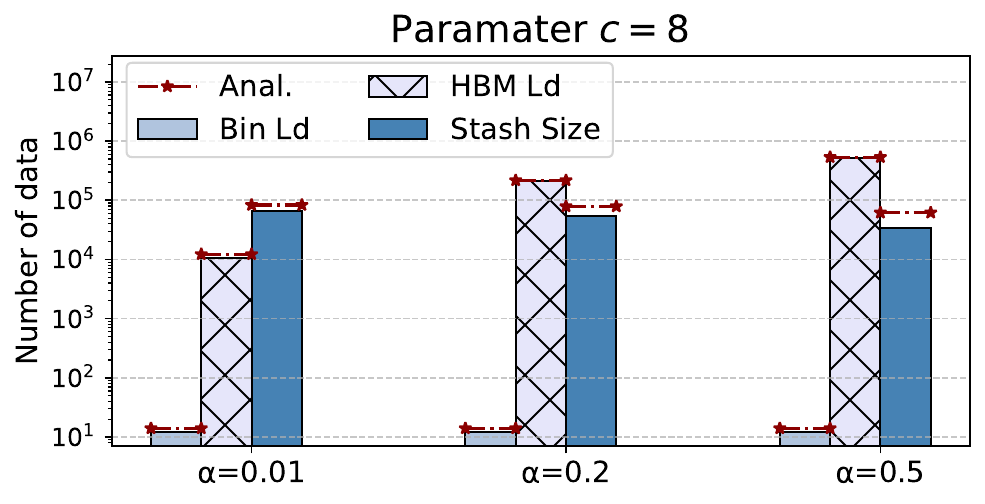}
    \end{subfigure}%%
    \begin{subfigure}[b]{0.5\linewidth}
\centering\includegraphics[width=1\linewidth,interpolate=false]{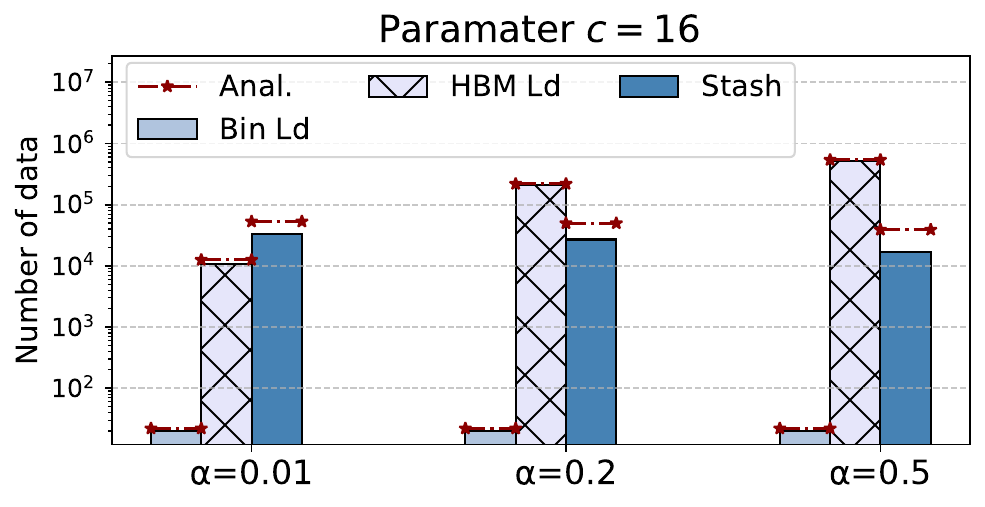}  
    \end{subfigure}
    \scalebox{0.63}{
\begin{tabular}{|c|lll|lll|}
\hline
\multirow{2}{*}{\textbf{\begin{tabular}[c]{@{}c@{}}Slack \\ +abs. (+rel.)\end{tabular}}} & \multicolumn{3}{c|}{\textbf{c= 8}}                                                                         & \multicolumn{3}{c|}{\textbf{c=16}}                                                                         \\ \cline{2-7} 
                                                                                          & \multicolumn{1}{c|}{\textbf{0.01}} & \multicolumn{1}{c|}{\textbf{0.2}} & \multicolumn{1}{c|}{\textbf{0.5}} & \multicolumn{1}{c|}{\textbf{0.01}} & \multicolumn{1}{c|}{\textbf{0.2}} & \multicolumn{1}{c|}{\textbf{0.5}} \\ \hline
\textbf{Bin Ld}                                                                           & \multicolumn{1}{c|}{2 (.143)}      & \multicolumn{1}{c|}{2 (.143)}     & \multicolumn{1}{c|}{2 (.143)}     & \multicolumn{1}{c|}{2 (.091)}      & \multicolumn{1}{c|}{2 (.091)}     & \multicolumn{1}{c|}{2 (.091)}     \\ \hline
\textbf{HBM Ld}                                                                           & \multicolumn{1}{l|}{1596 (.13)}    & \multicolumn{1}{l|}{7222 (.033)}  & 11630 (.022)                      & \multicolumn{1}{l|}{1877 (.15)}    & \multicolumn{1}{l|}{8419 (.038)}  & 13520 (.025)                      \\ \hline
\textbf{Stash}                                                                            & \multicolumn{1}{l|}{17511 (.21)}   & \multicolumn{1}{l|}{25555 (.324)} & 28219 (.456)                      & \multicolumn{1}{l|}{19835 (.374)}  & \multicolumn{1}{l|}{22640 (.456)} & 21895 (.562)                      \\ \hline
%\textbf{HBM Ld + Stash}                                                                        & \multicolumn{1}{l|}{19331 (.202)}  & \multicolumn{1}{l|}{33364 (.122)} & 40454 (.068)                      & \multicolumn{1}{l|}{21873 (.334)}  & \multicolumn{1}{l|}{31513 (.117)} & 35861 (.062)                      \\ \hline
\end{tabular}}
   \caption{Validation experiments (Exp. vs. Anal.)}
   \label{fig:validate}
\end{figure}
\vspace{1mm}

Our empirical results in Figure~\ref{fig:validate} show that the analytical bounds consistently hold across all groups, validating our upper-bound formulations. We also see that these bounds are fairly tight, with each one showing a reasonable slack compared to the corresponding empirical values. To illustrate this, we include a table in Figure~\ref{fig:validate} reporting the absolute and relative slack for each metric. Given their tightness, these bounds offer practical guidance for memory allocation to prevent overflows while retain resource-efficient.

Notably, we observe that the slack for stash sizes can be relatively large—up to 56.2\%—mainly due to our conservative analytical approach in deriving the upper bounds. Specifically, when data maps to both a host and an HBM bin, we conservatively assign it to the host stash to ensure an upper bound, though this ignores cases where it could be remapped to HBM, making the estimate pessimistic. Nevertheless, stash sizes remain small, accounting for at most 6.5\% and 3.3\% of total data for $c=8$ and $c=16$, respectively. 

\vspace{3pt}\noindent\re{{\bf Parameters selection.} Our analytical bounds help guide parameter selection for different hardware, such as varying HBM capacities. A smaller $c$ increases the position map size (\S~\ref{subsec:arch-analysis}) but reduces bandwidth overhead. Nevertheless, as the entire position map must fit in HBM, one should first choose a proper $c$ so that the map fits completely within the HBM capacity. Any remaining HBM can be used for $V_{\mathsf{hbm}}$ and the stash. A practical approach is to start with a small $\alpha$ and gradually increase it until $V_{\mathsf{hbm}}$ and the stash no longer fit in the leftover HBM. Our analytical bounds can be used to check this. Note that if $\alpha = 0$, no stash is needed.} 
%Hence, even if one pre-allocates extra memory for the stash based on our bounds, the overhead remains modest in large-scale systems.
%Nevertheless, in practice, one may opt for a smaller HBM stash allocation than the analytical bound in Claim~\ref{claim:stash-sz}.
%In practice, a slightly smaller stash size, as suggested by the analytical bounds, may be preferable.

%As per Claim~\ref{claim:hbm_ratio}, we see it's possible to use small HBM to support large in-memory data, especially when values are large. For instance, with $\alpha = 0.01$, total HBM usage is about 22\% and 19\% of the data size for 32-bit keys and 64-byte values (cache line sizes). This drops to 13\% and 8\% for 256-byte values, and further to 7\% and 4\% for 1KB values. In fact, real-world KVS applications usually recommend small key lengths but support very large keys~\cite{jiang2024compassdb, apple_nsubiquitouskeyvaluestore, memcached_keysize, apify_kvstore, geeksforgeeks2024store}. Note that existing HBM FPGAs, such as the Alveo V80 and Intel Stratix, typically feature 32GB HBM, allowing them to support up to 800GB of in-memory data. Nvidia's H100, with 80GB HBM, can manage nearly 1TB data. Furthermore, newer ASICs like AMD’s MI325 offer up to 256GB of HBM~\cite{amd-mi325}, which can support 6TB in-memory data.\footnote{Currently, the largest DIMM capacity for a single-socket CPU is 6TB~\cite{AMD_EPYC_9004_8004}.}

\section{\sys Architecture}\label{sec:impl}
%The logical algorithm shows a promising new approach for designing efficient OMAPs for large in-memory data.  
In this section, we introduce a concrete accelerator architecture that instantiates our logical algorithm.
\subsection{Architecture Details.}\label{sec:arch-details} To mitigate indirect leakages from the host CPU ({\bf C-2}), \sys introduces a novel {\em self-hosted isolated execution model.} While prior accelerator TEEs primarily focus on I/O isolation~\cite{volos2018graviton,zhao2022shef, oh2020trustore}, \sys goes further by migrating device control and memory management from the host (e.g., drivers) into the accelerator itself.  Concretely, \sys embeds a full-fledged OMAP logic complex behind the isolation boundary (e.g., within the chip package), which autonomously manages device I/Os, data and control flows, and access to both internal and off-package memory (e.g., host DRAM), all without relying on host-side features.  To end users, \sys exposes only a minimal instruction interface comprising two coarse-grained, task-level commands: \texttt{GET} and \texttt{PUT}. By restricting interaction to these high-level abstractions, \sys eliminates the need for fine-grained host-side coordination. As a result, the host's role is significantly reduced: it merely relays encrypted instructions and responses between the user and \sys, and provisions pinned, encrypted memory regions accessible to the accelerator.
\begin{figure}[ht]
\centering
\includegraphics[width=0.7\linewidth,interpolate=false]{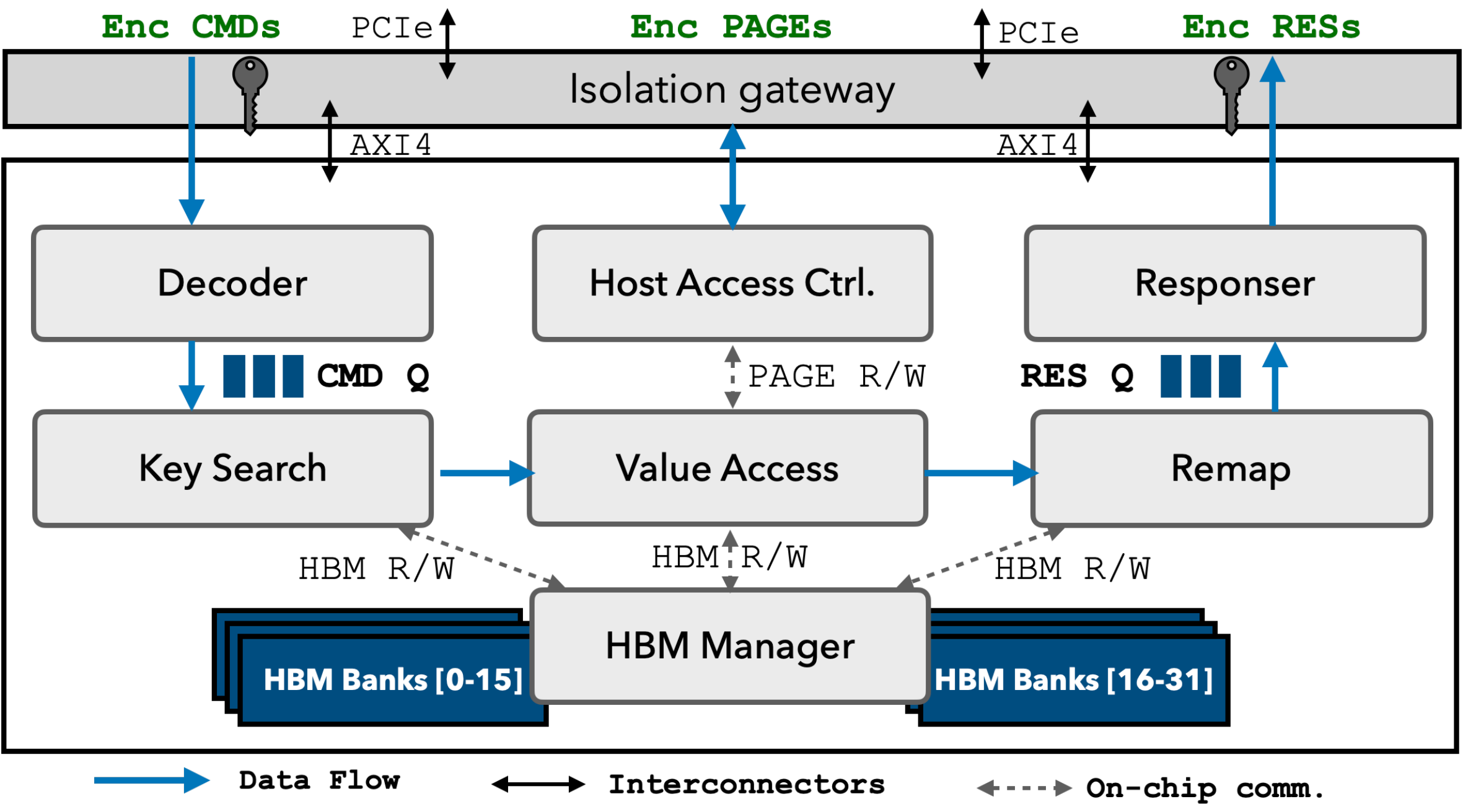}
\caption{Architectural overview of \sys.}
\label{fig:arch}
\end{figure}

Figure~\ref{fig:arch} shows an architectural overview of \sys. The execution flow consists of five main modules: decoder \texttt{(DEC)}, key search \texttt{(KS)}, value access \texttt{(VAC)}, remap (\texttt{RMP}), and responser \texttt{(RES)}. Two auxiliary modules assist for memory magagement: a host access controller (\texttt{HAC}) manages the accelerator-to-host memory accesses, and an HBM manager (\texttt{HM}) provides interfaces for other modules to access the on-package HBM banks. Next, we detail the design and execution flows of \sys. For brevity, we focus on OMAP transactions and omit standard TEE features.
%I/O isolations, memory encryption, and integrity. These can be found in our extended background (\S) or related literature~\cite{oh2020trustore,zhao2022shef,choi2024shieldcxl,armanuzzaman2022byotee}.

%we will assume users can remotely attest \sys~\cite{oh2020trustore,zhao2022shef}. All commands, responses and host pages will be encrypted and integrity validated (\S~\ref{sec:fpga}) by an isolation gateway~\cite{oh2020trustore,zhao2022shef,choi2024shieldcxl,armanuzzaman2022byotee}. 
%As these orthogonal techniques are well studied and discussed earlier in \S~\ref{sec:fpga}, 
%We omit these details and focus here on OMAP tasks for brevity.
%W.O.L.G., we use same notations from Alg~\ref{alg:omap} and logically partition the data into \( K+M \) HBM+host bins.

\vspace{4pt}\noindent{\bf \Circled{1} Initialization.} \sys undergoes a secure boot to initialize its internal states and allocates HBM storage for position map, eviction stash, and HBM bins. The host allocates a physically contiguous, pinned memory region (e.g., Hugepages~\cite{xilinx_xrt_hm}) so that \sys can directly access it. This memory is page aligned and locked in host DRAM to prevent swapping. The base physical address of the allocated memory is then provided to the \texttt{HAC} for address translation. \sys then writes the initial pages (dummy data) via \texttt{HAC} to populate the host memory. The host also creates two I/O buffers for pooling commands and responses.

%In addition to page storage, the host will also initiaze I/Os (e.g., commands and responses) with the isolation gatewau.

\vspace{4pt}\noindent{\bf \Circled{2} Command fetch.} \sys fetches and decrypts KV commands through the isolation gateway. Each decrypted command is a triplet of
$\langle {opcode} \mid {key} \mid {payload} \rangle$.  The 1-bit ${opcode}$ indicates whether it’s a \texttt{GET} or \texttt{PUT} operation. The payload holds the value field and is used during a \texttt{PUT} to insert, update, or delete data (deletion is triggered by a special reserved value). To prevent leakage, every instruction always includes all fields and is padded to the same fixed length. Users submit encrypted KV requests to the remote host, which relays them to \sys. The \texttt{DEC} module decodes the instruction and removes dummy payloads before passing it to \texttt{KS}.

\vspace{4pt}\noindent{\bf \Circled{3} Key search.} \sys adopts the hash based KVS design. Specifically, we maintain a global position map in HBM, implemented as a hash table, to track all inserted keys and the bins they map to. To look up a KV pair, the \texttt{KS} module hashes the input key, reads the hash entry via \texttt{HM} into the on-chip scratchpad, and searches the ``two'' mapped bins $p1$, and $p2$. Each $p_i$ denotes either an HBM address or a host page number. \texttt{KS} then forwards $p_1$, $p_2$, and the command received from \texttt{DEC} to \texttt{VAC}. If a lookup miss occurs, for instance, the key is not found in the position map, then \texttt{KS} then generates random values for $p_1$ and $p_2$, and signals to \texttt{VAC} that a new key is being inserted.

\vspace{4pt}\noindent{\bf \Circled{4} Value access.} Next, \texttt{VAC} fetches data using the bin handlers $p_1$ and $p_2$. For HBM bins, it requests \texttt{HM} to move data into an on-chip value buffer and clears the original memory line. For host-resident pages, \texttt{VAC} issues a read descriptor to \texttt{HAC}, specifying the target page number. \texttt{HAC} translates the page address, initiates a PCIe transfer to fetch the page, and stores the decrypted content into the on-chip scratchpad memory. \texttt{VAC} then scans it and places the target value into the value buffer. Note that the desired data may not be in the fetched pages and could instead reside in the eviction stash, so \texttt{VAC} also searches the stash. Once the value is retrieved, \texttt{VAC} performs the KV operation based on the command type.  For a \texttt{GET}, it writes the buffered value to \texttt{RES}. Otherwise, it updates the value buffer with the new payload (for insert/update), or clears it (for delete), and writes a confirmation code to \texttt{RES}.

\vspace{4pt}\noindent{\bf \Circled{5} Remapping and eviction.} The \texttt{RMP} module randomly selects two new bins, \( p'_1 \) and \( p'_2 \) and updates the position map with these new bins for the {\em key} that was just accessed. It then applies P2C load balancing to determine the final destination to place the data. This process is supported by an additional on-chip count list, which allows \texttt{RMP} to track the load of each bin. If the destination is an HBM bin, \texttt{RMP} issues an insertion request to \texttt{HM}, which then writes the buffered value to HBM. Otherwise, \texttt{RMP} adds the value to the eviction stash. The count list is then updated and the remapping completes. Next, \texttt{RMP} issues a page write-back (if applicable) and runs eviction. It searches the eviction stash for data mapped to the same page, adds them to the scratchpad page, and removes them from the stash. Finally, \texttt{RMP} submits a write-back descriptor to \texttt{HAC}, which initiates a PCIe transfer to overwrite the corresponding host page with the updated scratchpad page.

\vspace{4pt}\noindent{\bf \Circled{6} Response.} Once \texttt{VAC} returns the result, \texttt{RES} formats it into a fixed-length response and writes it to the host-side result buffer through the secure I/O interface. 
The response can be issued in parallel with remapping to save clock cycles.

\subsection{Co-design Optimizations}\label{sec:optimization}
Hardware designs are generally less flexible than software, which can make certain algorithms harder to implement ({\bf C-3}). To address this, \sys employs a series of co-design optimizations spanning both algorithm and hardware layers: it separates key and value storage, leverages reverse indexes for efficient eviction, integrates a specialized HBM controller for constant-cycle value operations, and optimizes memory layout to maximize HBM bandwidth. Below we discuss these in more detail.

\vspace{4pt}\noindent{\bf Decomposed HBM storage.} The host storage layout is straightforward: each fixed-size page holds multiple data tuples, each with a flag bit (to mark dummy entries), a key, and a value. The challenge lies in organizing value storage efficiently in HBM. As access patterns are hidden, maintaining a logical bin layout or dummy entries in HBM is unnecessary. Software KVSs~\cite{memcached2025,kv-application} often store keys and values together in the hash table, using linked lists to handle collisions and minimize fragmentation (Figure~\ref{fig:hash}.a). This works well because software has access to advanced abstractions like heap-allocated memory~\cite{OpenDSAHeapMemory}. Hardware, by contrast, lacks such flexibility. As a result, it typically uses fixed-size memory blocks for hash chaining~\cite{blott2013achieving}. Storing keys and values together in this context leads to significant memory waste due to large, partially unused bins (Figure~\ref{fig:hash}.b). Methods like Cuckoo hashing may reduce such overhead but require rehashing, which is hard to manage in hardware~\cite{blott2013achieving} and may leak timing information~\cite{hemenway2021alibi}.

%Software KVSs~\cite{memcached2025,kv-application} usually store keys and values together in the hash table structure, but these approaches present challenges in hardware. This is because, software has access to advanced memory abstractions, such as heap-allocated memory, which allows software designs to easily handle hash collisions using linked lists (Figure~\ref{fig:hash}.a) to minimize memory fragmentations. In contrast, hardware lacks such flexible memory models and typically operates at a much lower level. As a result, hardware designs often rely on fixed-size memory blocks—padded to the maximum expected load—for chaining~\cite{blott2013achieving}. Storing keys and values together leads to high memory waste due to large unused but allocated bins (Figure~\ref{fig:hash}.b).  

%Software can efficiently handle hash collisions using chaining with linked lists (Figure~\ref{fig:hash}.a), but hardware lacks advanced memory features like linked list, so it typically uses fixed-size (padded to max load) memory blocks for chaining~\cite{blott2013achieving}. 

\begin{figure}[ht]
\centering
\includegraphics[width=0.92\linewidth,interpolate=false]{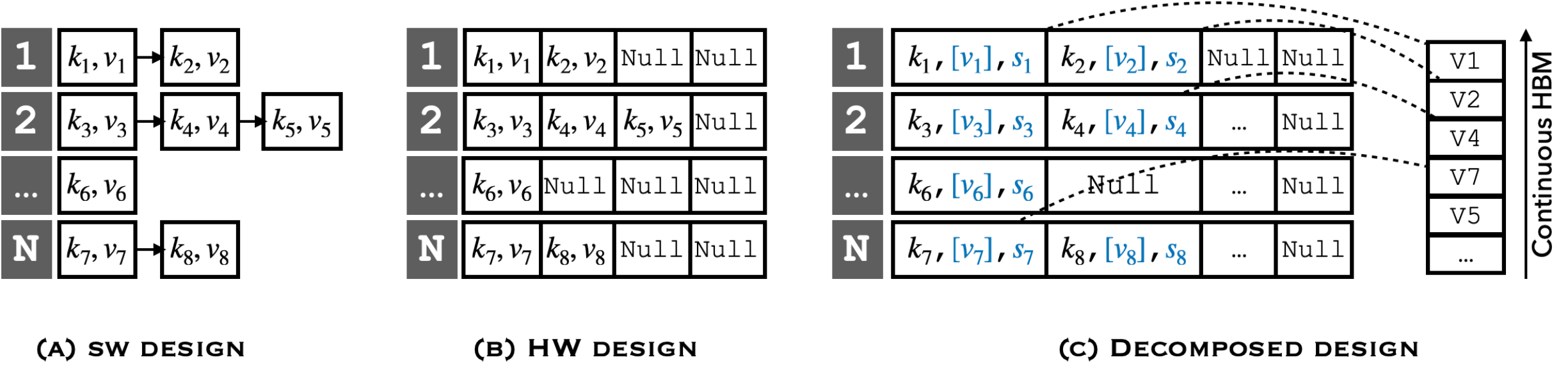}
\vspace{-1em}
\caption{Comparison of different storage design.}
\label{fig:hash}
\end{figure}

\sys resolves this challenge based on a novel decomposed storage layout. First, we store only keys in the hash table (position map) and use lighgweight indexes that reference values stored in contiguous HBM space (Figure~\ref{fig:hash}.c). Since keys are typically much smaller than values~\cite{jiang2024compassdb, apple_nsubiquitouskeyvaluestore, memcached_keysize, apify_kvstore, geeksforgeeks2024store}, the slots in the hash table remain compact, so unused entries contribute little to fragmentation in the overall KV storage. We also add a flag bit to each hash table entry to indicate whether the corresponding value resides in HBM or in the eviction stash. This way, we avoid duplicating storage for the stash. To further optimize space, we apply an aggressive load-balancing strategy to compress the position map. Specifically, we apply $d$ hash functions and assign each key to the entry with the lowest current load. According to the generalized power-of-$d$ choices theorem~\cite{mitzenmacher2001power, sitaraman2001power}, for $N$ keys and a hash table with $B$ entries, the maximum load per bin is upper bounded by \smash{$\frac{N}{B} + O\left(\frac{\log_2 \log_2 N}{\log d}\right)$}. In practice, setting $d = 4$ and $B = N/16$ suffices. In addition, all $d$ hash entries can be fetched in a burst using multiple HBM channels and searched in parallel using priority-encoder-based circuits~\cite{huang2010full, balobas2016low}, with at most an $O(\log d)$ increase in circuit depth~\cite{arora2009computational}. Hence, probing multiple entries incurs only negligible clock cycle overhead compared to searching a single entry.

%\sys addresses this by adopting a novel decomposed storage layout. First, we store only keys in the hash table (position map) with lightweight indexes pointing to values in contiguous HBM space (Figure~\ref{fig:hash}.c). Since keys are typically small, this minimizes fragmentation overheads. Moreover, we use an extra flag bit in hash table entries to mark whether the value is in HBM or the eviction stash. This way, we avoid duplicating storage for the stash. Finally, we adopt a more aggressive load-balancing to compress the position map (hash table): We employ $d$ hash functions and place each key in the entry with the lowest current load. As per the generalized power-of-$d$ choices theorem~\cite{mitzenmacher2001power, sitaraman2001power}, for $N$ keys and a hash table with $B$ entries, the max bin load is tight as \smash{$\frac{N}{B} + O\left(\frac{\log_2 \log_2 N}{\log d}\right)$}. In practice, setting $d = 4$, and $B=N/16$ would suffice. Note that the searches of all hash entries can be done in parallel by using piority encoder based circuits so that the searches of multiple hash entries does not actually adds much clock cycles than search a single entry.

%For example, with $N = 10^9$, $B = \frac{N}{8}$, and $d = 4$ hash functions, a per-entry array of size 10 suffices to hold all keys with high probability (see Appendix for empirical validation).

 %Each key is hashed using two hash functions and placed in the less-loaded entry. Hence, allocating $O(\log_2\log_2N)$ blocks for each hash chain suffices (proof is same as Claim~\ref{claim:bin-sz}). 

\vspace{4pt}\noindent{\bf Dynamic HBM management.} While the aforementioned storage layout reduces fragmentation, it can lead to inefficient insertion costs. In the worst case, finding free space for a new value may require a linear scan of the contiguous HBM region, which leads to an $O(N)$ bandwidth overhead. To address this, we design a dynamic allocation mechanism in \texttt{HM}, using a dual-port ring buffer to efficiently track free addresses in HBM, as shown in Figure~\ref{fig:ring}.

\vspace{1mm}
\begin{figure}[ht]
\centering
\includegraphics[width=0.72\linewidth,interpolate=false]{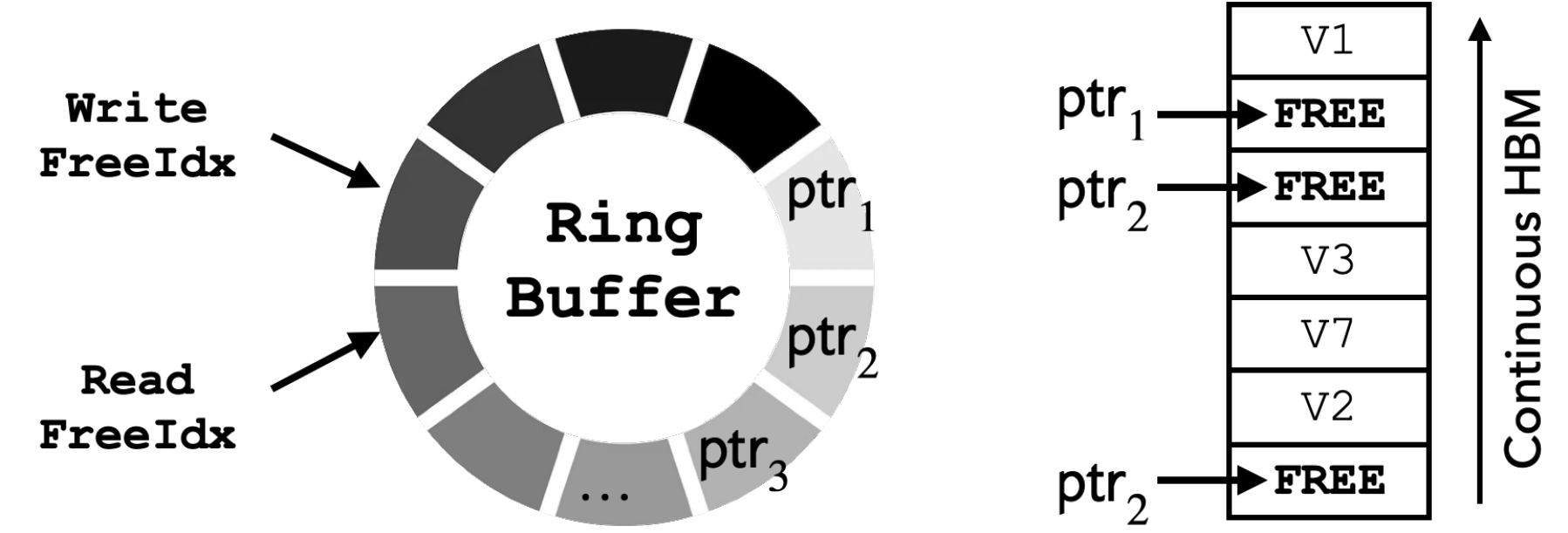}
\caption{Free address ring buffer.}
\label{fig:ring}
\end{figure}

Initially, the ring buffer is preloaded with all available HBM addresses for value stores. When space is needed for, e.g., inserting new data or remapping a page-ed value into the HBM—an address is dequeued. For other cases where data is removed—due to deletion or eviction to pages—the freed address is returned to the ring buffer. This design allows for constant-time insertions. Note that the ring buffer can remain relatively compact; for example, a 50MB buffer can address over 10 million in-HBM values. As such, the buffer can be placed on-chip rather than taking up HBM banks.

%Therefore, we place it in on-chip BRAM by default, leveraging its low latency to further enhance performance.

\vspace{4pt}\noindent{\bf Fast eviction with reverse index.}  A key performance bottleneck in the current design is eviction, as it requires scanning the entire position map–an \(O(N)\) operation–to locate data mapped to a specific host page. To mitigate this overhead, we introduce a lightweight reverse index: a linear table with \(M\) entries, one per host page. Each entry maintains a small list of pointers to position map entries that reference data currently staged for eviction on that page. As data placement is load-balanced, each reverse index entry holds at most \(\ell_{\max}\) pointers. Importantly, since a key’s location in the position map is stable after insertion (e.g., it is not remapped), the reverse index can be efficiently maintained. For example, during each lookup, \texttt{KS} passes the position map pointer of the accessed key to \texttt{RMP}. If the key is later added to the eviction stash, \texttt{RMP} simply adds this pointer to the corresponding reverse index entry. Similar to the ring buffer before, the reverse index is compact and can be placed on-chip to maximize performance.

%Another performance challenge is eviction. In the current design, eviction requires worst-case \(O(N)\) complexity, as it involves scanning the entire position map to find data mapped to a specific page.  To address this, we introduce an additional reverse index, a table with \(M\) entries, where each entry corresponds to a host page. Each entry \( i \) maintains a list of pointers to the position map, identifying values currently in the eviction stash and mapped to page \( i \).  During evictions, \texttt{RMP} simply looks up the reverse index to locate the data requiring eviction, and then update the reverse index after successful eviction. Since data in pages are load-balanced, the reverse index maintains similar balance, so that each entry contains at most \( \ell_{\max} \) pointers. Moreover, in our design, as long as a key has inserted, its position in position map is not changed by remapping, so that the reverse index can be quickly build updated, for instance, the \texttt{KS} will pass the pointer to this searched key to \texttt{RMP} for every request, so that if the key is then added to the eviction, \texttt{RMP} can simply add this key and pointer to the reverse index.

\vspace{4pt}\noindent{\bf Memory optimizations.} HBM typically consists of multiple banks~\cite{AMD_Alveo_U55C} with each bank connected to its own dedicated memory channel. We leverage this architecture to enable parallel data movement and further accelerate execution. First, we allocate dedicated memory banks for commands/responses, position maps, and HBM values, thus preventing contention and enabling high-performance data movement. Additionally, for the position map table (as shwon in Figure~\ref{fig:hash}.c), we partition storage across multiple banks, with each column assigned to one bank. When \texttt{KS} loads hash entries (rows), it fetches a block from each bank simultaneously, achieving fully parallelized data movement.

\vspace{4pt}\noindent{\bf Fast data initialization.} In many cases, setting up \sys requires more than just initializing execution environments (\S~\ref{sec:arch-details}); it also requires loading outsourced KV data. An intuitive approach is to issue individual insertion requests to \sys. However, this process can be further accelerated: for instance, by letting the DO pre-process the data and organize it into $K+M$ logical bins using random mapping and P2C. A table that stores keys to bins mappings is also prepared. Both data structures are securely outsourced to the remote host. During initialization, \sys loads the outsourced data into the corresponding physical regions, sets up the position map using the mapping table, and initializes other relevant states.

%\vspace{2mm}\noindent{\bf Host runtime.} Our design includes a minimal host runtime dedicated solely to relaying messages (commands and responses). This runtime is necessary when the accelerator lacks built-in network stacks, requiring remote communication with users to be routed through the host and a separate network adapter. However, advanced accelerators, such as the Xilinx U55C, which feature integrated network modules, allow direct communication without host intervention. In such cases, even message relay can be eliminated, and potentially provide oblivious R(emote)DMAs. Exploring this possibility is an interesting future research direction and may have independent significance in accelerator-driven architectures.

%Although we will design \sys to access host DRAMs for storing larger data, a dedicated oblivious DMA controller will be placed inside \sys to self manage I/Os, DRAM accesses and address translations. As such the host DRAM accesses fully bypasses host CPU and kernel features. 

\subsection{Analysis.}\label{subsec:arch-analysis}
{\bf Overhead analysis.} We analyze the overhead of \sys, focusing on two standard OMAP metrics: communication round and bandwidth blown up. %These represent the asymptotic overheads in access rounds and total memory bandwidth for serving a single key-value access.

\begin{claim}\sys incurs a constant round overhead and a total bandwidth blowup of $O(1) + O(\log_2\log_2N)$.
\end{claim}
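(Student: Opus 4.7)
\begin{proofsketch}
The plan is to split the claim into its two components and handle each separately, exploiting the fundamental asymmetry of the threat model: only off-package traffic (host DRAM accesses via PCIe and the encrypted command/response I/O) counts toward adversary-observable bandwidth and rounds, whereas HBM accesses, on-chip scratchpad movements, and position-map lookups are unobservable by the isolation assumption (§\ref{sec:fpga}, §\ref{sec:threat}). This observation turns the analysis from a count over all memory operations into a count over PCIe page transfers plus the fixed-size I/O envelope.

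For the round complexity, I would walk through the six architectural stages in §\ref{sec:arch-details} and tag each one as either observable or unobservable. Command fetch and response writeback each contribute one fixed-size I/O round through the secure gateway. Position-map lookup, HBM bin fetches, on-chip stash search, ring-buffer allocation, and reverse-index maintenance all happen inside the isolation boundary and contribute zero observable rounds. The only data-dependent off-package activity is the host page read at line~7 of Alg.~\ref{alg:omap} and its corresponding writeback at line~10, each of which is a single PCIe descriptor; when both mapped bins happen to be HBM bins, these reduce to a dummy page access to preserve indistinguishability, but the number of rounds remains fixed. Summing gives a constant number of rounds per operation independent of $N$.

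For the bandwidth blowup, I would first fix a baseline: a non-oblivious KV access transfers one key–value entry, so the blowup is the ratio of bytes crossing the isolation boundary to the size of a single entry. The bytes transferred per access decompose into (i) the fixed-length encrypted command/response pair, contributing $O(1)$, and (ii) the host page read plus writeback, each of size exactly $\ell_{\max}$ entries because pages are provisioned to the worst-case bin capacity (§\ref{sec:sizes}). Invoking Claim~\ref{claim:bin-sz} gives $\ell_{\max} = c + O(\log_2\log_2 N)$, so the page-transfer contribution is $2\ell_{\max} = O(1) + O(\log_2\log_2 N)$ entries. Adding the two contributions yields the stated bound. The P2C load-balancing step is what makes this tight: without it the max bin load would degrade to $O(\log N / \log\log N)$ and the second term would blow up accordingly.

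The main obstacle I anticipate is justifying that nothing else sneaks into the observable traffic. In particular, I would need to argue carefully that (a) the reverse index and free-address ring buffer can always be kept on-chip so their updates never touch PCIe (this follows from their $O(M)$ and $O(K)$ entry counts being small relative to HBM capacity, as argued in §\ref{sec:optimization}), and (b) the position map, whose size grows with $N$, still fits entirely in HBM under the parameter-selection recipe given at the end of §\ref{sec:sizes}, so its $O(\log_2\log_2 N)$-slot probes from the power-of-$d$ choices design contribute to unobservable HBM bandwidth rather than to the blowup being bounded here. Once these two points are pinned down, the combination of the per-access round count and per-access byte count gives the claim.
\end{proofsketch}
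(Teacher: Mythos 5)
Your proposal reaches the right bound, but it takes a genuinely different accounting route from the paper, and the difference matters. You count only \emph{adversary-observable} traffic (PCIe page transfers plus the fixed-size command/response envelope) and dismiss everything inside the isolation boundary as free. The paper's proof instead counts \emph{all} memory traffic as bandwidth: it explicitly charges the key-search stage $O(\frac{N}{B}+\frac{\log_2\log_2 N}{\log_2 d})$ for fetching and scanning $d$ hash entries of the position map (bounded via the power-of-$d$-choices load bound), charges $O(c+\log_2\log_2 N)$ for the page read, write, and eviction via $\ell_{\max}$, and charges $O(1)$ each for the position-map update, the HBM value insertion (via the ring buffer), and the reverse-index update. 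The claim here is a \emph{performance} metric in the standard OMAP sense, not a leakage metric; under your accounting an algorithm doing $O(N)$ work inside HBM would still have ``$O(1)$ bandwidth,'' which is not what the claim is asserting. It so happens that the HBM-side traffic is also $O(1)+O(\log_2\log_2 N)$ thanks to the decomposed position-map design, but that is precisely the part your argument never verifies — you relegate it to an ``anticipated obstacle'' about fitting structures in HBM rather than bounding the number of words actually moved. To match the paper's claim you would need to add the key-search term and the $O(1)$ update terms to your tally.

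One smaller factual point: you assert that when both mapped bins are HBM bins, the system issues a dummy host-page access to preserve indistinguishability. It does not — in Algorithm~\ref{alg:omap} and in the simulator of Claim~\ref{claim:obliv}, an HBM bin simply results in no off-package access, and obliviousness follows because the bin pair is uniformly random and hence the read-zero/one/two-pages distribution is publicly simulatable. This does not affect the upper bound (a dummy access would only add $O(\ell_{\max})$), but it misdescribes the design. Your round-count argument and your use of Claim~\ref{claim:bin-sz} to bound the page transfer at $\ell_{\max}=c+O(\log_2\log_2 N)$ are otherwise consistent with the paper.
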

\begin{proof} It is evident that the round overhead remains constant since the execution stages of \sys are fixed per access. Thus, we focus on the bandwidth overhead. First, the key search stage incurs an \smash{$O(\frac{N}{B}+\frac{\log_2\log_2 N}{\log_2 d})$} bandwidth overhead, as the accelerator must fetch $d$ hash entries and linearly scan them. Note that $N/B$ is a constant, and when $d\geq 4$, \smash{$\frac{\log_2\log_2 N}{\log_2 d}$} can be viewed as small as a constant. The page read, write and eviction bandwidth costs are all subject to $\ell_{\max}$, and thus is $O(c + \log_2\log_2 N)$. Remapping assigns the accessed key to new bins, requiring an update to the position map. However, the key’s entry in the map remains unchanged, allowing for an $O(1)$ update. We also insert the new value into the HBM store and update the reverse index pointer—both operations are $O(1)$, as previously discussed. Altogether, the total bandwidth overhead remains within $O(1) + O(\log_2\log_2 N)$.
%covering the search and update of the reverse index, retrieval of corresponding values for eviction, and writing back up to two pages. 
\end{proof}

%\vspace{3pt}\noindent{\bf Security analysis.}
\vspace{4pt}\noindent{\bf HBM usage.} We now analyze the total HBM needed by \sys. Recall that three main components are stored in HBM: the position map, the HBM store, and the eviction stash. 

\begin{claim}[HBM usage]\label{claim:hbm_ratio}Let $\beta_1$ and $\beta_2$ be the upper bounds from Claims~\ref{claim:sum-bin-loads} and~\ref{claim:stash-sz}, respectively, and let $\mathit{ks}$ and $\mathit{vs}$ denote the key and value lengths in bits. Then, the total HBM usage of \sys is bounded by \smash{$(\beta_1+\beta_2)vs + (2\log_2N+ks)(N+\frac{B\log_2\log_2 N}{\log_2 d})$}.
\end{claim}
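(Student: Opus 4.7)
\begin{proofsketch}
The plan is to decompose the HBM footprint into the three independent regions identified in \S\ref{sec:arch-details} and \S\ref{sec:optimization}—the in-HBM value store $V_{\mathsf{hbm}}$, the eviction stash $V_{\mathsf{st}}$, and the position map $\mathsf{MAP}_{\mathsf{p}}$—bound each region separately, and then sum the contributions. Because each component has already been characterized either by a prior claim or by the co-design description, most of the work is bookkeeping; the only nontrivial step is pinning down the position map size.

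I would first handle the two value regions, since these are direct applications of the dimensional bounds proved earlier. By Claim~\ref{claim:sum-bin-loads}, the aggregate load of the $K$ HBM bins is at most $\beta_1$, and since the decomposed layout (Fig.~\ref{fig:hash}.c) stores only raw values of width $\mathit{vs}$ in the contiguous HBM value region, the footprint of $V_{\mathsf{hbm}}$ is at most $\beta_1\cdot \mathit{vs}$ bits. Analogously, Claim~\ref{claim:stash-sz} caps the stash occupancy by $\beta_2$, contributing another $\beta_2\cdot \mathit{vs}$ bits (the reverse-index pointers and the free-address ring buffer sit in on-chip scratchpad, per \S\ref{sec:optimization}, and therefore do not count toward HBM).

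Next I would handle the position map, which is where the double-logarithmic term comes from. The map is a hash table with $B$ bins accessed via $d$ hash functions with power-of-$d$ load balancing. Applying the generalized power-of-$d$ choices theorem cited in \S\ref{sec:optimization}, the maximum load per bin is bounded by $\tfrac{N}{B}+O\!\left(\tfrac{\log_2\log_2 N}{\log_2 d}\right)$ with high probability, so the total number of slots is at most
\[
B\cdot\Bigl(\tfrac{N}{B}+O\!\left(\tfrac{\log_2\log_2 N}{\log_2 d}\right)\Bigr)=N+\tfrac{B\log_2\log_2 N}{\log_2 d}.
\]
Each slot stores one key ($\mathit{ks}$ bits) together with its two bin handles, each of width $\log_2 N$ bits (sufficient since the number of logical bins $K+M\le N$), giving a per-slot cost of $2\log_2 N+\mathit{ks}$. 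Summing the three components yields the claimed bound.

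The main obstacle, if any, is justifying that no other HBM-resident metadata needs to be accounted for. I would argue this explicitly by walking through the architecture of \S\ref{sec:arch-details}: the on-chip scratchpad, value buffer, count list, reverse index, and free-address ring buffer are all placed in BRAM/URAM per the co-design choices in \S\ref{sec:optimization}, and the command/response buffers occupy negligible, fixed HBM banks that can be absorbed into lower-order terms. With this inventory verified, the three contributions above exhaust the HBM budget and the stated sum is a valid upper bound.
\end{proofsketch}
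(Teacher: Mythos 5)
Your proposal is correct and follows essentially the same decomposition as the paper's proof: the combined HBM value store and stash are bounded by $(\beta_1+\beta_2)\mathit{vs}$ via Claims~\ref{claim:sum-bin-loads} and~\ref{claim:stash-sz}, and the position map is bounded by counting $N+\frac{B\log_2\log_2 N}{\log_2 d}$ slots at $2\log_2 N+\mathit{ks}$ bits each. Your additional steps—deriving the slot count explicitly from the power-of-$d$ choices bound and inventorying the on-chip structures that do not consume HBM—are details the paper leaves implicit, but the argument is the same.
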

\begin{proof}
First, since HBM store and eviction stash are combined into a continuous store, the size is at most $(\beta_1+\beta_2)vs$ bits. The position map has total \smash{$(N+\frac{B\log_2\log_2 N}{\log_2 d})$} blocks, where each block contains a key, plus indexes to two logical bins. So the size is at most \smash{$(2\log_2N+ks) (N+\frac{B\log_2\log_2 N}{\log_2 d})$ bits}. Sum the two yields Claim~\ref{claim:hbm_ratio}.
\end{proof}

The HBM store is optional and used only when capacity permits; it can be disabled (e.g., $\alpha = 0$) to prioritize supporting larger datasets within limited HBM. For instance, consider a case with 1 billion data, each with a 32-bit key and a 64B value (cache-line size), and \sys is configured with $c = 8$, $B = \frac{N}{16}$, and $d = 4$. If we disable HBM store, then the total HBM required is only 26\% of the raw data size. Moreover, real-world KVSs often use small keys with large values~\cite{jiang2024compassdb, apple_nsubiquitouskeyvaluestore, memcached_keysize, apify_kvstore, geeksforgeeks2024store, kv-application}. Under such settings, the HBM usage can be further reduced—to 12\% for 256B values and 7\% for 1KB values. On the other hand, modern accelerators already offer substantial HBM capacity. For instance, HBM FPGAs like the Alveo V80 provide 32GB~\cite{amd-alveo-v80}, NVIDIA's H100 features 80GB~\cite{nvidia_confidential_computing_h100}, and newer ASICs such as AMD's MI325 offer up to 256GB~\cite{amd-mi325}. 
%These capacities would allow us to support 450GB to nearly 3TB data.

\vspace{4pt}\noindent{\bf Obliviousness analysis.} The main KV search logic in \sys directly follows our algorithm (Alg.~\ref{alg:omap}), and its leakage profile matches the assumptions of Alg.~\ref{alg:omap}, so that the same security guarantees as Claim~\ref{claim:obliv} holds. Initialization follows a fixed access pattern, using sequential reads/writes to load DO-prepared data into designated regions. The DO relies only on public parameters (e.g., bin sizes, upper bounds, and total data size $N$), so no data-dependent information is leaked. \sys also prevents timing leakage in KV processing: each operation (e.g., \texttt{GET}, \texttt{PUT}) executes a fixed sequence with constant-time steps. While different keys may incur different latencies (e.g., accessing HBM values vs. host memory values), this does not compromise obliviousness, as proved in Claim~\ref{claim:obliv}.

\section{Evaluations}\label{sec:eva}
In this section, we detail the \sys prototype implementation and provide experiments and benchmarks to evaluate its effectiveness.
%We begin by describing our prototype and testbed setup (\S~\ref{sec:protype}), followed by a resource analysis (\S~\ref{sec:area}), a benchmark comparison with SOTA OMAPs (\S~\ref{sec:benchmark}), and scaling experiments (\S~\ref{sebsec:scale}).

\subsection{Prototype and Testbed}\label{sec:protype}
We implement our \sys prototype on a Xilinx U55C FPGA, which features 16GB of HBM2e. The card is installed on a Dell Precision workstation in a \texttt{PCIe\_Gen3x16} slot, and the max payload size (MPS) is 512 bytes. The workstation has a 4.1GHz Intel Xeon W3-2423 CPU and 128GB of RAM. All development and experiments are conducted on this testbed, running Ubuntu 22.04.5 LTS (kernel 
 5.15.0-131-generic). We show a photo of our platform in \S~\ref{subsec:platform}.

The host runtime is implemented in C++ using the Xilinx Runtime (XRT) library (version 2.17.391) and compiled with GCC 11.4.0. It manages \sys initialization, preloads data, and handles the relay of KV commands and responses. All hardware modules are developed using High-Level Synthesis (HLS). We mainly build two kernels: \texttt{init\_kernel}, responsible for one-time initialization, and \texttt{chain\_kernel}, which streams OMAP logic to process KV commands. Memory interfaces are built using standard AXI4 busses. Data movement is handled via \texttt{m\_axi} ports, while \texttt{s\_axilite} is used for control and configuration. For host memory access, we use Xilinx Host Access Mode (HAM)~\cite{xilinx_ham}.
%which allows RTL kernels to directly talk to host-pinned memory through AXI busses. 
Internal communications, commands and responses I/Os are all implemented using \texttt{hls::stream}. The kernels are written in C++ and synthesized into \texttt{.xclbin} binaries using the default Vitis HLS flow, with \emph{no compiler optimization flags enabled.}
We target a 300 MHz clock frequency (3.33 ns period), and the design meets timing with a 3.10 ns critical path under a 0.90 ns clock uncertainty. The \sys prototype and all benchmark codes are open-sourced at: \url{https://zenodo.org/records/16905537}.

\vspace{3pt}\noindent\re{{\bf Parameters and memory settings.} Unless noted otherwise, we set $c = \frac{N}{B} = 8$ and $\alpha = \frac{K}{K+N} = 0.2$. This means each logical bin holds an average of 8 tuples, with 20\% of tuples placed in $V_{\mathsf{hbm}}$ and the rest in host memory. Both HBM and host memory are pre-allocated for each object, with sizes computed using the analytical upper bound described in \S~\ref{sec:sizes}. The synthesis process then ensures memory usage stays within these pre-allocated sizes.}

\begin{table}[]
\caption{Max resource usage (post-route)}
\label{tab:usage}
\vspace{-2mm}
\scalebox{0.62}{
\begin{tabular}{|cl|c|c|c|c|c|c|}
\hline
\multicolumn{2}{|c|}{\textbf{Name}}          & \textbf{LUT}        & \textbf{LUTAsMem} & \textbf{REG}        & \textbf{BRAM}    & \textbf{URAM} & \textbf{DSP} \\ \hline
\multicolumn{2}{|c|}{\textbf{Total aval.}}   & 1303680             & 600201            & 2607360             & 2016             & 960           & 9024         \\ \hline
\multicolumn{2}{|c|}{\textbf{Total use}}     & 226591 {[}17.4\%{]} & 22426 {[}3.7\%{]} & 312728 {[}12.0\%{]} & 458 {[}22.7\%{]} & 0             & 4            \\ \hline
\multicolumn{2}{|c|}{\textbf{Platform}}      & 152237 {[}11.7\%{]} & 17886 {[}2.9\%{]} & 223980 {[}8.6\%{]}  & 239 {[}11.9\%{]} & 0             & 4            \\ \hline
\multicolumn{2}{|c|}{\textbf{Kernel total}}  & 74354 {[}5.7\%{]}   & 4540 {[}0.8\%{]}  & 88748 {[}3.4\%{]}   & 219 {[}10.9\%{]} & 0             & 0            \\
\multicolumn{2}{|c|}{\texttt{chain\_kernel}} & 73958 {[}5.7\%{]}   & 4540 {[}0.8\%{]}  & 88304 {[}3.4\%{]}   & 219 {[}10.9\%{]} & 0             & 0            \\
\multicolumn{2}{|c|}{\texttt{init\_kernel}}  & 396 {[}0.03\%{]}    & 0 {[}0.00\%{]}    & 444 {[}0.02\%{]}    & 0 {[}0.00\%{]}   & 0             & 0            \\ \hline
\end{tabular}}
\vspace{1mm}
\end{table}
\subsection{FPGA Resource Utilization}\label{sec:area}
We report the post-route FPGA resource utilization of \sys in Table~\ref{tab:usage}. Below, we conduct detailed discussions: {\em (i) Logic resouce.} Look-Up Tables (LUTs) and Registers (REGs) are key resources used to implement control logic and manage data flow. Their combined usage typically reflects the logic complexity of a hardware design. As shown in Table~\ref{tab:usage}, the kernel-specified utilization of both LUTs and REGs remains below 6\%, indicating that \sys’s logic is simple and compact; {\em (ii) On-chip memory.} A large portion of on-chip memory remains available, with only about 11\% of BRAM utilized by \sys kernels. This memory is primarily used for on-chip buffers, indexes, and scratchpad memory during the build phase; {\em (iii) Computing resource.} \sys does not handle compute-intensive workloads, and thus it leaves all Digital Signal Processor (DSP) slices unused (the 4 slices are used by U55C shell). In general, \sys's hardware design is simple concise, and resource-efficient.
\begin{table*}[]
\caption{End-to-end comparison of OMAP designs}
\vspace{-1em}
\label{tab:e2e}
\scalebox{0.85}{
\begin{tabular}{ccccccccccc}
\hline
\multicolumn{1}{|c|}{\multirow{2}{*}{\textbf{Group}}} & \multicolumn{1}{c|}{\multirow{2}{*}{\textbf{Type}}} & \multicolumn{1}{c|}{\multirow{2}{*}{\textbf{Security}}} & \multicolumn{2}{c|}{\textbf{Complexity}}                                & \multicolumn{2}{c|}{\textbf{Init time}}                                & \multicolumn{3}{c|}{\textbf{Query time}}                                                             & \multicolumn{1}{c|}{\multirow{2}{*}{\textbf{\begin{tabular}[c]{@{}c@{}}Mem\\ Overhead\end{tabular}}}} \\ \cline{4-10}
\multicolumn{1}{|c|}{}                                & \multicolumn{1}{c|}{}                               & \multicolumn{1}{c|}{}                                   & \multicolumn{1}{c|}{round}         & \multicolumn{1}{c|}{bandwidth}     & \multicolumn{1}{c|}{time (s)}      & \multicolumn{1}{c|}{slow down}    & \multicolumn{1}{c|}{time (s)}     & \multicolumn{1}{c|}{slow down}    & \multicolumn{1}{c|}{QPS (K)}      & \multicolumn{1}{c|}{}                                                                                 \\ \hline
\multicolumn{1}{|c|}{H2O2RAM}                         & \multicolumn{1}{c|}{Hash}                           & \multicolumn{1}{c|}{DO}                                 & \multicolumn{1}{c|}{$O(\log_2N)$}          & \multicolumn{1}{c|}{$O(\log^2_2N)$}          & \multicolumn{1}{c|}{291.2}           & \multicolumn{1}{c|}{791$\times$}         & \multicolumn{1}{c|}{0.96}         & \multicolumn{1}{c|}{960$\times$}       & \multicolumn{1}{c|}{5.2}            & \multicolumn{1}{c|}{12$\times$~\cite{zheng2024h}}                                                                              \\
\multicolumn{1}{|c|}{EnigMap}                         & \multicolumn{1}{c|}{Tree}                           & \multicolumn{1}{c|}{DO}                                 & \multicolumn{1}{c|}{$O(\log_2N)$}          & \multicolumn{1}{c|}{$O(\log^2_2N)$}          & \multicolumn{1}{c|}{\textcolor{gray}{4.55$^{\ddagger}$}}          & \multicolumn{1}{c|}{\textcolor{gray}{12.4$\times$$^{\ddagger}$}}        & \multicolumn{1}{c|}{\textcolor{gray}{11.41$^{\ddagger}$}}       & \multicolumn{1}{c|}{\textcolor{gray}{11410$\times$$^{\ddagger}$}}      & \multicolumn{1}{c|}{\textcolor{gray}{0.4$^{\ddagger}$}}          & \multicolumn{1}{c|}{$60\times$~\cite{zheng2024h}}                                                                            \\
\multicolumn{1}{|c|}{Facebook}                   & \multicolumn{1}{c|}{Tree}                           & \multicolumn{1}{c|}{DO}                                 & \multicolumn{1}{c|}{$O(\log_2N)$}          & \multicolumn{1}{c|}{$O(\log^2_2N)$}          & \multicolumn{1}{c|}{42.31}         & \multicolumn{1}{c|}{114.9$\times$}        & \multicolumn{1}{c|}{2.31}        & \multicolumn{1}{c|}{2310$\times$}       & \multicolumn{1}{c|}{2.1}          & \multicolumn{1}{c|}{N/A}                                                                              \\
%\multicolumn{1}{|c|}{Oblivious Labs}             & \multicolumn{1}{c|}{Hash+Tree}                      & \multicolumn{1}{c|}{Unknown$^2$}                            & \multicolumn{1}{c|}{Unknown}       & \multicolumn{1}{c|}{Unknown}       & \multicolumn{1}{c|}{42.97}         & \multicolumn{1}{c|}{116.7$\times$}        & \multicolumn{1}{c|}{26.35}         & \multicolumn{1}{c|}{125.5$\times$}        & \multicolumn{1}{c|}{38}           & \multicolumn{1}{c|}{13.1$\times$}                                                                            \\
\multicolumn{1}{|c|}{CPU baseline}             & \multicolumn{1}{c|}{Hash}                      & \multicolumn{1}{c|}{Non-private}                            & \multicolumn{1}{c|}{$O(1)$}       & \multicolumn{1}{c|}{avg. $O(1)$}       & \multicolumn{1}{c|}{0.368}         & \multicolumn{1}{c|}{--}        & \multicolumn{1}{c|}{0.001}         & \multicolumn{1}{c|}{--}        & \multicolumn{1}{c|}{4761}           & \multicolumn{1}{c|}{--}                                                                            \\\hline
\multicolumn{1}{|c|}{\textbf{BOLT (default)}}         & \multicolumn{1}{c|}{\textbf{HBM+Bin}}               & \multicolumn{1}{c|}{\bf DO}                                 & \multicolumn{1}{c|}{\textbf{$\mathbf{O(1)}$}} & \multicolumn{1}{c|}{$\mathbf{O(\log_2\log_2N)}$} & \multicolumn{1}{c|}{\textbf{1.41}} & \multicolumn{1}{c|}{\textbf{1.08$\times$}} & \multicolumn{1}{c|}{\textbf{0.028}} & \multicolumn{1}{c|}{\textbf{2.2$\times$}} & \multicolumn{1}{c|}{\textbf{174}} & \multicolumn{1}{c|}{\textbf{6.18$\times$}}                                                                                 \\
\multicolumn{1}{|c|}{\textbf{BOLT (small HBM)}}       & \multicolumn{1}{c|}{\textbf{HBM+Bin}}               & \multicolumn{1}{c|}{\bf DO}                                 & \multicolumn{1}{c|}{\textbf{$\mathbf{O(1)}$}} & \multicolumn{1}{c|}{$\mathbf{O(\log_2\log_2N)}$} & \multicolumn{1}{c|}{\textbf{1.39}} & \multicolumn{1}{c|}{\textbf{1.06$\times$}} & \multicolumn{1}{c|}{\textbf{0.032}} & \multicolumn{1}{c|}{\textbf{2.5$\times$}} & \multicolumn{1}{c|}{\textbf{155}} & \multicolumn{1}{c|}{\textbf{6.52$\times$}}                                                                                 \\
\multicolumn{1}{|c|}{\textbf{BOLT (large HBM)}}       & \multicolumn{1}{c|}{\textbf{HBM+Bin}}               & \multicolumn{1}{c|}{\bf DO}                                 & \multicolumn{1}{c|}{$\mathbf{O(1)}$} & \multicolumn{1}{c|}{$\mathbf{O(\log_2\log_2N)}$} & \multicolumn{1}{c|}{\textbf{1.35}} & \multicolumn{1}{c|}{\textbf{1.04$\times$}} & \multicolumn{1}{c|}{\textbf{0.023}} & \multicolumn{1}{c|}{\textbf{1.8$\times$}} & \multicolumn{1}{c|}{\textbf{219}} & \multicolumn{1}{c|}{\textbf{5.63$\times$}}                                                                                \\ 
\multicolumn{1}{|c|}{FPGA baseline}             & \multicolumn{1}{c|}{Hash (HBM)}                      & \multicolumn{1}{c|}{Non-private}                            & \multicolumn{1}{c|}{$O(1)$}       & \multicolumn{1}{c|}{avg. $O(1)$}       & \multicolumn{1}{c|}{1.31}         & \multicolumn{1}{c|}{--}        & \multicolumn{1}{c|}{0.013}         & \multicolumn{1}{c|}{--}        & \multicolumn{1}{c|}{381}           & \multicolumn{1}{c|}{--}                                                                            \\\hline

\multicolumn{11}{l}{\footnotesize $\ddagger$. We were unable to run EnigMap for full data size, likely due to our memory capacity limitations, so we report its results at $N=260K$, which is the largest possible size we can complete.} 
\end{tabular}}
\vspace{-1em}
\end{table*}

\subsection{Comparison with SOTA OMAPs}\label{sec:benchmark} We benchmark \sys against two SOTA OMAPs: H2O2RAM\footnote{At the time of our experiments, H2O2RAM’s repository defaulted to an unoptimized DEBUG build. We later learned that a RELEASE build is available, which adds advanced compiler optimizations and can deliver improved performance. Nevertheless, we stress that our \sys prototype was also built without compiler optimizations. Exploring toolchain-level optimizations is beyond the scope of this paper.}~\cite{zheng2024h} and EnigMap~\cite{tinoco2023enigmap}, which represent the leading tree-based and hash-based designs, respectively. We also include a recently released industrial implementation from Facebook~\cite{facebook_oram}, which re-engineers and optimizes Oblix~\cite{oblix}. 
%We add this project as we believe it reflects industry-grade tree OMAP implementations, and its security is grounded in prior peer-reviewed work. 

\vspace{4pt}\noindent{\bf Datasets and workloads.} We use a dataset containing 1 million entries, with each key being 4 bytes and each value 8 bytes~\footnote{This is the only configuration we can run EnigMap at a decent scale.}. This dataset represents the initial outsourced data loaded into the OMAPs. After init, we evaluate all systems using a YCSB-like workload~\cite{ycsb}, consisting of 2500 random \texttt{GET} and 2500 \texttt{PUT} KV operations. All commands are processed sequentially. 
%While value sizes and dataset scales may vary in experiments, the key size remains fixed at 32 bits for compactness. Larger keys can be handled through client-side compression techniques, such as dictionary encoding. 

\vspace{4pt}\noindent{\bf Measurements.} 
For existing OMAPs, we use their default timing interfaces to measure runtime. 
%Since H2O2RAM consider to run hash planning and pre-generate config files which can significantly accelerate accesses, for fairness, we start tests without them and include their generation time as part of the init cost. 
For \sys, we record elapsed time from the host side using C++’s high-resolution \texttt{clock()}, capturing both accelerator execution and PCIe round-trip latency. Since SOTA OMAPs run on CPUs with significantly higher clock frequencies (e.g., 4.1 GHz) compared to our 300 MHz \sys prototype, direct timing comparisons would be biased. Hence, we use \emph{slowdown}—the ratio of each system's performance times to that of a non-private, non-oblivious baseline KVS—as our primary metric. The baseline is written in C++ and compiled for both CPU and FPGA (with HLS-specific adjustments). The FPGA baseline uses only HBM. Due to the limited memory reporting interfaces in existing OMAP projects, we use memory usage figures from their published papers. Although all OMAPs, including ours, are designed to run with TEE support, we run experiments without them to avoid TEE-induced variability and enable a cleaner comparison of OMAP designs.

%Although all OMAPs, including ours, are designed to run with TEE support, we choose to run our experiments in a plain setting without TEE features. This eliminates TEE-induced variability and allows for a cleaner and more focused comparison of core OMAP designs. 

\vspace{4pt}\noindent{\bf Results.} Comparisons results are sumarized in Table~\ref{tab:e2e}. For more comprehensive comparisons, we also added two settings for \sys which captures the small ($\alpha\leq 0.01$) and large HBM (e.g., $\alpha=0.5$) cases. We begin with a complexity comparison. All SOTA OMAPs incur \( O(\log_2 N) \) rounds, and a total bandwidth overheads of \( O(\log_2^2 N) \). In contrast, \sys achieves asymptotically better complexity with constant rounds and \( O(\log_2 \log_2 N) \) overhead.

\sys's lower asymptotic overhead translates to significant efficiency gains. While the best SOTA OMAP completes the testing workload in 0.96s (5.2 KQPS), \sys finishes the same workload in just 0.023–0.032s (174–219 KQPS), achieving a raw speedup of \(30\times\) to \(480\times\). As mentioned earlier, raw latency comparison is not fair for \sys given its \(13\times\) slower clock frequency. We thus compare the slowdown measure, where \sys exhibits at most a \(2.5\times\) slowdown, while SOTA systems incur at least a \(960\times\) overhead against non-private baselines. In other word, \sys achieves a slowdown reduction of at least \(384\times\), and up to \(6338\times\) against SOTA OMAPs.

 %These results show that \sys significantly outperforms existing OMAP designs, surpassing even industry-grade solutions by multiple orders of magnitude. 

Next, we examine the init cost—the time to set up the OMAP and load initial data. Since \sys relies on the owners to pre-process data, we measure the total time of both data preparation and loading into \sys. \textsc{H2O2RAM} incurs the highest init time (291.2s) and slowdown (\(791\times\)). The reason for this stems from its need to run an extensive hash planner to determine the optimal hashing scheme~\cite{zheng2024h}. Tree-based designs initialize much faster but still suffer from slowdowns of at least \(12\times\). All \sys variants, however, exhibit near-zero slowdown, thanks to our fast init strategy where data is pre-organized and directly loaded into target regions. This results in up to a \(279\times\) speedup in raw init time and up to a \(760\times\) reduction in init slowdown against SOTA groups.

%The bulk of \sys's init cost comes from the data owner's one-time effort to randomly bin data and generate mapping tables—similar to how non-private systems prepare hash tables before copying data to the FPGA.

Finally, we zoom on to storage cost: \sys also reduces memory overhead, the ratio of system memory usage to raw data size, by at least \( 1.8\times \) and up to \( 10.6\times \) compared to SOTA designs. 
%This improvement is mainly due to \sys’s compact HBM storage layout and less dummy data needed in each page.

%\vspace{4pt}\noindent{\bf Summary.} \sys delivers significant improvements over SOTAs across multiple aspects, including initialization cost, query performance, and storage overhead.

\subsection{Scaling Experiments} \label{sebsec:scale}

The performance of OMAPs, especially the init and query cost, is known to be sensitive to data scales~\cite{zheng2024h,tinoco2023enigmap}. To evaluate this effect on \sys, we benchmark it under varying scaling settings. 

\eat{\begin{figure*}[h]
\captionsetup[sub]{font=small,labelfont={bf,sf}}
    \begin{subfigure}[b]{0.248\linewidth}
    \centering    \includegraphics[width=1\linewidth]{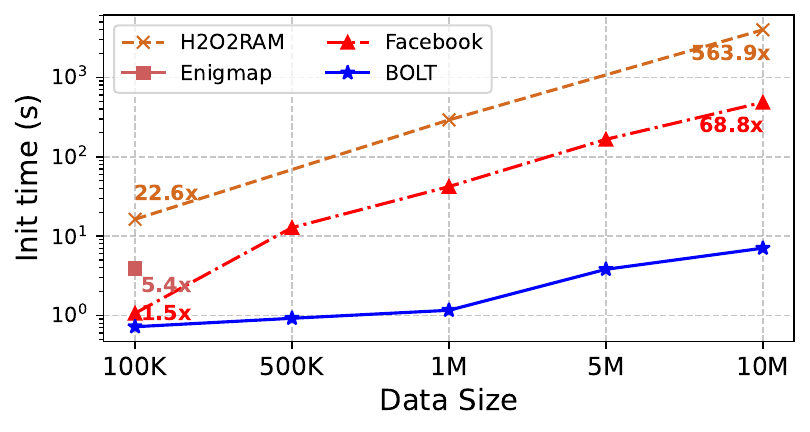}
    \caption{Init time}
    \label{fig:ds-init}
    \end{subfigure}%%
    \begin{subfigure}[b]{0.248\linewidth}
    \centering    \includegraphics[width=1\linewidth]{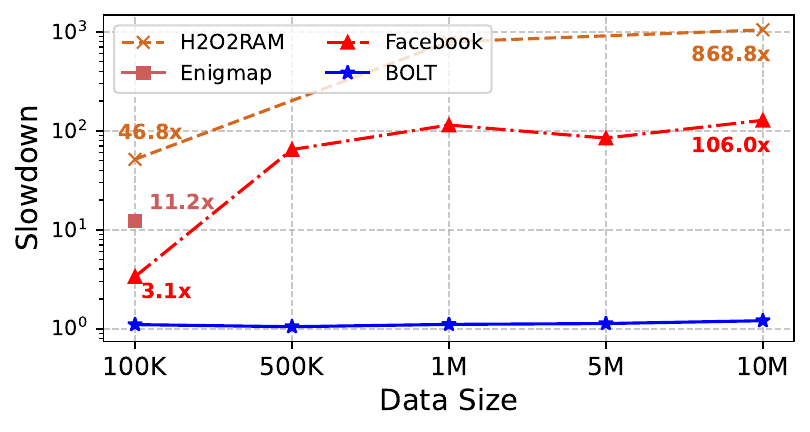}  
    \caption{Init slowdown}
    \label{fig:ds-init-sd}
    \end{subfigure}
    \begin{subfigure}[b]{0.248\linewidth}
    \centering    \includegraphics[width=1\linewidth]{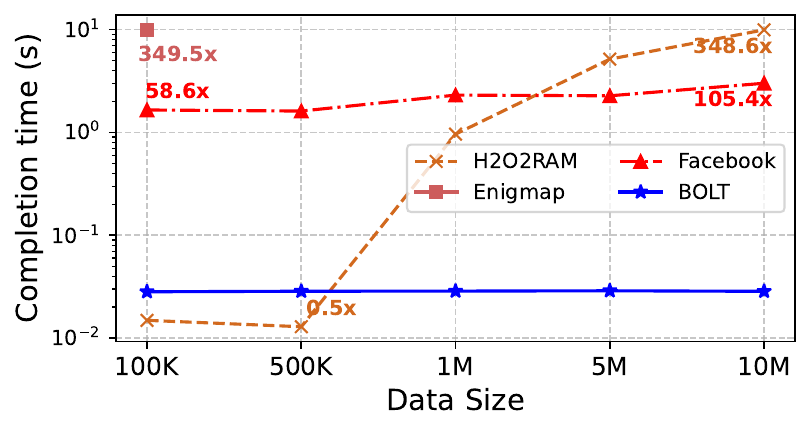}
    \caption{Query time}
    \label{fig:ds-q}
    \end{subfigure}%%
    \begin{subfigure}[b]{0.248\linewidth}
    \centering    \includegraphics[width=1\linewidth]{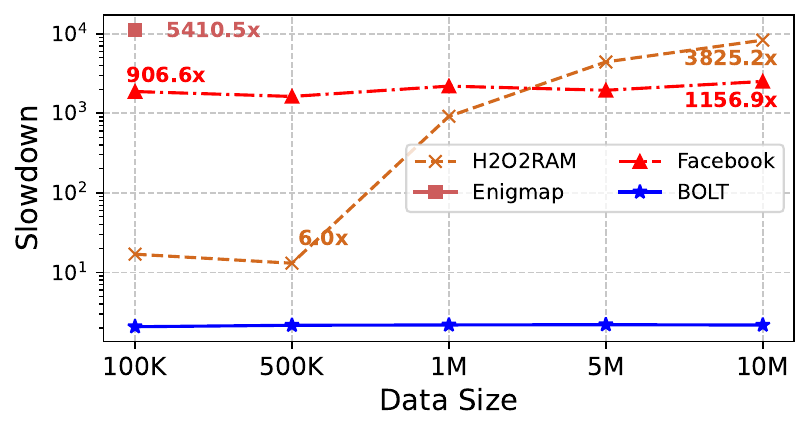}  
    \caption{Query slowdown}
    \label{fig:ds-q-sd}
    \end{subfigure}
    \vspace{-2mm}
   \caption{Scaling Experiment Scenario 1: OMAP performance under increasing data entries.} \vspace{-2mm}
   \label{fig:ds}
\end{figure*}}

\begin{figure}[ht]
\centering
\includegraphics[width=\linewidth,interpolate=false]{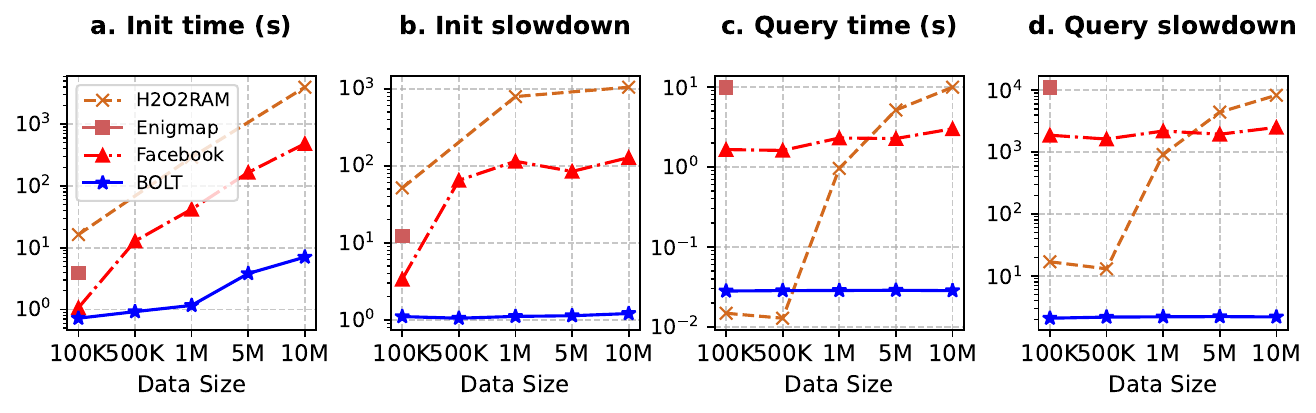}
\vspace{-2.2em}
\caption{Performance under scaling data entries.}
\label{fig:ds}
\end{figure}
\eat{\begin{figure*}[h]
\captionsetup[sub]{font=small,labelfont={bf,sf}}
    \begin{subfigure}[b]{0.248\linewidth}
    \centering    \includegraphics[width=1\linewidth]{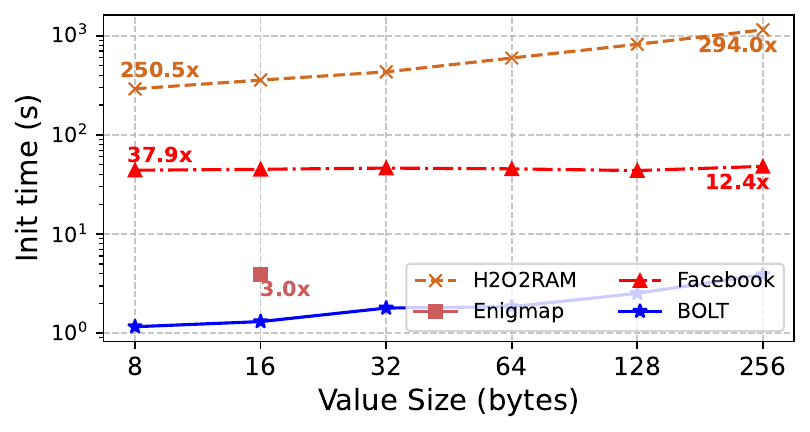}
    \caption{Init time}
    \label{fig:vsz-init}
    \end{subfigure}%%
    \begin{subfigure}[b]{0.248\linewidth}
    \centering    \includegraphics[width=1\linewidth]{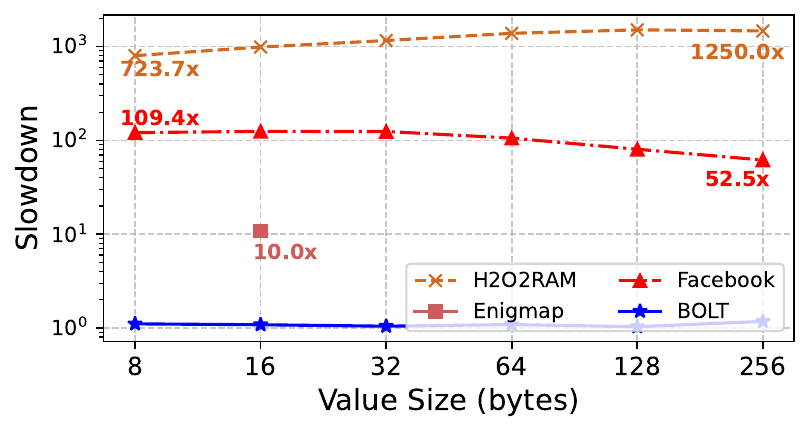}  
    \caption{Init slowdown}
    \label{fig:vsz-init-sd}
    \end{subfigure}
    \begin{subfigure}[b]{0.248\linewidth}
    \centering    \includegraphics[width=1\linewidth]{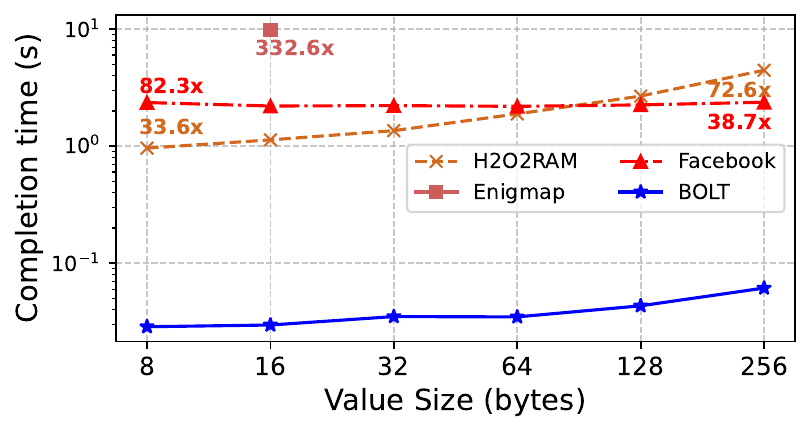}
    \caption{Query time}
    \label{fig:vsz-q}
    \end{subfigure}%%
    \begin{subfigure}[b]{0.248\linewidth}
    \centering    \includegraphics[width=1\linewidth]{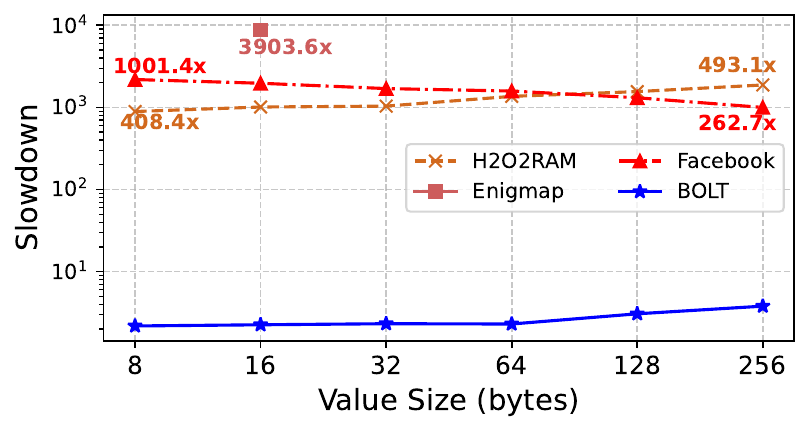}  
    \caption{Query slowdown}
    \label{fig:vsz-q-sd}
    \end{subfigure}
    \vspace{-2mm}
   \caption{Scaling Experiment Scenario 2: OMAP performance under increasing value sizes.} \vspace{-4mm}
   \label{fig:vsz}
\end{figure*}
}

\begin{figure}[ht]
\centering
\includegraphics[width=\linewidth,interpolate=false]{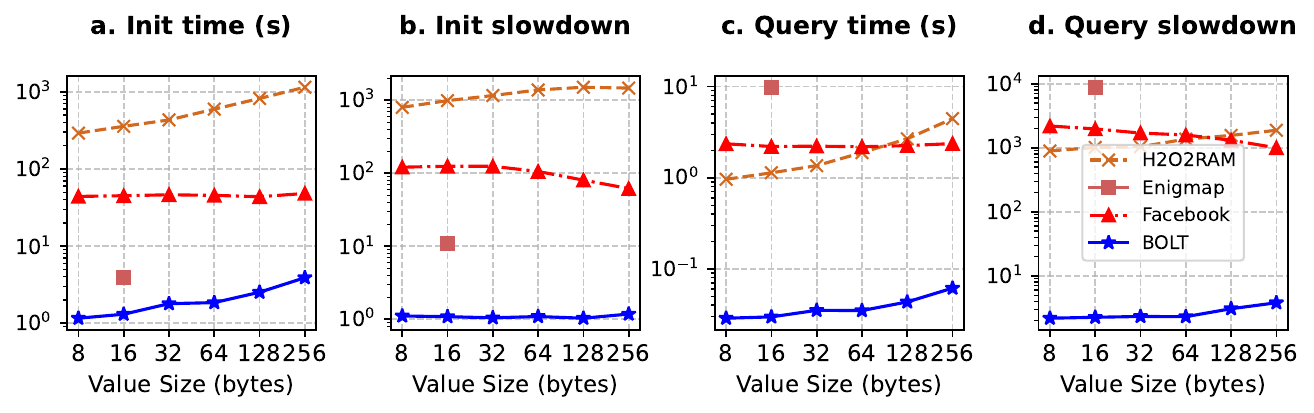}
\vspace{-2.2em}
\caption{Performance under scaling value sizes.}
\label{fig:vsz}
\vspace{-.2em}
\end{figure}

\vspace{4pt}\noindent{\bf Experiment setup.} We adopt the same setup as \S~\ref{sec:benchmark} and consider two scaling scenarios:  {\em (i) Entry size scaling.} We fix the key and value size, but vary the number of data entries from 100K to 10M; {\em (ii) Value length scaling.} We fix the number of data entries at 1M but increase the value size from 8B to 256B, matching the largest block size evaluated by \textsc{H2O2RAM}~\cite{zheng2024h}. We focus on scaling values rather than keys, as practical systems often use small and compactly encoded keys~\cite{jiang2024compassdb, apple_nsubiquitouskeyvaluestore, memcached_keysize, apify_kvstore, geeksforgeeks2024store} but allow large values. 

\vspace{4pt}\noindent{\bf Results for entry size scaling (Figure~\ref{fig:ds}).} 
%For comparison, we also include performance results from SOTAs OMAPs and highlight \sys's minimum and maximum speedups over these baselines. 
A key observation is that \sys maintains relatively stable query latency and slowdown as the number of data entries increases (Figure~\ref{fig:ds}.c,d). This stability is primarily due to \sys's \( O(\log_2 \log_2 N) \) asymptotic overhead, which grows very slowly with dataset size. In contrast, SOTA OMAPs exhibit steadily increasing query latencies as the dataset scales. As a result, \sys delivers increasingly larger performance gains on larger datasets. At 10M entries, \sys achieves at least a \(105.4\times\) speedup in raw latency and a \(1156.9\times\) reduction in normalized slowdown compared to the best SOTA design. For init cost, all SOTA OMAPs show large increases in both raw time and slowdown as data entry grows. For example, \textsc{H2O2RAM}'s init time jumps from 16s at 100K entries to over an hour at 10M, with its slowdown rising from \(51\times\) to over \(1000\times\). In contrast, \sys's raw init time increases more moderately—scaling only \(10\times\) from 100K to 10M entries. More importantly, its slowdown grows by just 9\%. As a result, for large datasets, \sys achieves substantial improvements in init efficiency, reducing slowdown by up to \(868.8\times\) compared to SOTAs.

%For smaller datasets (e.g., \( N \leq 500\text{K} \)), hash-based H2O2RAM performs better in raw latency than ours, but their slowdown is still 6 times as \sys.
%The choice to test until 256B is strategic—it lets us synthesize RTL kernels well below our PCIe MLP limit (512B), which saves us significant debugging time. 256B is also a reasonably large size, as used in H2O2RAM~\cite{zheng2024h}. Nevelethless, we stress that this is not a limitation of \sys, but a prototype development tradeoff. One could support larger values using newer accelerators like the Alveo V80 with \texttt{PCIE\_Gen5}, or commercial DMA IPs~\cite{amd_axi_dma}.
%We say that 256B is a sufficient benchmark scale size—as adopted in prior work~\cite{zheng2024h}—and captures the performance trend .

\vspace{4pt}\noindent{\bf Results for value length scaling (Figure~\ref{fig:vsz}).}  As value size increases from 8B to 256B, all systems experience higher query times. \textsc{H2O2RAM} shows the steepest growth, becoming \(4.6\times\) slower at 256B compared to 8B. In contrast, \sys's query time increases by only \(2.1\times\) over the same range. Interestingly, the Facebook OMAP shows minimal change in query time. Nevertheless, at the 256B value size, \sys still outperforms it by \(38.7\times\) in raw query time and achieves a \(263.7\times\) reduction in slowdown. The initialization cost trends mirror those of query time: both \textsc{H2O2RAM} and \sys are more sensitive to value size changes, while the Facebook OMAP remains relatively stable. Still, \sys maintains high efficiency, requiring only 3.89s at the 256B scale, compared to 48.12s for Facebook OMAP and nearly 20 minutes for \textsc{H2O2RAM}.

\re{
\begin{table}[]
\caption{\re{Comparison with TrustOre}}
\vspace{-1em}
\label{tab:cmp-trustore}
\scalebox{0.73}{
\begin{tabular}{|c|c|c|c|c|}
\hline
                  & \textbf{Data store} & \textbf{Security} & \textbf{Throughput (QPS)} & \textbf{Latency (us)} \\ \hline
\textbf{BOLT}     & \textbf{Host+Device} & \textbf{DO}      & \textbf{209205}           & \textbf{4.8}      \\ 
\textbf{TrustOre} & On-chip Only              & Cache attacks             & 320                       & 3120.0                  \\ 
\textbf{AMD KVS}  & Host+Device          & Non-private               & 285714                    & 3.5                   \\ \hline
\end{tabular}}
\end{table}}
\vspace{-1em}
\re{
\subsection{Comparison with TrustOre}\label{subsec:trustore}
We now compare \sys with TrustOre~\cite{oh2020trustore}, an SOTA hardware ORAM controller in a heterogeneous CPU-FPGA setting. As detailed in \S~\ref{subsec:omaps}, a direct comparison between ORAMs and OMAPs is not informative. However, as TrustOre also implements map extensions~\cite{oh2020trustore}, a fair comparison is possible. We use the same benchmark settings (500 random queries with 16B key and value sizes) as TrustOre to test \sys and sample their performance figures for comparison. We also include AMD's FPGA KVS~\cite{blott2013achieving} as a non-private hardware KVS baseline. Table~\ref{tab:cmp-trustore} shows the results.
}

\vspace{2mm}\noindent\re{{\bf Results.} \sys shows significantly faster query speed than TrustOre, with over 650$\times$ improvement in query latency. This performance gap translates directly to throughput: \sys processes over 200K queries per second while TrustOre handles only 320. Notably, \sys performs even close to AMD's non-private FPGA KVS with only 36\% overhead in query latency. Beyond performance, \sys is a native OMAP with DO guarantees, while TrustOre adds map features through software algorithms dispatched on CPUs, which remain vulnerable to cache side-channels~\cite{duy2022se}. Finally, TrustOre can only store data in FPGA on-chip memory, which severely limits capacity. In contrast, \sys uses both device (on-chip and HBM) and host memory, enabling massive in-memory data support.}

%We also observe that \sys's performance is even close to AMD's non-private FPGA KVS with only 36\% overhead.}

\vspace{-1em}
\re{
\subsection{Micro-benchmarks}\label{subsec:micro-bench}
We analyze the cost breakdowns of \sys to find bottlenecks, especially focusing on two aspects: (i) performance, which measures each module's average running time (in clock cycles) during a single query processing; (ii) memory, which shows how much memory (in MB) is allocated to each object. In both cases, we assume a data size of 10M. As per our prior analysis (\S~\ref{subsec:arch-analysis}, \S~\ref{sebsec:scale}), varying value sizes can affect query speed and memory allocation. Hence, we report breakdowns for both the default (8B) and a larger (256B) value sizes. The results are shown in Figure~\ref{fig:t-break},~\ref{fig:m-break}.
\begin{figure}[ht]
\centering
\includegraphics[width=0.85\linewidth,interpolate=false]{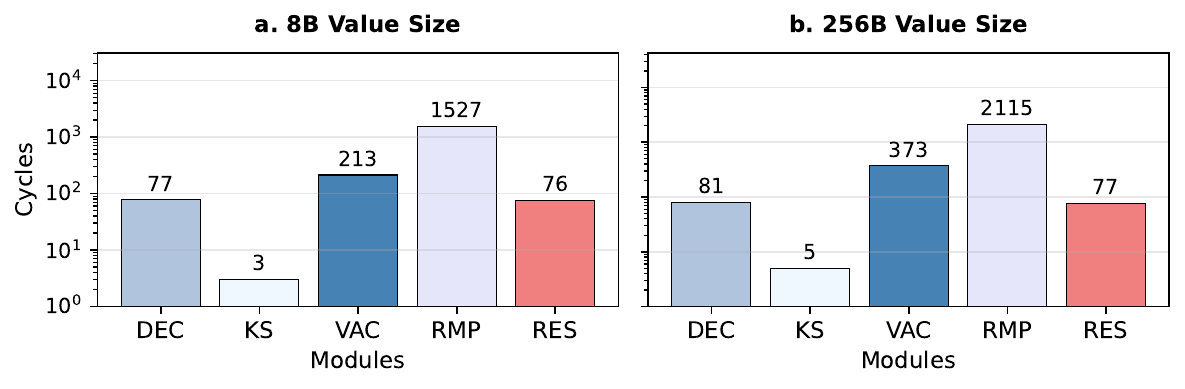}
\vspace{-1em}
\caption{\re{Performance breakdown.}}
\label{fig:t-break}
\end{figure}
\begin{figure}[ht]
\centering
\includegraphics[width=0.85\linewidth,interpolate=false]{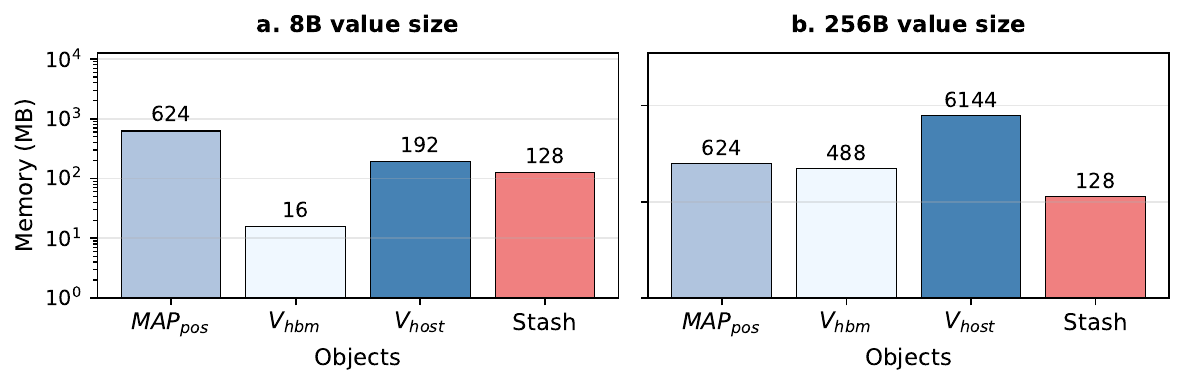}
\vspace{-1em}
\caption{\re{Memory allocation breakdown.}}
\label{fig:m-break}
\end{figure}
}

\vspace{3pt}\noindent\re{{\bf Results.} Figure~\ref{fig:t-break} shows the performance breakdowns, which reveal that the main performance bottleneck lies in the \texttt{RMP}. This is because, in \texttt{RMP}, \sys must update multiple storage objects (e.g., the position map, eviction stash, and reverse indexes) and perform stash eviction followed by the host page write-back. This observation indicates that future efforts to optimize \sys may benefit from focusing on the \texttt{RMP}. Figure~\ref{fig:m-break} shows the memory breakdown, where we can see that with 8B value sizes, the largest portion of memory is allocated to the position map. This storage cost is usually unavoidable, as the position map functions similarly to hash indexes in non-private KVSs — metadata that must be maintained to support general map operations~\cite{blott2013achieving, kv-application}. However, thanks to our decomposed memory design, both the position map and the stash do not grow with the value size, since they store only pointers to values rather than the values themselves. As a result, for larger data (e.g., 256B values), both the position map and stash account for only a small fraction of the total memory cost, with the host storage making up the majority. 
}

\section{Related Work}
\label{sec:related}
\noindent{\bf ORAMs and OMAPs.} A survey of ORAMs and OMAPs is provided in \S~\ref{subsec:omaps}; here, we focus on distinguishing \sys from existing designs. Since the seminal work on ORAMs~\cite{goldreich1987towards,goldreich1996software}, it is well established that any ORAM must incur an amortized bandwidth blowup of at least $\Omega(\log_2 N)$~\cite{path-oram, ren2015constants, resizable-tree-based-oram, bindschaedler2015practicing, sasy2017zerotrace, fletcher2015low, asharov2023futorama, asharov2020optorama, patel2018panorama, dittmer2020oblivious, stefanov2013oblivistore}. This lower bound heavily impacts later OMAPs, which typically build on ORAM primitives, leading to $O(\log_2 N)$ rounds and $O(\log_2^2 N)$ bandwidth overhead in SOTA designs~\cite{oblix,chamani2023graphos,tinoco2023enigmap,facebook_oram, zheng2024h}. Nevertheless, the $\Omega(\log_2 N)$ result is derived under the classical RAM model, which assumes that only the CPU registers are physically shielded, while all other components are subject to access pattern leakages~\cite{goldreich1987towards}. As a result, it is naturally assumed that the available unobservable memory is constant in size, limited to a fixed number of CPU registers. This assumption, however, breaks down on modern accelerator architectures, which often feature large memory dies~\cite{amd-alveo-v80, amd-mi325, nvidia_confidential_computing_h100} stacked within the chip package. These on-chip memories share similar physical properties with registers and, with proper isolation~\cite{volos2018graviton, hunt2020telekine}, can be rigorously shielded to serve as unobservable memory. \sys takes advantage of this architectural shift by using large unobservable HBM to design new OMAP algorithms that go beyond classical bounds, achieving constant rounds and $O(1) + O(\log_2 \log_2 N)$ bandwidth overhead.

%which can also serve as unobservable memory and scale—potentially linearly—with the dataset size~\cite{volos2018graviton,hunt2020telekine, nvidia_confidential_computing_h100}. 

%Given this, it is natural to ask: how does \sys achieve OMAP functionality with only $O(1) + O(\log_2\log_2 N)$ bandwidth blowup? The key lies in the assumptions behind the lower bound: The $\Omega(\log N)$ result is based on the classic RAM model, which assumes the processor has only a constant number of registers to be used as unobservable memory.

\vspace{4pt}\noindent\re{{\bf Secure memory hardware.} Several works have explored secure memory hardware, generally taking one of two main approaches. The first approach focuses on accelerating ORAMs with FPGAs or ASICs by implementing existing algorithms as bare-metal secure memory controllers~\cite{fletcher2015low, liu2015ghostrider, maas2013phantom, ye2025palermo, che2020multi}. However, these designs remain subject to the inherent $\Omega(\log_2 N)$ bandwidth lower bound. The second approach uses specialized memory cubes~\cite{awad2017obfusmem, aga2017invisimem, oh2020trustore, duy2022se, choi2024shieldcxl} to build unobservable memory. While this avoids the $\Omega(\log_2 N)$ overhead, it faces key limitations: constrained memory capacity ({\bf C-1}) and potential indirect leakage through the host ({\bf C-2}). \sys addresses all these limitations. Moreover, prior efforts focus solely on secure memory extensions for address-value pair accesses, whereas \sys is a native OMAP accelerator specifically designed for KVS.}

\vspace{4pt}\noindent{\bf Accelerator TEEs.} Recent research~\cite{zhao2022shef,armanuzzaman2022byotee,volos2018graviton,hunt2020telekine, mai2023honeycomb, wang2024towards,jang2019heterogeneous, ivanov2023sage} and industry products~\cite{nvidia_confidential_computing_h100, vaswani2022confidential, dhar2024ascend} have driven growing interest in accelerator TEEs. However, their focus is primarily on ensuring isolation and integrity, rather than rigorous data-obliviousness. Hunt et al.~\cite{hunt2020telekine} highlight that while isolated HBM improves security, it does not guarantee obliviousness because indirect leakage from the host remains possible. Their solution offloads control functions to a trusted client, but this introduces significant communication overhead. In \sys, we take a fundamentally different self-hosted approach that achieves the same goal as~\cite{hunt2020telekine} but without relying on a trusted client. Moreover, prior accelerator TEEs, including Hunt et al., have mainly focused on compute-intensive ML and scientific workloads, which tend to have well-structured access patterns. In contrast, \sys targets memory-intensive KVS workloads.

%HBM is increasingly recognized for its snooping resistance, enabling secure, data-dependent execution within isolated memory. Most existing designs decrypt data entirely inside HBM and operate securely under that assumption. However, Hunt et al.~\cite{hunt2020telekine} highlight that while isolated HBM enhances security, it does not guarantee full obliviousness, as indirect leakage from the host remains a risk. Their solution offloads control functions to a trusted client, but at the cost of significant communication overhead. \sys introduces a self-hosted architecture that eliminates the need for a trusted client while maintaining security. Moreover, prior work has primarily focused on compute-intensive ML and scientific workloads, which typically exhibit well-structured data patterns and predictable execution traces—for example, via single-instruction, multiple-data (SIMD) parallelism. In contrast, \sys targets memory-intensive KVS use cases characterized by highly data-dependent access patterns. 

\vspace{-1em}
\section{Conclusion}
\label{sec:conclusion}
In this work, we take the first step toward leveraging architectural advancements in modern accelerators to design OMAPs that are both secure and efficient. Specifically, the emergence of HBM in accelerators allows us to build large HUMs, breaking the long-standing assumption in oblivious primitive designs that such memory regions must be constant-sized. By exploiting this shift, our prototype \sys achieves strong performance—up to 352$\times$ faster than SOTA OMAPs—while maintaining practicality, with overheads as low as $1.7\times$ compared to non-private KVSs.
%%
%% The acknowledgments section is defined using the "acks" environment
%% (and NOT an unnumbered section). This ensures the proper
%% identification of the section in the article metadata, and the
%% consistent spelling of the heading.
\section*{Acknowledgements}
We extend our sincere gratitude to our shepherd and the anonymous reviewers for their invaluable feedback and constructive suggestions. We also wish to thank the members of CDCC, as well as Intel Trustworthy Data Center of the Future for their generous support. This work was supported in part by the National Science Foundation under awards OAC-2419821 and CNS-2207231, the Intel Trustworthy Data Center of the Future grant, and the AMD University Program for providing us with the U55C FPGA card. Any opinions, findings, and conclusions or recommendations expressed in this material are those of the author(s) and do not necessarily reflect the views of the National Science Foundation, Intel, or AMD.

%%
%% The next two lines define the bibliography style to be used, and
%% the bibliography file.
\bibliographystyle{ACM-Reference-Format}
\bibliography{ref}

%%
%% If your work has an appendix, this is the place to put it.
\appendix
\appendix
\newcommand\hongbo[1]{\textcolor{teal}{\{\textbf{hongbo:} {\em#1}\}}}

% \section{Policy Definition for Case Studies}
\section{Evaluation continued}\label{sec:eval}
\subsection{Testbed and Prototype}\label{subsec:platform}
We provide additional information about our testbed and prototype. Specifically, Figure~\ref{fig:platform} shows a photo of our testbed platform, followed by a prototype gate-level schematic in Figure~\ref{fig:gate} that illustrates the post-synthesis netlist, including logic gates, flip-flops, and other hardware primitives.

\vspace{1mm}
 \begin{figure}[ht]
\centering
\includegraphics[width=0.63\linewidth,interpolate=false]{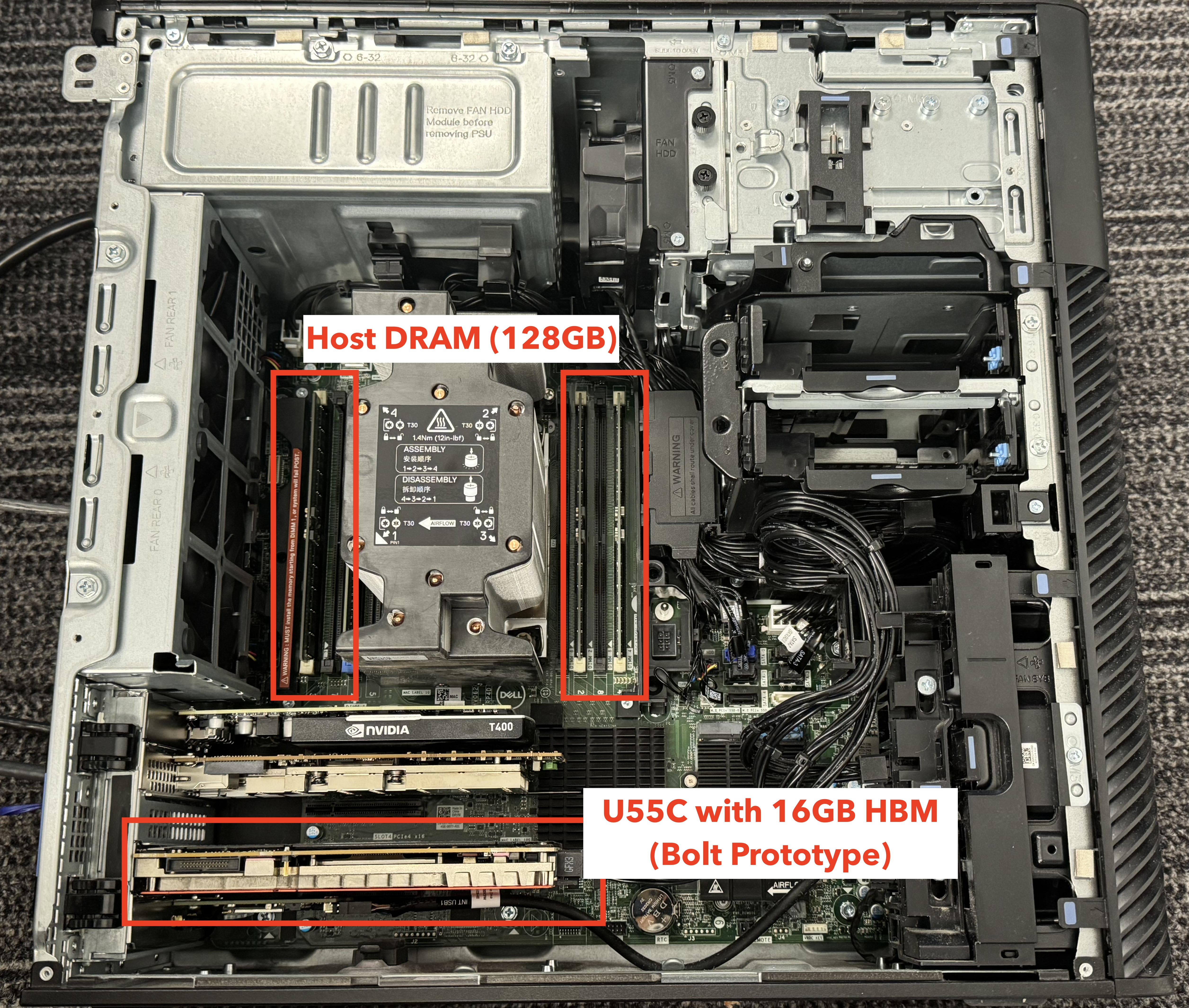}
\caption{The testbed and \sys prototype.}
\label{fig:platform}
\end{figure}

\vspace{1mm}
 \begin{figure}[ht]
\centering
\includegraphics[width=0.63\linewidth,interpolate=false]{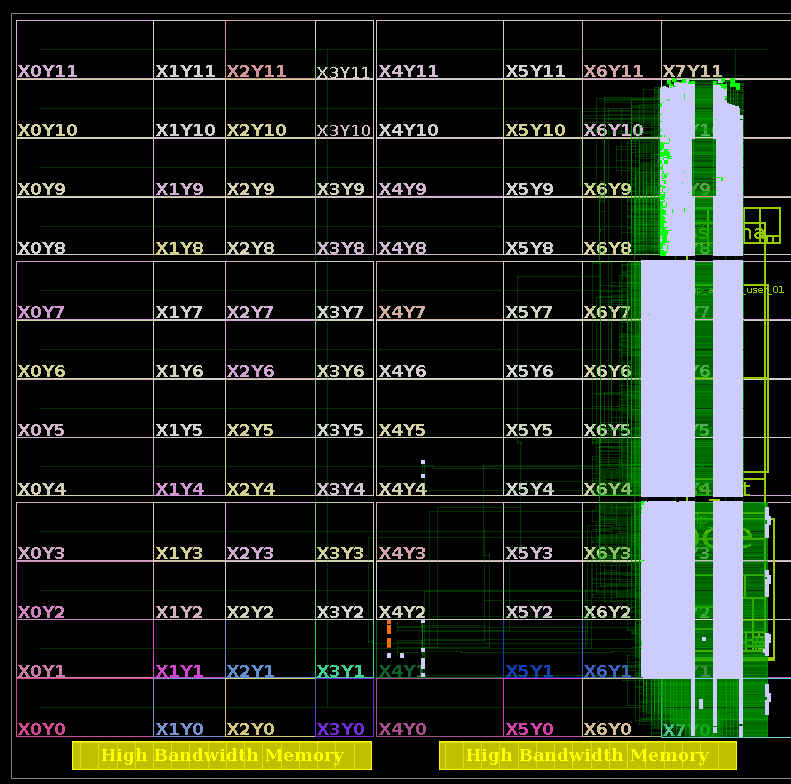}
\caption{Post-synthesis netlist schematic of \sys.}
\label{fig:gate}
\end{figure}

\section{Additional Background }
\subsection{FPGA and its security features}
\vspace{3pt}\noindent{\bf FPGA.}  
An FPGA is a hardware device consisting of configurable logic blocks and interconnects, programmable by loading a developer-created binary file called a bitstream.
The bitstream, created by specialized FPGA design software, describes the exact logical operations and connections required to realize custom micro-architecture design. Current FPGA manufacturers already introduce important security features including (1) Hardware root of trust (HWRoT) (2) Bitstream encryption and (3) Secure boot.

% \cw{I am giving you an example for HWRoT, please read by yourself.}

% \cw{A Hardware Root of Trust (HWRoT) is a small, tamper-resistant, and immutable (or securely updatable) hardware module embedded directly into silicon which provides a trusted starting point for all other security operations in a system. A HWRoT typically consists of two key components: a boot ROM, which contains the first code executed upon power-up and initiates the chain of trust, and a set of cryptographic secrets—such as device identities, signing keys, and root derivation keys—used to verify software and establish system-wide trust~\cite{amd_hwrort_2024, nvidia_confidential_computing_h100, zhao2022shef}. These device secrets are stored in physically isolated, read-restricted hardware elements like eFUSEs, embedded key ROMs, or battery-backed secure RAM. Such storage is designed to be non-readable by software, resistant to physical probing, and immutable or write-once, ensuring that the secrets remain confidential and unaltered throughout the device's lifecycle. In accelerators like FPGAs and GPUs, the HRoT manages keys for tasks like bitstream decryption, runtime authentication, and secure reconfiguration.}

\vspace{3pt}\noindent{\bf HWRoT.}  A  HWRoT is a compact, tamper-resistant hardware module embedded in silicon that serves as the foundation for a system’s security functions. 
It comprises two primary components:
(1) {\em Boot ROM.} An immutable section of code that executes immediately upon power-up to establish the initial chain of trust.
(2) {\em Cryptographic elements.} These include unique device identifiers, signing keys, and root derivation keys, which are securely stored in isolated hardware structures such as one-time programmable eFUSEs, embedded key ROMs, or battery-backed secure RAM.

These storage mechanisms are designed to prevent software access, resist physical tampering, and enforce immutability, thereby ensuring that sensitive cryptographic materials remain secure throughout the device’s lifecycle. In programmable accelerators such as FPGAs and GPUs, the HWRoT handles critical operations including secure boot, bitstream decryption, runtime authentication, and secure configuration.

\vspace{3pt}\noindent{\bf Bitstream encryption.} Specifically, bit-stream encryption protects bitstreams during transmission and storage using AES encryption, preventing unauthorized disclosure, copying, or reverse-engineering~\cite{zhang2019comprehensive,yoon2018bitstream,danesh2020turning}. During manufacturing, FPGA vendors securely generate cryptographic keys, including an AES encryption key ($\mathsf{K}_{bit}$) and  RSA key pairs consisting of a private key ($\mathsf{sk}_{bit}$) and a public key ($\mathsf{pk}_{bit}$).  The AES key and RSA public key are embedded into one-time programmable, non-volatile storage known as eFUSEs inside the FPGA hardware. Prior to deployment, the bitstream is encrypted with AES encryption key ($\mathsf{K}_{bit}$) and digitally signed with the  RSA private key ($\mathsf{sk}_{bit}$), ensuring both confidentiality and authenticity.

\vspace{3pt}\noindent{\bf Secure boot.} 
In FPGA accelerators, secure boot has been proposed as a mechanism for runtime integrity~\cite{oh2020trustore}, as it ensures that the loaded bitstream is authentic and that the entire micro-architecture is correctly configured. At boot time, the FPGA loads the encrypted and signed bitstream from external storage, authenticates it using the embedded RSA public key ($\mathsf{pk}_{bit}$), and upon successful verification, decrypts the bitstream with the AES encryption key ($\mathsf{K}_{bit}$). After decryption, the FPGA's bootloader securely loads this verified bitstream into the reconfigurable hardware that will actually run the intended functions. Critical components involved in secure boot—including the boot ROM, cryptographic keys, and AES decryptor—are designed to be tamper-resistant, relying on secure hardware provided by FPGA manufacturers.

\subsection{Accelerator TEEs}\label{appx:acc-tee}

TEEs are secure execution environments isolated from the normal operational environment to protect sensitive code and data from unauthorized access or tampering. A comprehensive TEE provides isolation, confidentiality, integrity, and remote attestation guarantees. Recently, TEEs have been extended beyond traditional CPUs to include accelerators such as FPGAs and GPUs. Accelerator TEEs are designed to offload and safeguard compute or memory-intensive workloads that require runtime confidentiality and integrity.

\vspace{3pt}\noindent\textbf{Remote attestation (RA).} RA enables external entities to verify that an accelerator TEE is correctly configured and executing trusted code~\cite{sgx,zhao2022shef,oh2020trustore, nvidia_confidential_computing_h100}. 

 The generalized RA workflow for accelerator TEEs includes the following steps: (1) Key provisioning: A pair of attestation keys is prepared in advance. These keys may
be directly fused into the device by the manufacturer or derived from HWRoT. In other words, we consider these
keys to be non-forgable by malicious attackers. The private key ($\mathsf{sk}_{att}$) 
is securely stored inside the accelerator, typically in secure storage like
eFUSEs or boot ROM, while the public key ($\mathsf{pk}_{att}$)  is held and managed
by the manufacturer; (ii) Challenge and response: When attestation is
initiated, the user sends a randomly generated challenge to the accelerator.  The accelerator then signs a measurement report, which includes the
challenge, a snapshot of its runtime state (e.g., loaded firmware hash),
and a unique device ID, using $\mathsf{sk}_{att}$. This signed report is returned to the
user; (iii) Verification: The user forwards the signed report to the manufacturer  via a secure channel. The manufacturer verifies the signature
using the corresponding public key and checks whether the reported
runtime state matches an expected trusted configuration.

% Since these techniques have been well studied in prior works~\cite{zhao2022shef, oh2020trustore, armanuzzaman2022byotee, choi2024shieldcxl, pereira2021towards, zeitouni2021trusted}, we refer interested readers to these papers for more detailed discussions.

%\cw{Do not remove my comments unless you've addressed these! I mentioned in my previous comments saying that these integrity guarantees are redudant and there is not need here. My feeling is that is is better to remove this.}
%\cw{My suggestions is to talk about: (i) I/O isolation (confidentiality), where you may need to talk about encryptions, key provisions and how I/O is isolated. ; (ii) then have a section only talks about encryption integrity, for instance authenticated encryption, merkle proofs, etc.}

\vspace{3pt}\noindent\textbf{ Confidentiality.} 
I/O isolation is a prevalent method for establishing TEEs on modern accelerators~\cite{zhao2022shef,oh2020trustore,nvidia_confidential_computing_h100,volos2018graviton,hunt2020telekine,armanuzzaman2022byotee}. This approach implements a hardware firewall to restrict direct external access, channeling all device I/O operations—such as Memory-Mapped I/O (MMIO), Direct Memory Access (DMA), and AXI interfaces—through secure interfaces. Within this isolated environment, confidential data and code are decrypted and processed exclusively inside the hardware firewall, ensuring sensitive information remains protected. Data exiting this isolated space is re-encrypted to maintain confidentiality during transit or storage.

In FPGA designs, the isolation firewall is typically provided by the manufacturer or a trusted vendor as a customized shell extension that sits behind the standard manufacturer shell. The initialization workflow is as follows~\cite{zhao2022shef}:
(1) Key Provision: The FPGA manufacter
generates a public/private asymmetric encryption key pair ($\mathsf{sk}_{\mathsf{shield}}$,$\mathsf{pk}_{\mathsf{shield}}$) before deployment. The private encryption key is embedded into the firewall bitstream, then the bitstream is encrypted as mentioned in the bitstream encryption section.  At run‑time, the key is then directly loaded to on-chip registers through the FPGA secure boot and thus is considered confidential.  The public key 
is shared with the data owner through secure and authenticated channels.
(2)Secure Data Encryption Keys Provision: The data owner generates one or more symmetric keys ($\mathsf{K}{sec}$). These keys are used to secure all communications with the remote TEE, encrypting confidential data stored outside the TEE and decrypting data inside the TEE.
Each DEK is encrypted under 
$\mathsf{pk}_{\mathsf{shield}}$, yielding a ``load key'' blob.
Once the enclave has completed secure boot and remote attestation, the host transmits the blob to the enclave over the authenticated channel. 
(3) Runtime  Encryption and Decryption: 
Once the key blob is in place, the hardware firewall decrypts  $\mathsf{K}_{sec}$ and configures it to transparently encrypt and decrypt all I/O operations. Specifically, the firewall exposes the same interfaces as traditional I/O mechanisms, such as MMIO or DMA, but proxies the traffic, for instance, decrypting inbound messages and encrypting outbound data.

For GPU and ASIC TEEs~\cite{nvidia_confidential_computing_h100, dhar2024ascend}, the overall design concepts are similar to that of FPGA-based TEEs. The primary difference is that these devices typically derive I/O encryption keys internally from their HWRoT. Additionally, some designs rely on software firewalls, rather than hardware-based solutions, to establish the isolated region~\cite{mai2023honeycomb}. 

\vspace{3pt}\noindent\textbf{ Encryption integrity.} Beyond ensuring confidentiality, TEEs must also guarantee integrity—particularly to ensure that data encrypted and sealed outside the enclave has not been tampered with by malicious users.

Common integrity protection techniques include authenticated encryption schemes like AES-GCM~\cite{gueron2017aes}, which combine encryption with a Message Authentication Code (MAC) to detect tampering~\cite{zhao2022shef,nvidia_confidential_computing_h100}. For large data regions, data is split into fixed-size chunks (e.g., 4KB), each independently encrypted and authenticated to prevent block reordering or substitution. To defend against replay attacks, each chunk's MAC is computed using a monotonic counter, and a lightweight Merkle tree is constructed over these MACs and counters~\cite{gueron2016memory}. The tree’s root hash, securely stored on-chip, commits to the state of the entire data.

\section{Proof of Theorems}
\subsection{Proof of Claim~\ref{claim:bin-sz}}\label{appx:p2c}
To prove this, we consider the classical balls‐and‐bins model for allocating \(N\) balls into \(B=K+M\) bins and use use a layered induction argument to prove the claim. For each integer $i$, define $X_i = \#\{ \text{bins with load} \ge i \}$,
after all $N$ balls are placed. When a ball is placed, it selects two bins uniformly at random and is placed into the less loaded one.

For a ball to increase a bin's load from $i$ to $i+1$, \emph{both} selected bins must have load at least $i$. Thus, if at some stage there are $X_i$ bins with load at least $i$, then the probability that a given ball increases some bin’s load from $i$ to $i+1$ is at most
\smash{$\left({X_i}/{B}\right)^2$}. Since there are $N$ balls, by linearity of expectation we have
\smash{$\mathbb{E}[X_{i+1}] \le N \left({X_i}/{B}\right)^2$}.

We claim that for all integers $k \ge 0$, with probability at least 
$1 - \frac{1}{N2^k}$, the number of bins with load at least $c+k$ satisfies $X_{c+k} \le \beta_k$,
where the threshold sequence $\{\beta_k\}$ is defined by
\[
\beta_0 = B \quad \text{and} \quad \beta_{k+1} = 2N\left(\frac{\beta_k}{B}\right)^2.
\]

\subsubsection*{Base Case ($k=0$)}
For $i=c$, it is trivial that $X_c \le B = \beta_0$,
as every bin is counted and the average load is $c$.

\subsubsection*{Inductive Step} Assume that with probability at least $1-\frac{1}{N2^k}$ it holds that $X_{c+k} \le \beta_k$.
Then for a given ball, the probability that both choices lie in the set of $\beta_k$ bins is at most $(\beta_k/B)^2$. And, over $N$ balls, $\mathbb{E}[X_{c+k+1}] \le N\left({\beta_k}/{B}\right)^2$. Applying the multiplicative Chernoff bound with \(\delta=1\) yields
\[
\Pr\left[X_{c+k+1} \ge 2N\Bigl(\frac{\beta_k}{B}\Bigr)^2\right] \le \exp\Bigl(-\frac{N\Bigl(\frac{\beta_k}{B}\Bigr)^2}{3}\Bigr).
\]
We now ensure that the aforementioned failure probability is at most \(\frac{1}{N2^{k+1}}\), we have
\begin{equation*}
    \begin{split}
        &\Pr\Bigl\{X_{c+k+1} \le 2N\Bigl(\frac{\beta_k}{B}\Bigr)^2\Bigr\} \ge 1-\frac{1}{N2^{k+1}},\\[1mm]
        \Longrightarrow \quad &X_{c+k+1} \le 2N\left(\frac{\beta_k}{B}\right)^2 \equiv \beta_{k+1}.
    \end{split}
\end{equation*}

With the assumption that \(N = cB\). The previously proved recurrence exhibits a doubly exponential decay. In fact, one can show by induction that for all \(k\ge 0\), the following holds
\[
\beta_k \le B\cdot 2^{-\bigl(2^k-O(1)\bigr)}.
\]

Define \(\ell^{*}\) as the smallest integer (e.g., 0) or equivalently $\beta_{\ell^{*}} < \frac{1}{N}$, and $B\cdot 2^{-\bigl(2^{\ell^{*}}-O(1)\bigr)} < \frac{1}{N}$. Taking logarithms on both sides yields:
\begin{equation*}
    \begin{split}
        \log_2\Bigl(B\cdot 2^{-(2^{\ell^{*}}-O(1))}\Bigr) &< -\log_2 N,\\
        \log_2 B - (2^{\ell^{*}}-O(1)) &< -\log_2 N,\\
        2^{\ell^{*}}-O(1) &> \log_2 B + \log_2 N = \log_2 (BN) \\
        2^{\ell^{*}} & > 2\log_2 B + \log_2 c.\\
        \ell^{*} &> \log_2\log_2 B + O(1).
    \end{split}
\end{equation*}

Note that \(B = \Theta(N)\), and thus the aforementioned terms implies that when $\ell^{*}$ is larger than $O(\log\log N)$, the probability that $\exists \beta_{\ell^{*}}>0$ is at most $\frac{1}{N2^{k}}$. We then take the union bound over all $\beta_{k}$ where $k=0,1,..., \ell^{*}$, so that we can compute with probability at most $\sum_{k=0}^{\ell^{*}}\frac{1}{N2^k} < \frac{2}{N}$, we have $X_{c+\ell^{*}}=0$. Or in other word, with probability at most $1-\frac{1}{O(N)}$, the max bin load must be bounded by $c+O(\log_2\log_2 N)$.

\eat{
We define \(p_h\) as the probability that when a ball is placed, its bin reaches load \(h\); equivalently, it means that both chosen bins must have load at least \(h-1\):
\[
p_h = \Pr\Bigl\{\text{both chosen bins have load} \ge h-1\Bigr\}.
\]

Next, we derive a high probability upper bound of $p_h$. We start our analysis with $h=3$. For a ball to be placed at bin load of \(3\), both bins chosen must already have at least \(2\) balls. Notice that with \(N\) balls distributed among \(B\) bins with constant average load, the fraction of bins with at least \(2\) balls is at most \(1/2\). Thus, $p_3 \le  \frac{1}{4}$. Next, we continue with the case where $\forall h > 3$. In order for a ball to be placed at bin load $\ge h$ such that $h>3$, one must have the following:
\begin{equation*}
\begin{split}
    p_h & \le \Pr\Bigl\{\text{\#1 bin} \ge h-1\Bigr\} \cdot \Pr\Bigl\{\text{\#2 bin}  \ge h-1\Bigr\} \le \bigl(p_{h-1}\bigr)^2  \\
     & \le \bigl(p_{h-1}\bigr)^2 \le \bigl(p_{h-2}\bigr)^{2^2} \le \cdots \le \bigl(p_3\bigr)^{2^{h-3}} \le \left(4\right)^{-2^{h-3}}.
\end{split}
\end{equation*}
Applying a union bound over all \(B\) bins, and we choose a small probability $\frac{1}{O(N)}$ to bound the probability that there exists any bin with load at least \(h\), we then obtains
\begin{equation*}
    \begin{split}
        & \Pr\{\exists \text{ bin with load} \ge h\} \le B \cdot p_h \le B \cdot \left(4\right)^{-2^{h-3}} <\frac{1}{O(N)}\\
    \rightarrow &  N \cdot \left(4\right)^{-2^{h-3}} < \frac{1}{O(N)} \rightarrow \left(4\right)^{2^{h-3}} > O(N^2)\\
   \rightarrow & 2^{h-3} > \log_4 (O(N^2)) = \frac{1}{2} \log_2 (O(N^2)) = O(\log_2 N)\\
   \rightarrow & h-3 > O(\log_2 \log_2 N)
    \end{split}
\end{equation*}
which implies with high probability at least \( 1 - 1/O(N) \), the maximum logical load is \(O\bigl(\log_2\log_2 N\bigr)\).
}

\subsection{Proof of the tail bounds in Claim~\ref{claim:stash-sz}}\label{appx:tail}
We now define the excess above equilibrium $Y_t = X_t - x^{*}$ at each time $t$, and an exponential function $Z_t = e^{\lambda Y_t}$, where $\lambda > 0$ is a parameter to be optimized. Next, we show that $Z_t$ is a supermartingale for an appropriate choice of $\lambda$. 
For a bounded difference $|Y_{t+1} - Y_t| \leq c + \log_2\log_2 N$, we can use a standard inequality for the moment generating function:

\begin{equation*}
\begin{split}
\mathbb{E}[Z_{t+1} \mid Z_t] &= \mathbb{E}[e^{\lambda Y_{t+1}} \mid Z_t]\\
&= e^{\lambda Y_t} \cdot \mathbb{E}[e^{\lambda(Y_{t+1} - Y_t)} \mid Y_t]\\
&\leq e^{\lambda Y_t} \cdot \left(1 + \lambda\mathbb{E}[Y_{t+1} - Y_t \mid Y_t] + \frac{\lambda^2(c + \log_2\log_2 N)^2}{2}\right)\\
\end{split}
\end{equation*}
%Note to my self, this inequality it was approximated by Taylor series.

For $Y_t = \Delta > 0$, substituting the drift:
\begin{equation*}
\begin{split}
\mathbb{E}[Z_{t+1} \mid Z_t] &\leq e^{\lambda \Delta} \cdot \left(1 - \lambda\frac{2(1-\alpha)\Delta}{M} + \frac{\lambda^2(c + \log_2\log_2 N)^2}{2}\right)\\
\end{split}
\end{equation*}

For $Z_t$ to be a supermartingale, we need $\mathbb{E}[Z_{t+1} \mid Z_t] \leq Z_t = e^{\lambda \Delta}$. For this to hold for all $\Delta > 0$, we choose $\lambda = \frac{2(1-\alpha)\Delta}{M(c + \log_2\log_2 N)^2}$, then we compute the following inequalities:
%which holds when:
%\begin{equation*}
%1 - \lambda\frac{2(1-\alpha)\Delta}{M} + %\frac{\lambda^2(c + \log_2\log_2 N)^2}{2} \leq 1
%\end{equation*}
%This simplifies to:
%\begin{equation*}
%\frac{\lambda^2(c + \log_2\log_2 N)^2}{2} \leq %\lambda\frac{2(1-\alpha)\Delta}{M}
%\end{equation*}
With this choice of $\lambda$, $Z_t$ is a supermartingale. Using Markov's inequality:
\begin{equation*}
\begin{split}
\text{Pr}[X_t - x^{*} > \Delta] &= \text{Pr}[Y_t > \Delta]\\
&= \text{Pr}[Z_t > e^{\lambda \Delta}] \\
&\leq \frac{\mathbb{E}[Z_0]}{e^{\lambda\Delta}} \quad \text{(Markov's inequality)}\\
&= \frac{e^{\lambda(X_0 - x^{*})}}{e^{\lambda \Delta}} \\
&= e^{\lambda(X_0 - x^{*} - \Delta)} \\
&= e^{-\lambda(x^{*} + \Delta)} \quad \text{(Assuming $X_0 = 0$)}\\
&= e^{-\frac{2(1-\alpha)\Delta(x^{*} + \Delta)}{M(c + \log_2\log_2 N)^2}}\\
&= e^{-\frac{2(1-\alpha)\Delta x^{*}}{M(c + \log_2\log_2 N)^2} - \frac{2(1-\alpha)\Delta^2}{M(c + \log_2\log_2 N)^2}}\\
\end{split}
\end{equation*}

Substituting $x^{*} = \frac{(1+\alpha)M}{2}$:
\begin{equation*}
\begin{split}
\text{Pr}[X_t - x^{*} > \Delta] &= e^{-\frac{2(1-\alpha)\Delta \cdot \frac{(1+\alpha)M}{2}}{M(c + \log_2\log_2 N)^2} - \frac{2(1-\alpha)\Delta^2}{M(c + \log_2\log_2 N)^2}}\\
&= e^{-\frac{(1-\alpha)(1+\alpha)\Delta}{(c + \log_2\log_2 N)^2} - \frac{2(1-\alpha)\Delta^2}{M(c + \log_2\log_2 N)^2}}\\
&= e^{-\frac{(1-\alpha^2)\Delta}{(c + \log_2\log_2 N)^2} - \frac{2(1-\alpha)\Delta^2}{M(c + \log_2\log_2 N)^2}}\\
\end{split}
\end{equation*}

To establish a high probability bound, let us set:
\begin{equation*}
\Delta = \sqrt{\frac{M(c + \log_2\log_2 N)^2\ln N}{2(1-\alpha)}}
\end{equation*}

Substituting this value into our probability bound:
\begin{equation*}
\begin{split}
\text{Pr}[X_t - x^{*} > \Delta] &= e^{-\frac{(1-\alpha^2)\Delta}{(c + \log_2\log_2 N)^2} - \frac{2(1-\alpha)\Delta^2}{M(c + \log_2\log_2 N)^2}}\\
&= e^{-\frac{(1-\alpha^2)\Delta}{(c + \log_2\log_2 N)^2} - \frac{2(1-\alpha)}{M(c + \log_2\log_2 N)^2} \cdot \frac{M(c + \log_2\log_2 N)^2\ln N}{2(1-\alpha)}}\\
&= e^{-\frac{(1-\alpha^2)\Delta}{(c + \log_2\log_2 N)^2} - \ln N}\\
&= \frac{1}{N} \cdot e^{-\frac{(1-\alpha^2)}{(c + \log_2\log_2 N)^2} \cdot \sqrt{\frac{M(c + \log_2\log_2 N)^2\ln M}{2(1-\alpha)}}}\\
\end{split}
\end{equation*}

Since $(1-\alpha^2) > 0$ and all other terms are positive, the exponent is negative and grows with $\sqrt{M\ln M}$. Therefore, the above probability is at most $\frac{1}{O(N)}$. Or in other word, with high probability of at least $1-O\left(\frac{1}{M}\right)$, the queue size does not exceed:
\begin{equation*}
\frac{(1+\alpha)M}{2} + O\left((c + \log_2\log_2 N)\sqrt{M\ln N}\right)
\end{equation*}

%\section{Evaluations continued}
%\subsection{\sys Testbed}\label{subsec:testbed} Figure~\ref{fig:platform} shows the full platform setup.

\eat{
\subsection{Memcached \sys}\label{sec:mem}
We now discuss the memcached \sysm design. It is noteworthy that contemporary HBM accelerators already feature substantial HBM capacities (e.g., 16GB~\cite{AMD_Alveo_U55C}-256GB~\cite{amd-mi300}), with upcoming HBM4 products expected to double these capacities~\cite{Mujtaba_2023}. As such, we envision that the \sysm would already be capable of handling a wide range of KV workloads. Below, we first present the logical algorithm employed for KV search, followed by the architectural designs that map it into an high-performance and oblivious KVS accelerator.

\vspace{2mm}\noindent{\bf Logical algorithm.} \sysm primarily uses a hash indexed KVS method--- a hash function maps each KV tuple (by key) to a specific memory location. This design is widely adopted in many in-memory KV systems, such as Redis~\cite{redis2009} and Memcached~\cite{memcached}. To maximize efficiency, we decompose storage into a key partition (\(\mathsf{HT}_{\mathsf{key}}\)) and a value partition (\(\mathsf{HBM}_{\mathsf{val}}\)), both stored in HBM. Keys are organized in a hash table with pointers (memory addresses) to values, while values are stored in a contiguous memory space. Figure~\ref{fig:storage} shows an example. We say that the decomposed storage can effectively reduce data movement, as key lookups require loading only a smaller hash entry (containing only keys and pointers) from HBM to on-chip memory. Furthermore, as we will show later, it also simplifies support for RAW consistency in pipelined executions. 
\vspace{-2mm}
\begin{figure}[h]
    \centering
    \includegraphics[width=0.95\linewidth]{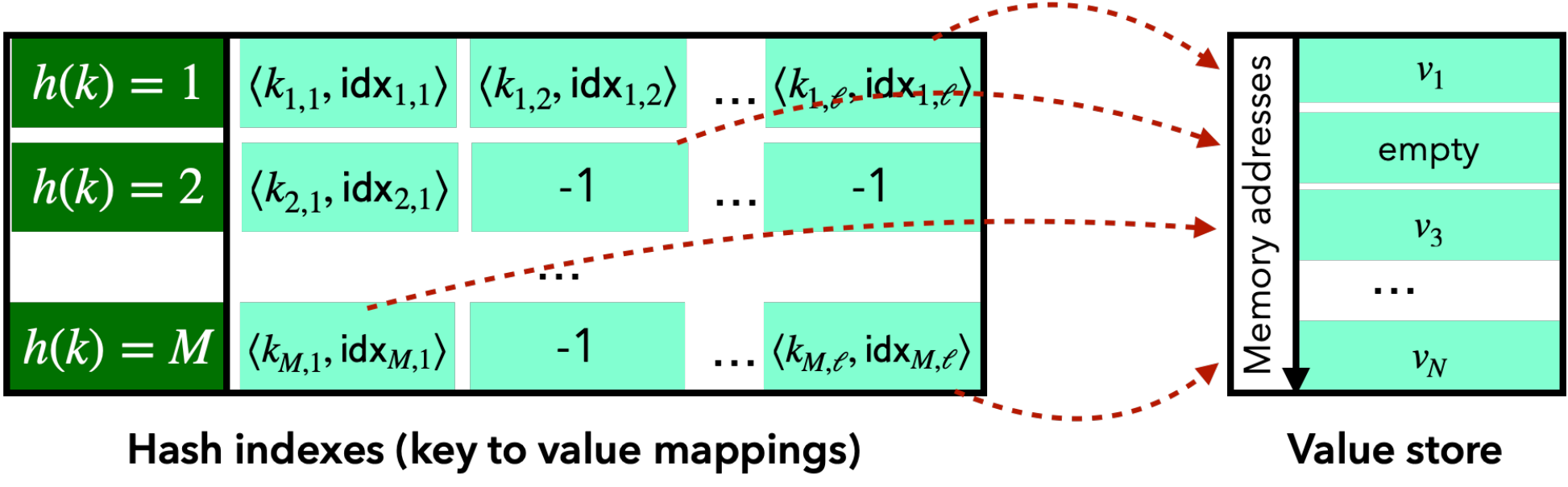}
    \caption{Hash indexed KVS.}
    \label{fig:storage}
    \vspace{-2mm}
\end{figure}

To process a \texttt{GET}, \texttt{PUT}, or \texttt{DEL} request, one first looks up the hash table to locate the key and retrieve the address of the value, and then accesses the value store to make a read or write. For insertions, one first searches an empty slot in the value store for the data, then updates the hash table with the key and its associated address. To handle hash collisions, \sysm employs chaining with a static array. The hash table is organized as an \(N \times \ell\) 2D array, where the hash function \(h(\cdot)\) maps each key to one of \(N\) rows (1D arrays). Each row can store up to \(\ell\) key-value pairs, enabling multiple pairs to share the same hash entry. Choosing static array chaining is strategic, as it simplifies hardware implementation and can well support parallel logic designs (e.g., parallel search). In our design, we consider the initial storage (hashtable and value store) will be prepared by the DOs and securely provisioned into \sysm using $\mathsf{K}_{\mathsf{DO}}$.

\vspace{2mm}\noindent{\bf Core execution logic.} We provide the architecture block diagram of \sysm in Figure~\ref{fig:arch:sysm}. The execution pipeline consists of five, potentially multiple cycles, stages: 

\begin{figure}[h]
    \centering
    \includegraphics[width=0.76\linewidth]{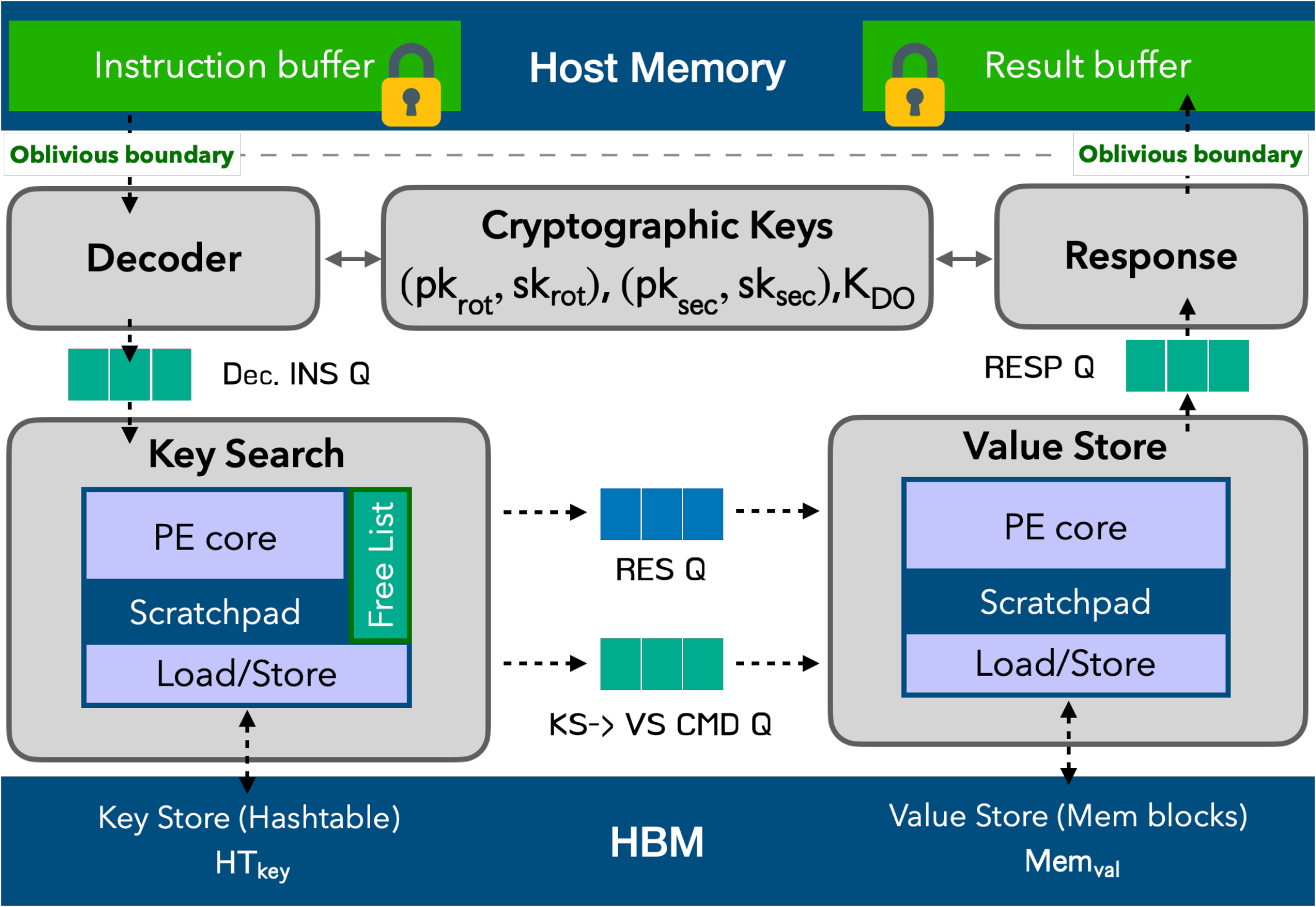}
    \caption{Architecture of \sysm.}
    \label{fig:arch:sysm}
    \vspace{-2mm}
\end{figure}

\vspace{-1mm}
\begin{itemize}
    \item \textbf{Decode (\texttt{DEC})}: The decoder fetches and decrypts encrypted instructions from the host memory. 
    \item \textbf{Load (\texttt{LD})}: After decoding, the decoder issues micro-code commands to the load module, which retrieves the 1D array \(K_{\mathsf{hbm}}[hash({key})]\) from HBM and loads it into on-chip scratchpad memory based on the decrypted key. \vspace{-1mm}
    \item \textbf{Compute (\texttt{CMP})}: The compute unit performs a linear search on the scratchpad to locate the target KV pair, returning the exact index for the matching key if handling \texttt{GET}, \texttt{PUT}, or \texttt{DEL} requests, or the position of an empty slot for \texttt{INS} requests. \vspace{-1mm}
    \item \textbf{Store (\texttt{ST})}: Using the index \(\mathsf{idx}\) from \texttt{CMP}, the store module accesses \(D_{\mathsf{hbm}}[{hash}({key})][\mathsf{idx}]\) to perform the necessary operation. For \texttt{INS} and \texttt{PUT}, the module writes the new value to \(V_{\mathsf{hbm}}[{hash}({key})][\mathsf{idx}]\) and sends a confirmation code to the next stage. In the case of \texttt{DEL}, the module overwrites \(K_{\mathsf{hbm}}[{hash}({key})][\mathsf{idx}]\) with \(-1\), a reserved value indicating an empty slot. Note that writes to \(K_{\mathsf{hbm}}\) and \(V_{\mathsf{hbm}}\) can operate with same elapse times. For \texttt{GET}, the module reads the value from \(V_{\mathsf{hbm}}[{hash}({key})][\mathsf{idx}]\) and forwards it to the next stage. However, as read operations can be faster than writes, a dummy write is executed alongside \texttt{GET} to conceal any timing disparities and maintain obliviousness. \vspace{-1mm}
    \item \textbf{Response (\texttt{RES})}: The response module prepares and securely writes back results to the pinned host memory. For \texttt{GET}, the response contains the retrieved KV tuple, while other requests return a confirmation code. All results are re-encrypted and padded to uniform length to prevent volume leakage.
\end{itemize}

Next, we show how to overlap multiple KV instructions to optimize throughput. In our decomposed storage model, read operations primarily target \(K_{\mathsf{hbm}}\), while write operations mainly affect \(V_{\mathsf{hbm}}\), except for \texttt{DEL}. Consequently, RAW consistency issues are rare, as reads generally occur only after prior writes are completed. The main exception arises when processing a \texttt{DEL} followed by an \texttt{INS}. Since \texttt{DEL} writes to \(K_{\mathsf{hbm}}\), an overlap between the \texttt{LD} stage of \texttt{INS} and the \texttt{ST} stage of \texttt{DEL} could prevent the release of an empty slot, potentially causing the \texttt{INS} to fail due to a lack of space. A traditional approach to handle this is to halt the \texttt{INS} pipeline until \texttt{DEL} completes its \texttt{ST} stage. However, this introduces timing leakage, as the latency increment of \texttt{INS} would reveal this specific sequence. To eliminate such leakage, we propose {\em a non-blocking, optimistic pipeline strategy.} Specifically, we increase the chaining capacity of each hashmap entry to ensure sufficient redundant empty slots for insertion, even if one entry remains unreleased.

\vspace{2mm}\noindent{\bf Multi-bank parallel search.} Although HBM can obscure access patterns, simply placing hashmaps into HBM does not immediately implement an oblivious KVS. {\em Variable collision rates across hash index entries can result in inconsistent completion times for different accesses, exposing data-dependent timing leaks.} For example, entries with higher collision rates may require longer linear searches, while lightly loaded entries terminate more quickly. A common mitigation strategy is exhaustive padding~\cite{mishra2018oblix, bater2018shrinkwrap}, where each hash entry $D_{\mathsf{hbm}}[i]$ is padded to a pessimistic upper bound ($k$), and searches are enforced to scan from start to finish without early termination. While this approach eliminates timing leaks, it introduces significant performance overhead, which may undermines the near-constant complexity advantage of hashmaps.

To tackle this challenge, we leverage the {\em multi-channel advantage of HBMs and a vectorized processing engine (PE) to parallelize hashmap KV routines.} The HBM is divided into \(k\) independent memory banks, with each column vector of \(K_{\text{hbm}}[N][i]\) stored in bank \(i\). The load module retrieves the 1D array \(K_{\text{hbm}}[\text{hash}(\text{key})]\) into scratchpad memory by issuing parallel loads to each bank. Leveraging HBM’s dedicated channels, this operation completes in the same number of cycles as loading a single element. The core compute unit is a vectorized processing engine (PE) that batch-processes the loaded array. It outputs the index of the requested key (for \texttt{GET}, \texttt{PUT}, or \texttt{DEL}) or the position of the first empty slot (for \texttt{INS}). The PE includes two parallel comparators (PC) and a reduction circuit. The first PC compares each entry in \(K_{\text{hbm}}[\text{hash}(\text{key})]\) and generates a binary vector \(pc_1 = \{0,1\}^k\), where \(1\) indicates a key match. The second PC produces a binary vector \(pc_2 = \{0,1\}^k\), where \(1\) denotes an empty slot. A priority encoder in the reduction circuit identifies the position of the first \(1\) in the output vector. Figure XXX illustrates the design in detail. Note that the PC typically completes within one hardware cycle, and the priority encoder-based reducer operates in constant cycles. Thus, the search logic achieves constant complexity, and with a uniform running time for all hardware operations.

}

\end{document}